\documentclass[11pt,reqno]{amsart}
\usepackage[utf8]{inputenc}
 
\usepackage{amsmath,amsthm,amssymb}
\usepackage{mathabx}

\usepackage{graphicx}
\usepackage{caption,subcaption}
\usepackage{tikz}
\captionsetup{font={small,it}}

\usepackage{geometry}
\geometry{letterpaper}  
\usepackage{color}
\usepackage{cleveref}
\usepackage[sort&compress,numbers]{natbib}
\usepackage{todonotes}
\usepackage{comment}

\usepackage{pgfplots}
\usetikzlibrary{intersections,decorations.pathreplacing,decorations.markings,calc,angles,quotes,arrows.meta,pgfplots.fillbetween,patterns}

\newtheorem{theorem}{Theorem}

\newtheorem{prop}[theorem]{Proposition}
\newtheorem{lemma}[theorem]{Lemma}
\newtheorem{cor}[theorem]{Corollary}
\newtheorem{remark}[theorem]{Remark}

\numberwithin{equation}{section}
\numberwithin{theorem}{section}


\newcommand{\coup}{{g}}
\newcommand{\iu}{\mathrm{i}\mkern1mu} 
\renewcommand{\d}{\mathrm{d}}
\newcommand{\p}{\partial}

\newcommand{\bm}{{\bf m}}
\newcommand{\bq}{{\bf q}}

\newcommand{\bv}{{\bf v}}

\newcommand{\bk}{{\bf k}}

\newcommand{\bx}{{\bf x}}
\newcommand{\by}{{\bf y}}
\newcommand{\bz}{{\bf z}}

\newcommand{\tr}{\operatorname{tr}}

\newcommand{\R}{\mathbb{R}}

\newcommand{\Z}{\mathbb{Z}}
\newcommand{\C}{\mathbb{C}}
\newcommand{\T}{\mathbb{T}}

\newcommand{\A}{\mathcal{A}}
\newcommand{\B}{\mathcal{B}}
\newcommand{\E}{\mathcal{E}}

\newcommand{\G}{\mathcal{G}}

\renewcommand{\L}{\mathcal{L}}

\newcommand{\ds}{\displaystyle}

\DeclareMathOperator\supp{supp}

\renewcommand\labelenumi{(\roman{enumi})}
\renewcommand\theenumi\labelenumi

\title[Nonlocal PDEs and Quantum Optics: band structure]{Nonlocal partial
differential equations and quantum optics: band structure of periodic atomic sytems}

\author[Hiltunen]{Erik Orvehed Hiltunen}
\address{Department of Mathematics, Yale University, New Haven, CT, USA }
\email{erik.hiltunen@yale.edu}

\author[Kraisler]{Joseph Kraisler}
\address{Department of Mathematics and Statistics, Amherst College, Amherst, MA, USA}
\email{jkraisler@amherst.edu}

\author[Schotland]{John C. Schotland}
\address{Department of Mathematics and Department of Physics, Yale University, New Haven, CT, USA}
\email{john.schotland@yale.edu}

\author[Weinstein]{Michael I. Weinstein}
\address{Department of Applied Physics and Applied Mathematics and Department of Mathematics, Columbia University, New York, NY, USA}
\email{miw2103@columbia.edu}
\date{\today}

\begin{document}

\maketitle

\begin{abstract}
We continue our study of the quantum optics of a single photon interacting with a system of two level atoms. In this work we investigate the case of a periodic arrangement of atoms. We provide a general structure theorem characterizing the band functions of this problem, which comprise the spectrum of the associated Hamiltonian. Additionally, we study atomic densities arising as periodically arranged scaled inclusions. For this family of examples, we obtain explicit asymptotic formulas for the band functions. 
\end{abstract}
\section{Introduction}
This paper is the second in a series that is concerned with a class of mathematical problems that arise in quantum optics. Ref.~\cite{one-photon_bound} is the first paper in the series, which we will refer to as part I.

We consider a system of two-level atoms coupled to a quantized field. Following~\cite{kraisler_2022}, the dynamics of a single-excitation state is governed by the equations
\begin{align}
\label{eq:b4}
\iu\partial_t\psi & = c(-\Delta)^{1/2}\psi + \coup\rho(\bx) a , \\
\iu \rho(\bx)\partial_t a & = \coup\rho(\bx)\psi + \Omega\rho(\bx) a ,
\label{eq:b5}
\end{align}
for $\bx \in \R^d$, $d\ge 1$. Here $\psi(\bx,t)$ and $a(\bx,t)$ denote the probability amplitudes for creating a photon and exciting an atom, respectively at time $t$. In addition, $c$ is the speed of light, $g$ is the atom-field coupling strength, $\Omega$  is the resonance frequency of the atoms, and $\rho$ is the number density of the atoms. The fractional Laplacian $(-\Delta)^{1/2}$  is a nonlocal operator and admits the following representation,
\begin{align}
    (-\Delta)^{1/2}f = \int_{\R^d} e^{\iu \bx\cdot\bk}\vert\bk\vert \hat f(\bk)\frac{\d \bk}{(2\pi)^d} ,
\end{align}
where the Fourier transform $\hat f(\bk)$ of a function $f(\bx)\in L^2(\R^d)\cap L^1(\R^d)$ is defined by
\begin{align}
    \hat f(\bk) = \int_{\R^d} e^{-\iu\bk\cdot\bx} f(\bx) \d \bx.
\end{align}
Eqs.~\eqref{eq:b4} and \eqref{eq:b5} have been studied in a variety of physical contexts, depending on the choice of $\rho$~\cite{kraisler_2022, hoskins_2021,hoskins_2023,kraisler2022kinetic}. These include spontaneous emission by a single atom, systems of many atoms and random media. This theme was continued in Part I, which focused on the characterization of bound states and resonances of \eqref{eq:b4} and \eqref{eq:b5}. 
 
For the remainder of this paper, we consider the time-harmonic solutions to the system \eqref{eq:b4}--\eqref{eq:b5} of the form
\begin{align}\label{eq:t-harmonic}
    \begin{pmatrix}
    a(\bx,t) \\ \psi(\bx,t) 
    \end{pmatrix} = e^{-\iu\omega t}\begin{pmatrix}
    a(\bx) \\\psi(\bx)
    \end{pmatrix}.
\end{align}
We thus obtain the spectral problem
\begin{align} 
	c(-\Delta)^{1/2}\psi + g\rho(\bx)a &= \omega \psi, \label{eq:c4} \\
	g\rho(\bx) \psi+ \Omega \rho(\bx) a &= \omega \rho(\bx) a . \label{eq:c5}
\end{align}
Since the second equation is algebraic, we can eliminate $a$ and obtain the nonlinear eigenvalue problem
\begin{equation}\label{eq:EVproblem}
	\left((-\Delta)^{1/2} - \frac{\omega}{c}\right) \psi = - \frac{g^2}{c(\omega-\Omega)}\rho(\bx)\psi.
\end{equation}
valid for $\omega\ne\Omega$. We have thus reduced the system \eqref{eq:c4}--\eqref{eq:c5} to a single equation \eqref{eq:EVproblem}, where $a$ can be recovered by the formula
\begin{align}\label{eq:a_psi}
    a(\bx) = \frac{g}{\omega-\Omega}\psi(\bx),\ \ \bx\in\supp(\rho).
\end{align}

This paper is concerned with the analysis of \eqref{eq:c4} and \eqref{eq:c5} when $\rho(\bx)$ is a periodic function with respect to a crystal lattice. The motivation is to investigate quantum optical effects in photonic band gap materials. Such materials have long been of interest due to their ability to produce localized quantum states \cite{yablonovitch1993photonic,john1991quantum}. Moreover, recent progress in the field of topological photonics provides an avenue for achieving photon localization or propagation which is robust to disorder \cite{barik2018topological,perczel2017topological}. Of central importance in this endeavor are analytical and computational methods for studying the band structure of periodic Hamiltonians.

The main result of this paper, \Cref{theorem_1}, shows that there are two families $\omega_j^\pm(\bk)$ of band functions. One family with $\omega_j^+(\bk) > \Omega$ is increasing to $+\infty$, while the other family $\omega_j^-(\bk) < \Omega$ is increasing and accumulates at $\Omega$. To illustrate our approach, we  begin by reviewing the case of constant density originally studied in \cite{kraisler_2022}, before presenting the  proof of \Cref{theorem_1} in \Cref{sec:proof}. In \Cref{sec:bounds}, we discuss general bounds on $\omega_j^\pm(\bk)$, which show the existence of a band gap above $\Omega$ for certain $\rho$. With this general theory in hand, we proceed with a detailed study of the case of high contrast atomic inclusions $\rho(\bx)$ in \Cref{sec:highcontrast} and \Cref{sec:nonlin}. The case of a single high-contrast inclusion was treated in part I, where explicit asymptotic expansions of the resonances were derived. Specifically, in part I,  a sequence of resonances were described as perturbations of the eigenvalues of a limiting operator which is self-adjoint, compact, and positive-definite. In \Cref{sec:highcontrast}, we show that the family of bands $\omega_j^-(\bk)$ can be described as perturbations of the same sequence of eigenvalues. Finally, in \Cref{sec:nonlin}, we develop a formal method for computing the family $\omega_j^+(\bk)$. This family of band functions arises as perturbations of the spectrum of the free problem.

\section{Floquet-Bloch Theory}
In this section we outline the Floquet-Bloch theory used throughout the paper. Let $\Lambda\subset \R^d$ be a lattice spanned by a collection of linearly independent vectors 
\begin{align}
    \bv_1,\cdots,\bv_d\in \R^d .
\end{align}
Equivalently, we have
\begin{align}
    \Lambda = \Z\bv_1\oplus\cdots\oplus \Z\bv_d.
\end{align}
We denote by $\T^d = \R^d/\Lambda$ the $d$ dimensional torus with respect to this lattice, while $Y$ denotes a fundamental cell for $\Lambda$ defined as
\begin{align}
    Y = \left\{ \sum_{i=1}^d \theta_i\bv_i \ | \ 0\leq \theta_i\leq 1\right\} .
\end{align}
The space of $\bk$-pseudo-periodic $L^2$ functions is given by
\begin{align}
     L^2_{\bk,\Lambda} = \{f\in L^2_{\text{loc}}\ | \ f(\bx+\bv) = e^{\iu \bk\cdot\bv}f(\bx), \bv\in\Lambda  \}.
\end{align}
Often we will drop the explicit dependence on the lattice and simply refer to this space as $L^2_{\bk}$. In the case that $\bk=0$, the pseudo-periodic functions are periodic. Correspondingly, we denote the space of periodic $L^2$ functions as $L^2_{\text{per},\Lambda}=L^2(\T^d)$. For $f,g\in L^2_{\bk,\Lambda}$ we define the inner product 
\begin{align}
    \langle f,g\rangle = \int_{Y} f^*(\bx)g(\bx) \d\bx .
\end{align}
The dual lattice $\Lambda^*$ is defined
\begin{align}
    \Lambda^* = \Z\bk_1\oplus\cdots\oplus\Z\bk_d ,
\end{align}
where the vectors $\bk_1,\cdots,\bk_d\in \R^d$ are linearly independent and chosen such that
\begin{align}
    \bk_i\cdot\bv_j = 2\pi\delta_{ij}.
\end{align}
The first Brillouin zone, denoted $\mathcal{B}$, is a choice of representatives of the dual torus $\R^d /\Lambda^*$ defined as the collection of points $\bk\in\R^d$ which are closer to the origin than any other lattice point in $\Lambda^*$.

If $H$ is a self-adjoint operator on $L^2(\R^d)$ then for each $\bk\in\R^d$ we can consider the Floquet-Bloch eigenvalue problem
\begin{align}
    H\Psi(\bx,\bk) &= \mu(\bk)\Psi(\bx,\bk) , \\
    \Psi(\bx+\bv,\bk) &= e^{\iu \bv\cdot\bk}\Psi(\bx,\bk), \quad \bv\in\Lambda .
\end{align}
As this problem is invariant under the translation $\bk \to \bk+\bk'$ for any $\bk'\in \Lambda^*$ it suffices to study the problem for $\bk\in\mathcal{B}$. Alternatively, one may make the substitution 
\begin{align}
    \Phi(\bx,\bk) = e^{-\iu \bk\cdot\bx}\Psi(\bx,\bk),
\end{align}
and obtain a problem with periodic boundary conditions
\begin{align}\label{eq:FBproblem}
    H(\bk)\Phi(\bx,\bk) &= \mu(\bk)\Phi(\bx,\bk) ,\\
    \Phi(\bx+\bv,\bk) & = \Psi(\bx,\bk) \quad \bv\in \Lambda ,
\end{align}
where the $\bk$-dependent operator $H(\bk)$ is defined as
\begin{align}
    H(\bk) = e^{-\iu \bk\cdot\bx}H e^{\iu \bk\cdot\bx} .
\end{align}
Moreover the spectrum of $H$ in $L^2(\R^d)$ is the union of the intervals swept out by the band functions $\mu_j(\bk)$ as $\bk$ varies over $\mathcal{B}$:
\begin{align}
    \sigma(H) = \bigcup_{\bk\in\mathcal{B}} \sigma\bigl(H(\bk)\bigr)  .
\end{align}

\section{General spectral theory}\label{sec:general}
In this section we develop the general spectral theory of the operator introduced in the eigenvalue problem \eqref{eq:EVproblem}. Let us define the Hamiltonian $H$
\begin{align}
    H = H_0 + g V,
\end{align}
acting on a dense subspace of $L^2(\R^d;\C^2)$, where
\begin{align}
    H_0 &=\begin{pmatrix} (-\Delta)^{1/2} & 0 \\ 0 & \Omega
    \end{pmatrix}, \\
    V &= \begin{pmatrix} 0 & \sqrt{\rho(\bx)} \\ \sqrt{\rho(\bx)} & 0
    \end{pmatrix}. \label{eq:V}
\end{align}
Here the atomic density $\rho(\bx)\in L^{\infty}(\R^d)$ is nonnegative ($\rho(\bx)\geq 0)$. Moreover, we suppose that $\rho(\bx)$ is periodic with respect to some lattice $\Lambda\in\R^d$. We note that $H_0$ and $V$ are self-adjoint and commute with translations by any vector $\bv\in \Lambda$. By the Floquet-Bloch theory in Section 2, we consider the eigenvalue problem on the space of pairs of periodic functions
\begin{align}
    H(\bk)\Phi(\bx,\bk) = \omega(\bk)\Phi(\bx,\bk), \label{eq:EVprob1}
\end{align}
for $\Phi(\bx,\bk)\in L_{\bk}^2(\R^d;\C^2)$. Explicitly, $H(\bk)=H_0(\bk)+g V$ where $V$ is given by \eqref{eq:V} and $H_0(\bk)$ is given by
\begin{align}
    H_0(\bk)=\begin{pmatrix} (-(\nabla+\iu\bk)^2)^{1/2} & 0 \\ 0 & \Omega
    \end{pmatrix}.
\end{align}

We now state the main theorem of this section, whose proof will be given in \Cref{sec:proof}. It provides a structure of the spectrum $\sigma\bigl(H(\bk)\bigr) $ of the Bloch Hamiltonians $H(\bk)$.
\begin{theorem}\label{theorem_1}
The spectrum of $H(\bk)$ acting on $L^2(\T^d;\C^2)$ is discrete in $\R\setminus\{\Omega\}$ with
    \begin{enumerate}
        \item $\sigma\bigl(H(\bk)\bigr) \cap(\Omega,\infty)$ comprised of a sequence of eigenvalues $0\leq\omega^+_1(\bk)\leq \omega^+_2(\bk) \leq \cdots$ increasing with $\lim_{j\to\infty}\omega^+_j(\bk)=\infty$.
        \item $\sigma\bigl(H(\bk)\bigr) \cap(-\infty,\Omega)$ comprised of a sequence of eigenvalues $\omega^-_1(\bk)\leq\omega^-_2(\bk)\leq\cdots$ increasing and accumulating at $\Omega$, i.e. $\lim_{j\to\infty}\omega^-_j(\bk)=\Omega$.
        \item The corresponding collection of eigenvectors form a basis for $L^2(\T^d;\C^2)$.
    \end{enumerate}
\end{theorem}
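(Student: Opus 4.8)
The plan is to remove the atomic amplitude by a Schur-complement (Feshbach) reduction and thereby turn the eigenvalue problem $H(\bk)\Phi=\omega\Phi$ with $\omega\neq\Omega$ into a fixed-point problem for a monotone family of operators with compact resolvent. Writing $\Phi=(\psi,a)\in L^2(\T^d)\oplus L^2(\T^d)$, the (algebraic) second component of the eigenvalue equation gives $a=g\sqrt{\rho}\,\psi/(\omega-\Omega)$ when $\omega\neq\Omega$, and inserting this into the first component shows that $\omega\neq\Omega$ lies in $\sigma\bigl(H(\bk)\bigr)$ if and only if $\omega$ is an eigenvalue of
\[
 M_{\bk}(\omega):=\bigl(-(\nabla+\iu\bk)^2\bigr)^{1/2}+\frac{g^2}{\omega-\Omega}\,\rho(\bx)
\]
on $L^2(\T^d)$, with $\psi\mapsto\bigl(\psi,\,g\sqrt{\rho}\,\psi/(\omega-\Omega)\bigr)$ a linear bijection between the corresponding eigenspaces, so that multiplicities agree; moreover, for $\omega$ not of this form the Schur complement $M_{\bk}(\omega)-\omega$, and hence $H(\bk)-\omega$ (whose $(2,2)$-block $\Omega-\omega$ is then invertible), is boundedly invertible. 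For each fixed $\omega\neq\Omega$ the operator $M_{\bk}(\omega)$ is self-adjoint, bounded below, and a bounded perturbation of $(-(\nabla+\iu\bk)^2)^{1/2}$, which has compact resolvent on $L^2(\T^d)$ with eigenvalues $\{|\bk+\bg|:\bg\in\Lambda^*\}$; hence $M_{\bk}(\omega)$ has compact resolvent, with eigenvalues $\nu_1(\omega)\le\nu_2(\omega)\le\cdots\to\infty$ counted with multiplicity.

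Next I would use monotonicity. Since $g^2/(\omega-\Omega)$ is strictly decreasing in $\omega$ on each of $(-\infty,\Omega)$ and $(\Omega,\infty)$ and $\rho\ge0$, the family $\omega\mapsto M_{\bk}(\omega)$ is non-increasing in the sense of quadratic forms on each interval, so by min-max each $\nu_j$ is continuous and non-increasing there; consequently $h_j(\omega):=\nu_j(\omega)-\omega$ is \emph{strictly} decreasing on each interval. Since $\omega\neq\Omega$ lies in $\sigma\bigl(H(\bk)\bigr)$ exactly when $h_j(\omega)=0$ for some $j$, each index contributes at most one eigenvalue in each interval, and because $h_{j+1}\ge h_j$ pointwise the zeros are non-decreasing in $j$. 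It remains to compute the endpoint limits.

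On $(\Omega,\infty)$ the coefficient $g^2/(\omega-\Omega)$ is positive, so $M_{\bk}(\omega)\ge(-(\nabla+\iu\bk)^2)^{1/2}\ge0$ and every eigenvalue of $H(\bk)$ there is nonnegative; as $\omega\to\infty$, $M_{\bk}(\omega)\to(-(\nabla+\iu\bk)^2)^{1/2}$ in norm, so $\nu_j(\omega)\to\mu_j(\bk)$ (the $j$-th eigenvalue of $(-(\nabla+\iu\bk)^2)^{1/2}$ on $L^2(\T^d)$) and $h_j\to-\infty$, while as $\omega\to\Omega^+$ the coefficient tends to $+\infty$, so $\nu_j(\omega)$ increases to a limit $\nu_j^+\in(0,\infty]$ with $\nu_j^+\ge\mu_j(\bk)\to\infty$. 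Hence $h_j$ has a (unique) zero in $(\Omega,\infty)$ precisely when $\nu_j^+>\Omega$, i.e.\ for all $j$ past some index; discarding the finitely many remaining indices and reindexing yields $0\le\omega^+_1(\bk)\le\omega^+_2(\bk)\le\cdots$, which tends to $+\infty$ since each such zero $\omega$ satisfies $\omega=\nu_j(\omega)\ge\mu_j(\bk)$ for its (unboundedly growing) index $j$. On $(-\infty,\Omega)$ the coefficient is negative with $|g^2/(\omega-\Omega)|\to\infty$ as $\omega\to\Omega^-$, and here I claim $\nu_j(\omega)\to-\infty$ for every $j$: bounding $\nu_j$ from above by min-max over the span of $e^{\iu\bg_1\cdot\bx},\dots,e^{\iu\bg_j\cdot\bx}$ with distinct $\bg_i\in\Lambda^*$, any nonzero element $\phi$ of this $j$-dimensional space of trigonometric polynomials is real-analytic, hence cannot vanish a.e.\ on the positive-measure set $\{\rho>0\}$ (which is nonempty since $\rho\ge0$ is not identically zero), so $\int_Y\rho\,|\phi|^2\ge\delta\|\phi\|^2$ on that space for some $\delta>0$ while $\|(-(\nabla+\iu\bk)^2)^{1/4}\phi\|^2$ stays bounded there, forcing $\nu_j(\omega)\le C-\delta\,|g^2/(\omega-\Omega)|\to-\infty$. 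Together with $h_j\to+\infty$ as $\omega\to-\infty$, this gives exactly one zero $\omega^-_j(\bk)\in(-\infty,\Omega)$ for \emph{every} $j\ge1$; these are non-decreasing and bounded above by $\Omega$, and if $\lim_j\omega^-_j(\bk)=\omega_\infty<\Omega$ then $\omega^-_j(\bk)=\nu_j\bigl(\omega^-_j(\bk)\bigr)\ge\nu_j(\omega_\infty)\to\infty$, a contradiction, so $\omega^-_j(\bk)\uparrow\Omega$. This establishes (i) and (ii).

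It remains to prove discreteness in $\R\setminus\{\Omega\}$ and claim (iii). On any compact subinterval of $\R\setminus\{\Omega\}$ only finitely many indices $j$ can contribute (since $\nu_j\to\infty$ uniformly on compacts as $j\to\infty$) and off the finitely many zeros of the relevant $h_j$ the operator $H(\bk)-\omega$ is boundedly invertible, so $\sigma\bigl(H(\bk)\bigr)$ is discrete away from $\Omega$. Therefore $\sigma_{\mathrm{ess}}\bigl(H(\bk)\bigr)\subseteq\{\Omega\}$, whence the absolutely continuous and singular continuous parts of the spectral measure of the self-adjoint operator $H(\bk)$ are supported in the single point $\{\Omega\}$; since a nonatomic measure carried by one point is zero, those parts vanish, $L^2(\T^d;\C^2)$ coincides with the pure-point subspace, and the eigenvectors of $H(\bk)$ (including any carried by $\Omega$ itself) form an orthonormal basis. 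I expect the one genuinely delicate step to be the limit $\nu_j(\omega)\to-\infty$ as $\omega\to\Omega^-$: it is the only place where the nonnegativity and nontrivial support of $\rho$ are used, and it is precisely what makes the lower family infinite and pins its accumulation point at $\Omega$; the analyticity argument above (or an equivalent Lebesgue-density estimate) is the crux, with the remainder being routine min-max bookkeeping and the Feshbach identity.
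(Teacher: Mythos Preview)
Your proof is correct and follows essentially the same strategy as the paper's: a Feshbach/Schur-complement reduction to the scalar family $M_{\bk}(\omega)$ (the paper's $H(\bk,\omega)$), min--max analysis of its eigenvalue branches, and a fixed-point argument on the diagonal $\nu_j(\omega)=\omega$, with pure-point-spectrum considerations for part~(iii). Your execution is in places tighter than the paper's---notably the explicit monotonicity of $h_j$ for uniqueness and for the accumulation argument, and the analyticity argument showing $\nu_j(\omega)\to-\infty$ as $\omega\to\Omega^-$ (the paper only exhibits individual trial functions $\phi_n$ with $L[\phi_n]\to-\infty$ without checking control on their span)---but the architecture is the same.
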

To prove this theorem, we introduce a $2$-parameter family of operators $H(\bk,\omega)$ on $L^2(\T^d)$ by
\begin{align}
    H(\bk,\omega)\phi = H_0(\bk)\phi &+ \frac{g^2\rho(\bx)}{\omega-\Omega}\phi\, , \label{eq:2parameterHamiltonian}
\end{align}
where $H_0(\bk) =  (-(\nabla+\iu\bk)^2)^{1/2}$, and denote the eigenvalues of $H(\bk,\omega)$ by $\lambda_m(\bk,\omega)$. The purpose is to reduce the nonlinear eigenvalue problem considered in \Cref{eq:EVprob1} to a family of linear eigenvalue problems, each depending on a parameter $\omega$. Once this family of eigenvalue problems are well understood, we can use them to solve the original problem.

The following lemma describes the precise relationship between the eigenvalues of $H(\bk)$ and those of $H(\bk,\omega)$.
\begin{lemma}\label{lemma:eigenvalue}
$\omega\in\R\setminus \{\Omega\}$ is an $L^2(\T^d;\C^2)$ eigenvalue of $H(\bk)$ if and only if $\omega$ is an $L^2(\T^d)$ eigenvalue of $H(\bk,\omega)$.
\end{lemma}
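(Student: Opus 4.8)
The plan is to repeat, at the level of the Bloch operators, the elimination of the algebraic amplitude $a$ that carried the system \eqref{eq:c4}--\eqref{eq:c5} to the nonlinear eigenvalue problem \eqref{eq:EVproblem}; that is, a Schur-complement reduction of the $2\times2$ block operator $H(\bk)=H_0(\bk)+gV$ at a fixed value of the spectral parameter $\omega\ne\Omega$, while keeping careful track of which components are genuinely nonzero and of the operator domains.

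For the forward implication I would start from a nonzero $\Phi=(\psi,a)\in L^2(\T^d;\C^2)$ with $H(\bk)\Phi=\omega\Phi$ and write the two blocks separately: $H_0(\bk)\psi+g\sqrt{\rho}\,a=\omega\psi$ and $g\sqrt{\rho}\,\psi+\Omega a=\omega a$. Since $\omega\ne\Omega$, the second equation gives $a=\tfrac{g}{\omega-\Omega}\sqrt{\rho}\,\psi\in L^2(\T^d)$ (using $\rho\in L^\infty$), and substituting into the first equation produces exactly $H_0(\bk)\psi+\tfrac{g^2\rho}{\omega-\Omega}\psi=\omega\psi$, i.e.\ $H(\bk,\omega)\psi=\omega\psi$. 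It then remains to note that $\psi\not\equiv 0$: if $\psi\equiv0$ the second block alone reads $(\omega-\Omega)a=0$, forcing $a\equiv0$ and contradicting $\Phi\ne0$. Hence $\psi$ is a genuine $L^2(\T^d)$-eigenfunction of $H(\bk,\omega)$ at $\omega$.

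For the converse, given $\omega\ne\Omega$ and a nonzero $\psi\in L^2(\T^d)$ with $H(\bk,\omega)\psi=\omega\psi$, I would set $a:=\tfrac{g}{\omega-\Omega}\sqrt{\rho}\,\psi\in L^2(\T^d)$ and $\Phi:=(\psi,a)$, which is nonzero because $\psi\ne0$. The second block of $H(\bk)\Phi=\omega\Phi$ then holds by the algebraic identity $g\sqrt{\rho}\,\psi+\Omega a=\tfrac{g}{\omega-\Omega}\bigl((\omega-\Omega)+\Omega\bigr)\sqrt{\rho}\,\psi=\omega a$ (compare \eqref{eq:a_psi}), and the first block reduces to $H_0(\bk)\psi+\tfrac{g^2\rho}{\omega-\Omega}\psi=H(\bk,\omega)\psi=\omega\psi$ by hypothesis; thus $\omega$ is an $L^2(\T^d;\C^2)$-eigenvalue of $H(\bk)$.

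I do not expect a real obstacle here — the argument is a few lines of linear algebra once the substitution is in place — so the only points I would be careful to record are the bookkeeping ones: that $\sqrt{\rho}$ is a bounded multiplier and the lower-right block of $H_0(\bk)$ is bounded multiplication by $\Omega$, so that $\Phi$ lies in the domain of $H(\bk)$ if and only if $\psi$ lies in the domain of $H(\bk,\omega)$ (namely, $\psi$ in the domain of $(-(\nabla+\iu\bk)^2)^{1/2}$); and that the map $\Phi\mapsto\psi$ just constructed is in fact a linear isomorphism from the $\omega$-eigenspace of $H(\bk)$ onto the $\omega$-eigenspace of $H(\bk,\omega)$, so that the two geometric multiplicities coincide — a refinement not needed for the lemma itself but convenient in the proof of \Cref{theorem_1}.
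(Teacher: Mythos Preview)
Your proposal is correct and follows essentially the same approach as the paper: both directions proceed by the Schur-complement elimination of the second component via $a=\tfrac{g}{\omega-\Omega}\sqrt{\rho}\,\psi$, using $\rho\in L^\infty$ to keep everything in $L^2$. Your treatment is in fact slightly more careful than the paper's, since you explicitly verify $\psi\not\equiv0$ in the forward direction and note the domain and multiplicity bookkeeping, neither of which the paper spells out.
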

\begin{proof}
Suppose that we have a solution $\Phi=(\phi_1,\phi_2)\in L^2(\T^d;\C^2)$ to the equation
\begin{align}
    H(\bk)\Phi=\omega\Phi.
\end{align}
By solving for $\phi_2$ in terms of $\phi_1$, specifically
\begin{align}
    \phi_2 &= \frac{g\sqrt{\rho(\bx)}}{\omega-\Omega}\phi_1,
\end{align}
and substituting into the first equation, we have
\begin{align}
      (-(\nabla+\iu\bk)^2)^{1/2}\phi_1 &+ \frac{g^2\rho(\bx)^2}{\omega-\Omega}\phi_1 = \omega\phi_1.
\end{align}
This is the statement that
\begin{align}
    H(\bk,\omega)\phi_1=\omega\phi_1.
\end{align}
Conversely, if we have a solution to
\begin{align}
    H(\bk,\omega)\phi = \omega\phi,
\end{align}
with $\phi\in L^2(\T^d)$ and $\omega \neq\Omega$, then we can define
\begin{align}
    \psi = \frac{g\sqrt{\rho(\bx)}}{\omega-\Omega}\phi.
\end{align}
Then $\psi\in L^2(\T^d)$ as $\rho(\bx)$ is almost surely bounded. Moreover, the pair $(\phi,\psi)$ is readily seen to be a solution to
\begin{align}
    H(\bk)(\phi,\psi) = \omega(\phi,\psi),
\end{align}
which proves the claim.
\end{proof} 
\subsection{Case study: constant density} \label{sec:const}
Before we prove the theorem, we will study a simple, yet instructive example: the constant density case $\rho(\bx)\equiv \rho_0$. For clarity we will work in the one dimensional case of periodicity $1$. Then the eigenvalue problem for $H(k,\omega)$, defined in \Cref{eq:2parameterHamiltonian}, becomes
\begin{align}
    \left(-\left(\frac{\d}{\d x}+\iu k\right)^2\right)^{1/2}\phi + \frac{g^2\rho_0}{\omega-\Omega}\phi = \lambda\phi.
\end{align}
By introducing the Fourier transform
\begin{align}
    \hat{\phi}(m) = \int_0^1 e^{2\pi \iu m x}\phi(x)\d x,
\end{align}
we see that the Fourier coefficients must satisfy
\begin{align}
    \vert 2\pi m + k\vert\hat\phi + \frac{g^2\rho_0}{\omega-\Omega}\hat\phi = \lambda\hat\phi.
\end{align}
Thus we have that the eigenvalues $\lambda_m(k,\omega)$ as a function of $k$ and $\omega$ are given by
\begin{align}
    \lambda_m(k,\omega)= \vert 2\pi m + k\vert + \frac{g^2\rho_0}{\omega-\Omega}.
\end{align}
We note that the curves $\lambda_m(k,\omega)$ 
\begin{enumerate}
    \item are increasing in $m$;
    \item are continuous for $\omega\neq \Omega$;
    \item have a vertical asymptote at $\omega=\Omega$;
    \item asymptotically approach the eigenvalues of the free problem $\rho_0=0$ as $\omega\to \pm\infty$.
\end{enumerate}
We will see in the proof of \Cref{theorem_1} that these features are generic. Additionally, the functions $\lambda_m(k,\omega)$ are concave up for $\omega>\Omega$ and concave down for $\omega<\Omega$. We recover the eigenvalues of $H(k)$ by considering the intersection of $\lambda_m(k,\omega)$ with the diagonal, namely 
\begin{align}
    \omega =  \vert 2\pi m + k\vert + \frac{g^2\rho_0}{\omega-\Omega}.
\end{align}
This is quadratic in $\omega$ and the solutions are given by
\begin{align}\label{eq:band_const}
    \omega_{m,\pm}(k) = \frac{1}{2}\left[\Omega+ \vert 2\pi m + k\vert\pm\sqrt{\bigl(\Omega- \vert 2\pi m + k\vert\bigr)^2+4g^2\rho_0}\right],
\end{align}
from which we see
\begin{align}
    \frac{\Omega + \sqrt{\Omega^2 + 4g^2\rho_0 }}{2} & \leq \omega^+_{m}(k), \\
   \frac{\Omega - \sqrt{\Omega^2 + 4g^2\rho_0 }}{2} &\leq  \omega^-_{m}(k) < \Omega. \label{eq:bound_const}
\end{align}
\begin{remark}
The labeling of the solutions $\omega_{m,\pm}$ does not directly correspond to the labeling of the eigenvalues $\omega_{m}^{\pm}$ in \Cref{theorem_1}. The precise relationship is given below:
\begin{align}
    \omega_{m}^{+}(k) =\begin{cases} \omega_{m-1,+}(k) & 0 \leq k < \pi \\ \omega_{-m,+}(k) & \pi \leq k <2\pi
    \end{cases}, \quad \text{or} \quad
       \omega_{m}^{+}(k) =\begin{cases} \omega_{-m,+}(k) & 0 \leq k < \pi \\ \omega_{m-1,+}(k) & \pi \leq k <2\pi
    \end{cases}, 
\end{align}
for $m$ odd or $m$ even, respectively. Similarly, we may also identify
\begin{align}
    \omega_{m}^{-}(k)=\begin{cases} \omega_{m-1,-}(k) & 0 \leq k < \pi \\ \omega_{-m,-}(k) & \pi \leq k <2\pi
    \end{cases},  \quad \text{or} \quad
   \omega_{m}^{-}(k) =\begin{cases} \omega_{-m,-}(k) & 0 \leq k < \pi \\ \omega_{m-1,-}(k) & \pi \leq k <2\pi
    \end{cases},
\end{align}
for $m$ odd or $m$ even, respectively.
\end{remark}
As the constant coefficient case may also be solved using the Fourier transform, the band functions $\omega_{m}^{\pm}(k)$ obtained this way must agree with the dispersion relation obtained from the constant density problem
\begin{align*}
	c(-\Delta)^{1/2}\psi + g\rho_0 a &= \omega \psi, \\
	g\psi+ \Omega  a &= \omega a.
\end{align*}
As this is a $2\times 2$ constant density system, we obtain two curves
\begin{align}\label{eq:disp}
  \omega_{\pm}(k) = \frac{1}{2}\left[c\vert k\vert +\Omega \pm\sqrt{(c\vert k\vert-\Omega)^2+4g^2\rho_0^2}\right],  
\end{align}
which we have plotted for several values of the coupling constant $g$ in \Cref{fig:const}. One obtains the bands $\omega_{m}^{\pm}(k)$ by defining $k$ to be on the torus $\T$ and ''folding" \eqref{eq:disp} into the Brillouin zone, which agrees with \eqref{eq:band_const}. This folding process is illustrated in \Cref{fig:fold} which shows the bands $\omega_{m}^{\pm}(k)$ as in \eqref{eq:band_const}, which agree with the "unfolded" bands shown in \Cref{fig:const}.
\begin{figure}
	\begin{subfigure}[t]{0.33\linewidth}
\begin{tikzpicture}[scale = 0.6]
	\begin{axis}[
		axis lines=middle,
		clip=false,
		xticklabels=\empty,
		yticklabels=\empty,
		xlabel style = {at={(1.01,0.19)}},
		ylabel style = {at={(0.5,0.97)},right},
		ylabel=$\omega(\bk)$,
		xlabel=$\bk$,		
		xmin=-5,
		xmax=5,
		ymin=-2.5,
		ymax=6.5,
		]
		\addplot+[red,solid,mark=none,samples=200,unbounded coords=jump] {(1+abs(x))/2 + sqrt((1-abs(x))^2/4 + 0.01)};
		\addplot+[red,solid,mark=none,samples=200,unbounded coords=jump] {(1+abs(x))/2 - sqrt((1-abs(x))^2/4 + 0.01)};
	\end{axis}
\end{tikzpicture}
\caption{$g=0.1$}
\end{subfigure}\hfill
\begin{subfigure}[t]{0.33\linewidth}
\begin{tikzpicture}[scale=0.6]
		\begin{axis}[
		axis lines=middle,
		clip=false,
		xticklabels=\empty,
		yticklabels=\empty,
		xlabel style = {at={(1.01,0.19)}},
		ylabel style = {at={(0.5,0.97)},right},
		ylabel=$\omega(\bk)$,
		xlabel=$\bk$,		
		xmin=-5,
		xmax=5,
		ymin=-2.5,
		ymax=6.5,
		]
		\addplot+[blue,solid,mark=none,samples=200,unbounded coords=jump] {(1+abs(x))/2 + sqrt((1-abs(x))^2/4 + 1)};
		\addplot+[blue,solid,mark=none] {(1+abs(x))/2 - sqrt((1-abs(x))^2/4 + 1)};
	\end{axis}
\end{tikzpicture}
\caption{$g=1$}\label{fig:B}
\end{subfigure}\hfill
\begin{subfigure}[t]{0.33\linewidth}
\begin{tikzpicture}[scale=0.6]
	\begin{axis}[
		axis lines=middle,
		clip=false,
		xticklabels=\empty,
		yticklabels=\empty,
		xlabel style = {at={(1.01,0.19)}},
		ylabel style = {at={(0.5,0.97)},right},
		ylabel=$\omega(\bk)$,
		xlabel=$\bk$,		
		xmin=-5,
		xmax=5,
		ymin=-2.5,
		ymax=6.5,
		]
		\addplot+[green,solid,mark=none,samples=200,unbounded coords=jump] {(1+abs(x))/2 + sqrt((1-abs(x))^2/4 + 10)};
		\addplot+[green,solid,mark=none,samples=200,unbounded coords=jump] {(1+abs(x))/2 - sqrt((1-abs(x))^2/4 + 10)};
	\end{axis}
\end{tikzpicture}
\caption{$g=10$}
\end{subfigure}
\caption{Dispersion relations of constant-density materials with $\Omega = 1$, for three different values of $g$. In all cases, the lower band has a vertical asymptote at $\Omega$. When the coupling $g$ is small, the two bands tend to the light cone and a flat band at $\Omega$.}\label{fig:const}
\end{figure}
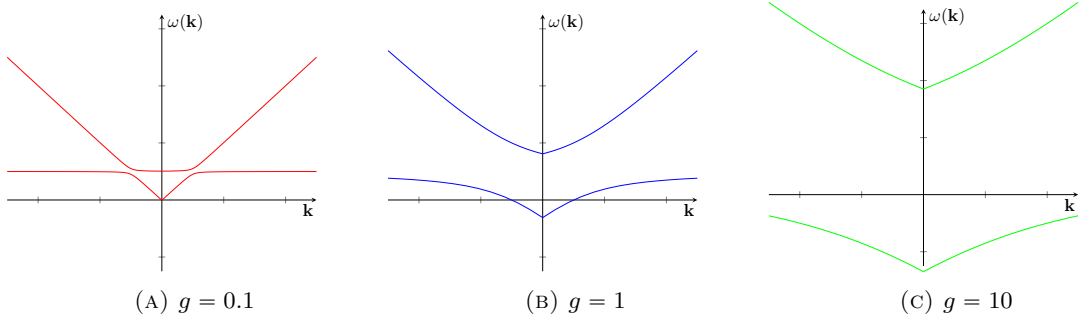


\begin{figure}
	\begin{subfigure}[t]{0.5\linewidth}
	\includegraphics[width=0.8\linewidth]{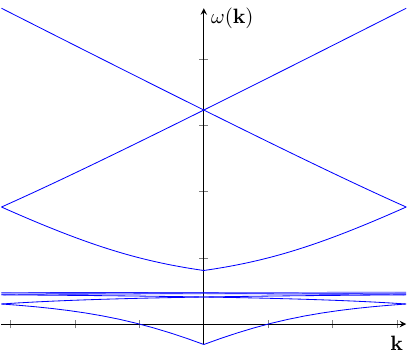}
	\caption{$\omega_j^-$ and $\omega_j^+$}
	\end{subfigure}\hfill
\begin{subfigure}[t]{0.5\linewidth}
	\includegraphics[width=0.8\linewidth]{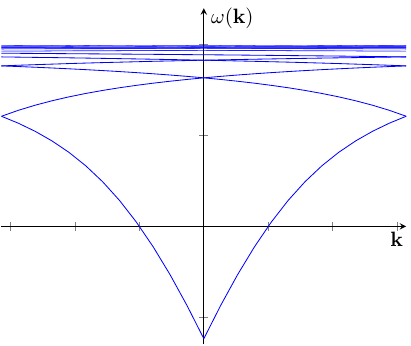}
	\caption{Close-up showing $\omega_j^-$}
\end{subfigure}
	\caption{Band functions of a constant-density material with $\Omega = 1$ and $g=1$. We emphasize that these band functions emerge solely from folding of the dispersion relationship seen in \Cref{fig:B}. The vertical asymptote seen in \Cref{fig:const} is now folded into a family of bands $\omega^-(\bk)$ which accumulate at $\Omega$. Above $\Omega$ there is a band gap, followed by a second family of bands $\omega^+(\bk)$.} \label{fig:fold}
\end{figure}

\subsection{Proof of \Cref{theorem_1}} \label{sec:proof} In this section we formalize the ideas of \Cref{sec:const} which, after proving two preliminary results, will allow us to prove \Cref{theorem_1}. 

From \Cref{lemma:eigenvalue} we reduce the study of the eigenvalues of the one-parameter family of Hamiltonians $H(\bk)$ acting on $L^2(\T^d;\C^2)$ to the study of the two-parameter family of Hamiltonians $H(\bk,\omega)$ acting on $L^2(\T^d)$.

\begin{prop}\label{prop:properties}
Let $\bk\in\mathcal{B}$ be fixed. Then for each $\omega\neq\Omega$ the spectrum of $H(\bk,\omega)$ is discrete, bounded below, and increases to infinity:
\begin{align}
    \lambda_1(\bk,\omega)\leq \lambda_2(\bk,\omega)&\leq \cdots , \\
    \lim_{j\to\infty}\lambda_j(\bk,\omega) &= \infty.
\end{align}
\item For each $j$, the curve $\omega\mapsto\lambda_j(\bk,\omega)$ with $\bk$ fixed has the following properties:
\begin{enumerate}
    \item $\lambda_j(\bk,\omega)$ is continuous in $\omega\in \R\setminus\{\Omega\}$;
    \item For $\omega>\Omega$, $\lambda_j(\bk,\omega) \geq \lambda_j^0(\bk)$, while for $\omega < \Omega$  $\lambda_j(\bk,\omega) \leq \lambda_j^0(\bk)$, where $\lambda_j^0(\bk)$ is the $j$th eigenvalue of $H_0(\bk)$;
    \item $\lim_{\omega\to\pm\infty}\lambda_j(\bk,\omega)= \lambda^0_j(\bk)$;
    \item $\lim_{\omega\to\Omega^-}\lambda_j(\bk,\omega)=-\infty$.
\end{enumerate}
If $\rho(\bx)$ is bounded below by a positive constant, we have
\begin{enumerate}
    \setcounter{enumi}{4}
    \item $\lim_{\omega\to\Omega^+}\lambda_j(\bk,\omega)=\infty$.
\end{enumerate}
\end{prop}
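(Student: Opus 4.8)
The plan is to reduce all assertions to the min--max (Courant--Fischer) characterization of the eigenvalues, combined with operator-norm control of the multiplication perturbation $g^2\rho(\bx)/(\omega-\Omega)$. Fix $\bk\in\mathcal{B}$. First I would record that on $L^2(\T^d)$ the operator $H_0(\bk)=(-(\nabla+\iu\bk)^2)^{1/2}$ is diagonalized by the characters $e^{\iu\bg\cdot\bx}$, $\bg\in\Lambda^*$, with $H_0(\bk)e^{\iu\bg\cdot\bx}=|\bk+\bg|\,e^{\iu\bg\cdot\bx}$; since $\{\bg\in\Lambda^*:|\bk+\bg|\le R\}$ is finite for each $R$, $H_0(\bk)$ is nonnegative with compact resolvent. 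For $\omega\ne\Omega$ the operator $g^2\rho/(\omega-\Omega)$ is bounded self-adjoint of norm $g^2\|\rho\|_\infty/|\omega-\Omega|$, so $H(\bk,\omega)$ is self-adjoint with compact resolvent; hence its spectrum is discrete, bounded below by $-g^2\|\rho\|_\infty/|\omega-\Omega|$, and accumulates only at $+\infty$, which is the first assertion. I would then work throughout with $\lambda_j(\bk,\omega)=\min_{\dim S=j}\max_{0\ne\phi\in S}\langle\phi,H(\bk,\omega)\phi\rangle/\|\phi\|^2$ and the identity $\langle\phi,H(\bk,\omega)\phi\rangle=\langle\phi,H_0(\bk)\phi\rangle+\tfrac{g^2}{\omega-\Omega}\int_Y\rho|\phi|^2$.

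For (i)--(iii) the key tools are the standard monotonicity $A\le B\Rightarrow\lambda_j(A)\le\lambda_j(B)$ and the perturbation bound $|\lambda_j(A)-\lambda_j(B)|\le\|A-B\|$. Since $\rho\ge0$, the perturbation is $\ge0$ for $\omega>\Omega$ and $\le0$ for $\omega<\Omega$, so comparing $H(\bk,\omega)$ with $H_0(\bk)$ gives (ii). For (i), $\|H(\bk,\omega)-H(\bk,\omega')\|=g^2\|\rho\|_\infty\,|(\omega-\Omega)^{-1}-(\omega'-\Omega)^{-1}|$, and $\omega\mapsto(\omega-\Omega)^{-1}$ is continuous off $\Omega$, so $\lambda_j(\bk,\cdot)$ is locally Lipschitz, hence continuous, on $\R\setminus\{\Omega\}$. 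For (iii), $\|H(\bk,\omega)-H_0(\bk)\|=g^2\|\rho\|_\infty/|\omega-\Omega|\to0$ as $\omega\to\pm\infty$, giving $\lambda_j(\bk,\omega)\to\lambda_j^0(\bk)$.

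The substantive part is (iv). Put $t(\omega)=g^2/(\omega-\Omega)$; on $(-\infty,\Omega)$ it is strictly decreasing with $t(\omega)\to-\infty$ as $\omega\uparrow\Omega$, so by monotonicity in $t$ (using $\rho\ge0$) the function $\omega\mapsto\lambda_j(\bk,\omega)$ is nonincreasing on $(-\infty,\Omega)$ and its limit there exists in $[-\infty,\lambda_j^0(\bk)]$. To force the limit to be $-\infty$, I would, for each fixed $j$, test min--max against the $j$-dimensional space $S_j=\mathrm{span}\{e^{\iu\bg_1\cdot\bx},\dots,e^{\iu\bg_j\cdot\bx}\}$ spanned by $j$ distinct characters: these are mutually orthogonal and lie in $\mathcal{D}(H_0(\bk))$, so $\langle\phi,H_0(\bk)\phi\rangle\le C_j\|\phi\|^2$ on $S_j$ with $C_j=\max_m|\bk+\bg_m|$. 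The crucial claim is that $\delta_j:=\inf\{\int_Y\rho|\phi|^2:\phi\in S_j,\ \|\phi\|=1\}>0$: a nonzero $\phi\in S_j$ is a nontrivial trigonometric polynomial, hence real-analytic with zero set of Lebesgue measure zero, so $\int_Y\rho|\phi|^2=\int_{\{\rho>0\}}\rho|\phi|^2>0$ because $\{\rho>0\}$ has positive measure (here the standing hypothesis $\rho\not\equiv0$ is used), and the infimum over the compact unit sphere of $S_j$ is then attained and positive. Since $t(\omega)<0$ for $\omega<\Omega$, min--max yields $\lambda_j(\bk,\omega)\le C_j+t(\omega)\,\delta_j\to-\infty$ as $\omega\uparrow\Omega$, which is (iv).

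For (v), assuming $\rho\ge\rho_{\min}>0$: for $\omega>\Omega$ one has $t(\omega)>0$, hence $H(\bk,\omega)=H_0(\bk)+t(\omega)M_\rho\ge t(\omega)\rho_{\min}I$ (using $H_0(\bk)\ge0$), so $\lambda_j(\bk,\omega)\ge t(\omega)\rho_{\min}\to+\infty$ as $\omega\downarrow\Omega$. I expect the only genuinely non-routine step to be the uniform positivity $\delta_j>0$ in (iv); the rest is min--max bookkeeping together with the operator-norm estimate on the perturbation, and the key input for $\delta_j>0$ is that a nonzero trigonometric polynomial vanishes only on a Lebesgue-null set.
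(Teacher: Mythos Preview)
Your proof is correct and follows the same min--max/variational strategy as the paper, with (i)--(iii) and (v) argued in essentially identical fashion. For (iv) the paper tests on individual characters $\phi_n=e^{\iu n\bk'\cdot\bx}/\mathrm{vol}(\T^d)^{1/2}$ and uses the simpler observation $\int_Y\rho|\phi_n|^2=\tilde\rho>0$ (the average of $\rho$, positive since $|\phi_n|^2$ is constant), whereas you additionally secure the uniform lower bound $\delta_j>0$ over the whole $j$-dimensional span via the null zero-set of trigonometric polynomials---an extra step that makes the passage from test vectors to the min--max bound fully explicit.
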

\begin{proof}
Fix $\bk\in \mathcal{B}$. For $\omega \neq \Omega$ we define the functional
\begin{align}
    L(\bk,\omega)[\phi] &= \int_{\T^d}\vert (-(\nabla + \iu\bk)^2)^{1/4} \phi\vert^2 + \frac{g^2\rho(\bx)}{\omega-\Omega}\vert \phi\vert^2 \d\bx, 
\end{align}
with domain $D(L(\bk,\omega)) = H^{1/2}(\T^d)$. 
By standard max-min arguments applied to functions $\phi\in H^{1/2}(\T^d)$ with fixed $L^2(\T^d)$ norm
\begin{align}
    \int_{\T^d}\vert\phi(\bx)\vert^2 \d\bx = 1,
\end{align}
we find that there is a sequence $\{\lambda_{j}(\bk,\omega)\}_{j\geq 1}$ of eigenvalues increasing to $+\infty$. This proves the first part of the theorem. Moreover, each curve $\lambda_j(\omega)$ is Lipschitz continuous in $\omega$ which proves (i). Now define the functional $L_{0}(\bk)$, also with domain $H^{1/2}(\T^d)$, by
\begin{align}
    L_{0}(\bk)[\phi] = \int_{\T^d}\left| (-(\nabla + \iu\bk)^2)^{1/4} \phi\right|^2 \d\bx .
\end{align}
Then the $H^{1/2}(\bk)$ eigenvalues $\lambda_j^0(\bk)$ of $H_0(\bk)$ are obtained through the min-max principle applied to $L_{0}(\bk)$. We have the following inequalities
\begin{align}
    \begin{cases} L(\bk,\omega)[\phi] > L_0(\bk)[\phi], & \text{for} \quad \omega > \Omega,\\
    L(\bk,\omega)[\phi] < L_0(\bk)[\phi], & \text{for} \quad \omega < \Omega.
    \end{cases}
\end{align}
Thus, (ii) follows from a comparison principle arising from the min-max principle. For (iii) we note
\begin{align}
    \left\vert  L(\bk,\omega)[\phi] - L_0(\bk)[\phi] \right\vert = \frac{g^2}{\vert \omega-\Omega\vert}\int_{\T^d}\rho(\bx)\vert\phi(\bx)\vert^2\d\bx \to 0,
\end{align}
as $\omega\to\pm\infty$ for any $\phi\in H^{1/2}(\T^d)$ such that $\|\phi\|_{L^2} = 1$. For (iv) we exhibit an infinite orthogonal family of functions $\{\phi_n\}\subset H^{1/2}(\T^d)$ for which
\begin{align}
    \lim_{\omega\to\Omega^{-}}L(\bk,\omega)[\phi_n] = -\infty .
\end{align}
Fix $\bk'\in \Lambda^*$ and consider
\begin{align}
    \phi_n(\bx) = \frac{1}{\text{vol}(\T^d)^{1/2}}e^{\iu  n\bk'\cdot\bx} .
\end{align}
Then by direct calculation
\begin{align}
    L(\bk,\omega)[\phi_n] &= \vert n\bk'+\bk\vert + \frac{g^2}{\omega-\Omega}\frac{1}{\text{vol}(\T^d)}\int_{\T^d}\rho(\bx)\d\bx \\
    &= \vert n\bk'+\bk\vert -\frac{g^2 \tilde{\rho}}{\vert\omega-\Omega\vert} ,
\end{align}
where
\begin{align}
    \tilde{\rho} = \frac{1}{\text{vol}(\T^d)}\int_{\T^d}\rho(\bx)\d\bx > 0,
\end{align}
is the average value of $\rho$.

For (v), suppose there exists $m>0$ such that $m \leq \rho(\bx)$. Then for $\omega>\Omega$ we have
\begin{align}
    \lambda_j(\bk,\omega) &= \int_{\T^d}  \vert (-(\nabla + \iu\bk)^2)^{1/4}\phi_j\vert^2+\frac{g^2\rho(\bx)}{\omega-\Omega}\vert \phi_j\vert^2 \d\bx\\
    &\geq \int_{\T^d} \frac{g^2\rho(\bx)}{\omega-\Omega}\vert \phi_j\vert^2 \d\bx\\
    &\geq \frac{gm}{\omega-\Omega}\int_{\T^d} \vert \phi_j\vert^2 \d\bx = \frac{gm}{\omega-\Omega}.
\end{align}
\end{proof}
With \Cref{lemma:eigenvalue} and \Cref{prop:properties} at hand, we can now prove the main result, \Cref{theorem_1}.
\begin{proof}[Proof of \Cref{theorem_1}]
Consider $\lambda_j(\bk,\omega)$. By the proposition, it has a vertical asymptote as $\omega\to\Omega^{-}$ and asymptotically approaches $\lambda^0_j(\bk)$ as $\omega\to -\infty$. Thus there must be some intersection with the diagonal for $\omega<\Omega$. Call this $\omega^{-}_j(\bk)$. Additionally, if $\lambda^0_j(\bk) > \Omega$ (which is true for all but finitely many $\lambda^0_j(\bk)$), then as $\lambda_j(\bk,\omega) > \lambda^0(\bk)$ and asymptotically approaches $\lambda^0_j(\bk)$ as $\omega\to +\infty$, there must be some intersection with the diagonal for $\omega > \Omega$ as well. Call this $\omega^{+}_j(\bk)$. Then we have that $\omega_j^{+}$ (resp. $\omega_j^{-}$) are eigenvalues of $H(\bk,\omega^{+}_j)$ (resp. $H(\bk,\omega^{-}_j)$) hence the two sequences $\{\omega^{\pm}_n\}_{n\geq 1}$ are eigenvalues of $H(\bk)$ by Lemma (\ref{lemma:eigenvalue}). Additionally, as $\lambda^0_j(\bk)\to\infty$ as $j\to\infty$ we also know that $\omega^{+}_j\to\infty$ as $j\to\infty$. It only remains to show that $\omega^{-}_j\to \Omega$ as $j\to\infty$. As each $\omega^-_j(\bk)$ is bounded by $\Omega$ and the sequence of $\omega^-_j$ is increasing they must converge. Suppose that $\omega^-_j(k)\to \omega^- <\Omega$. Then there exists a $\phi$ such that
\begin{align}
     \left(-(\nabla+\iu\bk)^2\right)^{1/2}\phi +\frac{g^2\rho(\bx)}{\omega^--\Omega}\phi = \omega^-\phi.
\end{align}
This means that there is some $\lambda_i(\bk,\omega)$ such that $\lambda_i(\bk,\omega^-)=\omega^-$. However, if we take $\lambda_{i+1}(\bk,\omega)$ it must intersect the diagonal at a value larger than $\lambda_i(\bk,\omega)$, say $\omega_0$. Then $\omega_0$ is also an eigenvalue of $H(\bk)$ and $\omega_0 > \omega^-$ contradicting our assumption. \\

In order to see that the eigenvalues form a Hilbert basis for $L^2(\T^d;\C)$ it is sufficient to note that the spectrum (having only a single accumulation point at $\Omega$) is pure point and hence the spectral projection consists of only eigenvectors.
\end{proof}

\subsection{Band gap and lower bound of the bands} \label{sec:bounds}
In the case of constant density $\rho(\bx) = \rho_0$, we recall from \eqref{eq:bound_const} the following bounds on the band functions:
$$\frac{\Omega + \sqrt{\Omega^2 + 4g^2\rho_0 }}{2} \leq \omega^+_{m}(k), \qquad   \frac{\Omega - \sqrt{\Omega^2 + 4g^2\rho_0 }}{2} \leq  \omega^-_{m}(k).$$ 
The next two results show that these bounds can be generalized to arbitrary densities $\rho$. From this, we can conclude the existence of a band gap above $\Omega$, as well as a lower bound of the spectrum of $H$.
\begin{theorem}\label{thm:bandgap_general}
		Assume $\rho$ is $\Lambda$-periodic and strictly positive: $\rho(\bx) \geq m>0$. Then there is a spectral gap above $\Omega$: let $\mathcal{I} = \left(\Omega,\frac{\Omega + \sqrt{\Omega^2+4g^2m^2}}{2} \right)$, then
		\begin{equation}
			\sigma(H) \cap \mathcal{I} = \emptyset.
		\end{equation}
	\end{theorem}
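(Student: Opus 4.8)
The plan is to pass to the Bloch fibers, use \Cref{lemma:eigenvalue} to recast the eigenvalue condition, invoke the variational lower bound behind \Cref{prop:properties}(v), and finish with an elementary quadratic inequality. By Floquet--Bloch theory $\sigma(H)=\bigcup_{\bk\in\mathcal B}\sigma\!\big(H(\bk)\big)$, and by \Cref{theorem_1} the set $\sigma\!\big(H(\bk)\big)\cap(\Omega,\infty)$ consists exactly of the discrete eigenvalues $\omega_1^+(\bk)\le\omega_2^+(\bk)\le\cdots$. Since $\mathcal I\subset(\Omega,\infty)$ and $\Omega\notin\mathcal I$, it therefore suffices to show that every eigenvalue $\omega>\Omega$ of $H(\bk)$ satisfies
\[
\omega\ \ge\ \tfrac12\Big(\Omega+\sqrt{\Omega^2+4g^2m^2}\Big),
\]
uniformly in $\bk\in\mathcal B$; taking the union over $\bk$ then gives $\sigma(H)\cap\mathcal I=\emptyset$.

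Fix $\bk$ and an eigenvalue $\omega>\Omega$ of $H(\bk)$. By \Cref{lemma:eigenvalue}, $\omega$ is an $L^2(\T^d)$-eigenvalue of $H(\bk,\omega)$, so $\omega=\lambda_j(\bk,\omega)$ for some $j$. For $\omega>\Omega$ the coupling term of $H(\bk,\omega)$ is nonnegative, so discarding the (nonnegative) fractional-kinetic part of the Rayleigh quotient and using $\rho(\bx)\ge m$ — this is exactly the estimate in the proof of \Cref{prop:properties}(v), with the constant kept explicit — gives
\[
\lambda_j(\bk,\omega)\ \ge\ \frac{g^2m^2}{\omega-\Omega}\qquad\text{for every }j,\ \text{uniformly in }\bk.
\]
Combining this with $\omega=\lambda_j(\bk,\omega)$ yields $\omega(\omega-\Omega)\ge g^2m^2$, i.e.\ $\omega^2-\Omega\omega-g^2m^2\ge 0$.

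To read off the bound, note that the polynomial $x\mapsto x^2-\Omega x-g^2m^2$ has the two real roots $\tfrac12(\Omega\pm\sqrt{\Omega^2+4g^2m^2})$ and is negative strictly between them. Because $\sqrt{\Omega^2+4g^2m^2}>|\Omega|$, the smaller root is strictly less than $\Omega$; hence a value $\omega>\Omega$ with $\omega^2-\Omega\omega-g^2m^2\ge 0$ must satisfy $\omega\ge\tfrac12(\Omega+\sqrt{\Omega^2+4g^2m^2})$. Since this holds for every eigenvalue above $\Omega$ of every $H(\bk)$, and those eigenvalues exhaust $\sigma\!\big(H(\bk)\big)\cap(\Omega,\infty)$ by \Cref{theorem_1}, we get $\sigma\!\big(H(\bk)\big)\cap\mathcal I=\emptyset$ for all $\bk$, and therefore $\sigma(H)\cap\mathcal I=\emptyset$.

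I do not expect a genuine obstacle: the argument is assembled entirely from tools already established — \Cref{lemma:eigenvalue}, \Cref{theorem_1}, and the variational lower bound underlying \Cref{prop:properties}(v). The two points that need a little care are that the lower bound $\lambda_j(\bk,\omega)\ge g^2m^2/(\omega-\Omega)$ must be available for the \emph{relevant} index $j$ — which is automatic, since it holds for all $j$ and uniformly in $\bk$ — and the sign-sensitive passage from $\omega(\omega-\Omega)\ge g^2m^2$ to the stated inequality, where $\omega>\Omega$ is used to discard the spurious small root. This bound is the natural analogue for general $\rho$ of the constant-density estimate recorded in \Cref{sec:const}.
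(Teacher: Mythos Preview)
Your proof is correct and follows essentially the same route as the paper: invoke the variational lower bound from the proof of \Cref{prop:properties}(v) to get $\lambda_j(\bk,\omega)\ge \mathrm{const}/(\omega-\Omega)$ uniformly in $j$ and $\bk$, combine with \Cref{lemma:eigenvalue} to obtain a quadratic inequality in $\omega$, and read off the gap. You are a bit more explicit than the paper in invoking \Cref{theorem_1} and in handling the sign of the roots, but the argument is the same.
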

\begin{proof}
Assuming $\rho(\bx) \geq m$, we have as in the proof of point (v) of \Cref{prop:properties}
\begin{align}
	\lambda_j(\bk,\omega) \geq \frac{gm}{\omega-\Omega},
\end{align}
for $\omega>\Omega$. In particular, this estimate is independent of $\bk$ and $j$. By \Cref{lemma:eigenvalue}, the spectrum of $H(\bk)$ is given by $\omega$ such that $\lambda_j(\bk,\omega) = \omega$ for some $j$. Here, the inequality $\omega \geq \frac{gm}{\omega-\Omega} $ is never satisfied for $\omega$ in the interval 
\begin{equation}
	\Omega <\omega < \frac{\Omega + \sqrt{\Omega^2+4g^2m^2}}{2},
\end{equation}
which finishes the proof.
\end{proof}
We conclude this section with a lower bound on the band structure.
\begin{prop}\label{prop:lower}
Assume $\|\rho\|_\infty < \infty$. We then have the following lower bound on the spectrum of $H$:
\begin{equation}
\inf \sigma(H) \geq \frac{\Omega - \sqrt{\Omega^2 + 4g^2\|\rho\|_{\infty} }}{2}.
\end{equation}
\end{prop}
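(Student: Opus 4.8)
The plan is to combine the Floquet--Bloch decomposition with a crude Rayleigh-quotient estimate, in the same spirit as point (v) of \Cref{prop:properties} and \Cref{thm:bandgap_general}. First, observe that the claimed lower bound $\frac{\Omega-\sqrt{\Omega^2+4g^2\|\rho\|_\infty}}{2}$ is strictly less than $\Omega$, so it suffices to bound from below the part of $\sigma(H)$ lying in $(-\infty,\Omega)$; any spectrum in $[\Omega,\infty)$ trivially satisfies the inequality. Using $\sigma(H)=\bigcup_{\bk\in\mathcal{B}}\sigma\bigl(H(\bk)\bigr)$ together with \Cref{lemma:eigenvalue}, every $\omega\in\sigma(H)$ with $\omega<\Omega$ must equal $\lambda_j(\bk,\omega)$ for some $\bk\in\mathcal{B}$ and some index $j$.

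Next I would estimate such an $\omega$ from below by working with the ground-state eigenvalue: $\omega=\lambda_j(\bk,\omega)\ge\lambda_1(\bk,\omega)$, and $\lambda_1(\bk,\omega)$ is obtained by minimizing the quadratic form $L(\bk,\omega)$ over $\phi\in H^{1/2}(\T^d)$ with $\|\phi\|_{L^2(\T^d)}=1$ (as established in the proof of \Cref{prop:properties}). For such $\phi$,
\begin{align*}
L(\bk,\omega)[\phi] &= \int_{\T^d}\bigl|(-(\nabla+\iu\bk)^2)^{1/4}\phi\bigr|^2\,\d\bx + \int_{\T^d}\frac{g^2\rho(\bx)}{\omega-\Omega}|\phi|^2\,\d\bx \\
&\ge \frac{g^2\|\rho\|_\infty}{\omega-\Omega},
\end{align*}
where we discard the nonnegative fractional-kinetic term $\langle(-(\nabla+\iu\bk)^2)^{1/2}\phi,\phi\rangle\ge0$ and use $0\le\rho(\bx)\le\|\rho\|_\infty$ in combination with $\omega-\Omega<0$ — the latter reverses the inequality $\rho(\bx)\le\|\rho\|_\infty$ after division by $\omega-\Omega$. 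Taking the infimum over $\phi$ yields $\lambda_1(\bk,\omega)\ge\frac{g^2\|\rho\|_\infty}{\omega-\Omega}$, hence $\omega\ge\frac{g^2\|\rho\|_\infty}{\omega-\Omega}$.

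Finally I would convert this self-referential inequality into the explicit bound. Multiplying both sides by $\omega-\Omega<0$ (reversing the inequality) gives $\omega(\omega-\Omega)\le g^2\|\rho\|_\infty$, i.e. $\omega^2-\Omega\omega-g^2\|\rho\|_\infty\le0$. Since the roots of $t^2-\Omega t-g^2\|\rho\|_\infty$ are $\frac{\Omega\pm\sqrt{\Omega^2+4g^2\|\rho\|_\infty}}{2}$, this forces $\omega\ge\frac{\Omega-\sqrt{\Omega^2+4g^2\|\rho\|_\infty}}{2}$. As this lower bound is uniform in $\bk$ and $j$, it passes to $\inf\sigma(H)$, which completes the proof.

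The argument is short, and there is no serious obstacle; the only points needing care are bookkeeping ones: verifying that the fractional-Laplacian quadratic form is genuinely nonnegative so that it may be dropped, and correctly tracking the two sign reversals induced by $\omega-\Omega<0$ (one when replacing $\rho$ by $\|\rho\|_\infty$ inside the form, one when clearing the denominator in the final quadratic inequality).
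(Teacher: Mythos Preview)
Your proof is correct but takes a genuinely different route from the paper. The paper proves \Cref{prop:lower} via the pseudo-periodic Green's function: restricting to $\omega<0$, it writes the eigenfunction $\psi$ through a Lippmann--Schwinger integral, uses the spectral representation \eqref{eq:spectral} to show $\int_Y\bigl|G^{\bk,\omega/c}(\bx)\bigr|\,\d\bx\le -c/\omega$, and from the resulting integral inequality deduces $\omega(\omega-\Omega)\le g^2\|\rho\|_\infty$. You instead stay entirely within the quadratic-form framework of \Cref{prop:properties}: drop the nonnegative kinetic term and bound the potential by $\frac{g^2\|\rho\|_\infty}{\omega-\Omega}$, arriving at the same quadratic inequality. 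Your argument is more elementary, requires no Green's function machinery, and is the exact mirror of the proof of \Cref{thm:bandgap_general}; the paper's approach is more involved but showcases the integral-equation tools developed for the later sections.
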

The proof, which relies on properties of the pseudo-periodic Green's function, is given in \Cref{sec:propproof}.

\section{Crystals of high-contrast inclusions}\label{sec:highcontrast}
With the general theory of \Cref{sec:general} at hand, we proceed with a detailed study of the band functions in the case when $\rho(\bx)$ is piecewise constant and supported on a union $D$ of $N$ inclusions inside the fundamental cell as depicted in \Cref{fig:lattice}. Specifically, we will characterize the band functions in the case of \emph{high-contrast} inclusions, by which we mean that $\rho$ is supported on small inclusions with high density. We let $0<\epsilon \ll 1$ denote the length scale of the inclusions, and take $D=D_\epsilon$ to be a union of disjoint domains $D_{\epsilon,i} \subset Y$:
\begin{equation}
	D_\epsilon = \bigcup_{i=1}^N D_{\epsilon,i}.
\end{equation}
In the limit $\epsilon \to 0$, we take
\begin{equation}
	D_{\epsilon,i} = \epsilon B_i + \bz_i,
\end{equation}
for fixed domains $B_i\subset \R^d$ and (distinct) centre points $\bz_i\in Y$, independent of $\epsilon$. Then we take 
\begin{equation}
	\rho(\bx) = \sum_{\bm\in \Lambda} \rho_0(\epsilon) \chi_{D_\epsilon}(\bx-\bm),
\end{equation}
where $\chi_{D_\epsilon}$ is the characteristic function on $D_\epsilon$. Here, $\rho_0 = \rho_0(\epsilon)$ is an $\epsilon$-dependent constant, which we will assume is large when $\epsilon$ is small, corresponding to a high contrast. Depending on the spatial dimensionality $d$, we will take different scaling of $\rho_0(\epsilon)$ as $\epsilon \to 0$. 
\begin{figure}
\includegraphics[width=0.8\linewidth]{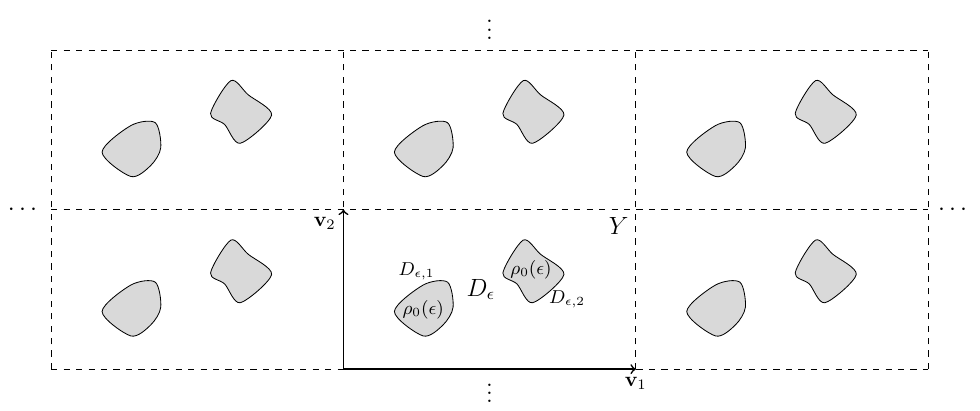}
\caption{Sketch of the periodic arrangement of high-contrast scatterers $D_\epsilon$ inside the unit cell $Y$, with $\epsilon$-dependent density $\rho_0(\epsilon)$. Here shown in the case of two scatterers ($N=2$) in two dimensions (d=2). }\label{fig:lattice}
\end{figure}

From \Cref{theorem_1}, we know that there are two families of band functions, $\omega^\pm(\bk)$, above and below $\Omega$, respectively. The goal of this section is to find analytic and numerical methods to compute these band functions in the current setting. As we will see, the two families emerge from rather different mathematical origins, briefly described as follows. In the limit $\epsilon \to 0$, the family $\omega^-_j(\bk)$ of band functions below $\Omega$ tend to the eigenvalues of a limiting operator, which is a translation of a compact operator and has a sequence of eigenvalues accumulating at $\Omega$. Using Gohberg-Sigal eigenvalue perturbation theory for holomorphic operator-valued functions, we are able to find asymptotic expansions of $\omega^-_j(\bk)$ of the form
\begin{equation} \label{eq:form}
\omega_{j}^-(\bk,\epsilon) \approx \omega_{0,j}^- + K_1(\epsilon)  + K_2(\bk,\epsilon),\end{equation}
where $K_1, K_2\to 0$ as $\epsilon \to 0$ and the constants $\omega_{0,j}^-$ are eigenvalues of the limiting operator. We present rigorous asymptotic expansions in \Cref{thm:band}, \Cref{thm:band2d}  and \Cref{thm:band1d} in the three-, two-, and one-dimensional cases, respectively.

Crucially, the expansions \eqref{eq:form} are not valid around points $(\omega,\bk)$ where $\omega = c|\bk+\bq|$ for some $\bq \in \Lambda^*$, known as  \emph{Rayleigh singularities}. In \Cref{sec:nonlin} we will present a formal  method to compute the band functions around these points. As we shall see, this will allow us to compute the family $\omega^+_j(\bk)$ of band functions above $\Omega$. 

\subsection{Pseudo-periodic Green's function and lattice sums}
We assume that $\omega \neq c|\bk+\bq|$ for all $\bq\in \Lambda^*$. Starting with the Green's function $G^{\omega/c}$ associated to $\left((-\Delta)^{1/2} - \frac{\omega}{c} \right)$, we can define the $\bk$-pseudo-periodic Green's function $G^{\bk,\omega/c}$ as 
\begin{equation}\label{eq:spatial}
	G^{\bk,\omega/c}(\bx) = \sum_{\bm \in \Lambda}e^{\iu \bk\cdot \bm} G^{\omega/c}(\bx-\bm).
\end{equation}
In \Cref{sec:G} we collect some results on $G^{\omega/c}$ from part I. In particular, for $\omega>0$ we have from \eqref{eq:Gid} that
\begin{equation}G^{\bk,\omega/c}(\bx) = \frac{2\omega}{c}\sum_{\bm \in \Lambda}e^{\iu \bk\cdot \bm} G^{\omega/c}_{\mathrm{helm}}(\bx-\bm) + \sum_{\bm \in \Lambda}e^{\iu \bk\cdot \bm} G^{-\omega/c}(\bx-\bm),\end{equation}
where $ G^{\omega/c}_{\mathrm{helm}}$ is the Green's function associated to the Helmholtz operator \begin{equation}\left(-\Delta - \frac{\omega}{c}\right).\end{equation} If $\omega \neq c|\bk+\bq|$ for all $\bq \in \Lambda^*$, the first sum is well-known to converge uniformly for $\bx$ in compact sets of $\R^d$, $\bx\neq 0$ (see \emph{e.g} \cite[Section 2.12]{ammari2018mathematical} or \cite{petit2013electromagnetic}). Moreover, from \Cref{sec:G} we know that $G^{-\omega/c}(\bx) = O\left(|\bx|^{-(d+1)}\right)$ as $|\bx|\to \infty$, so the second sum is uniformly and absolutely convergent for all $\bx\neq 0$. It is clear that
\begin{equation}\left((-\Delta)^{1/2} - \frac{\omega}{c} \right)G^{\bk,\omega/c}(\bx) =  \sum_{\bm\in\Lambda} e^{\iu \bk\cdot \bm}\delta(\bx- \bm).\end{equation}
From Poisson's summation formula we have that 
\begin{equation}\sum_{\bm\in\Lambda} e^{\iu \bk\cdot \bm}\delta(\bx- \bm) = \sum_{\bq\in \Lambda^*} e^{\iu (\bq+\bk) \cdot \bx}.\end{equation}
It therefore follows that $G^{\bk,\omega/c}$ admits a Fourier series as
\begin{equation}\label{eq:spectral}
G^{\bk,\omega/c}(\bx) = \sum_{\bq\in \Lambda^*} \frac{e^{\iu(\bq+\bk)\cdot \bx}}{|\bk+\bq| - \omega/c}.\end{equation}
As we shall see, the asymptotic behaviour of the band functions are described by the (zeroth order) \emph{lattice sum} $S^{\omega/c}(\bk)$, defined as
\begin{equation}\label{eq:S}
	S^{\omega/c}(\bk) = \sum_{\bm \in \Lambda\setminus\{0\} }e^{\iu \bk\cdot \bm} G^{\omega/c}(\bm).
\end{equation}
For $\omega>0$ we, can split $S^{\omega/c}$ into two components:
\begin{equation}S^{\omega/c}(\bk) = \frac{2\omega}{c}S_{\mathrm{helm}}^{\omega/c}(\bk) + S^{-\omega/c}(\bk),\end{equation}
where $S_{\mathrm{helm}}^{\omega/c}(\bk)$ is the lattice sum corresponding to the Helmholtz Green's function while $S^{-\omega/c}(\bk)$ is given by the absolutely convergent series coming from the $O\left(|\bx|^{-(d+1)}\right)$-remainder. 

\subsection{Integral equation formulation}
Recall that we are seeking $\omega = \omega(\bk)$ such that the problem
\begin{equation}\label{eq:psi2}
	\left\{
	\begin{array} {ll}
		\ds c(-\Delta)^{1/2}\psi  + \frac{g^2 \rho(\bx)}{\omega-\Omega}\psi = \omega \psi^\bk, \quad  & \bx \in Y, \\[0.8em]
		\psi(\bx+\bm) = e^{\iu \bk \cdot \bm}\psi(\bx), & \bm\in \Lambda,
	\end{array}\right.	
\end{equation}
has a nonzero solution $\psi$. Using the pseudo-periodic Green's function, we obtain from \eqref{eq:psi2} the equivalent integral equation
\begin{equation}
	\psi(\bx) =  -\frac{g^2\rho_0(\epsilon)}{c(\omega-\Omega)}\int_{D_\epsilon} G^{\bk,\omega/c}(\bx-\by) \psi(\by) \d \by.
\end{equation}
We define the integral operator $\G_\epsilon^{\bk,\omega}: L^2(D_\epsilon) \to L^2(D_\epsilon)$ as
\begin{equation}\label{eq:Akw}
	\G_\epsilon^{\bk,\omega}[\phi](\bx) = -(\omega-\Omega)\phi(\bx) -\frac{g^2\rho_0(\epsilon)}{c}\int_{D_\epsilon} G^{\bk,\omega/c}(\bx-\by) \phi(\by) \d \by.
\end{equation}
Consequently, the band functions $\omega = \omega(\bk)$ are characterized as solutions to the nonlinear eigenvalue problem
\begin{equation}
	\G_\epsilon^{\bk,\omega}[\phi](\bx) = 0.
\end{equation}

Since $D_\epsilon$ consists of several connected components, we will adopt a block matrix representation of $\G_\epsilon^{\bk,\omega}$. We let $\G^{\bk,\omega}_{\epsilon,(i,j)}: L^2(D_{\epsilon,j}) \to L^2(D_{\epsilon,i})$ be defined as in \eqref{eq:Akw} but integrated around $D_{\epsilon,j}$ and evaluated on $D_{\epsilon,i}$. We then have the following representation:
\begin{equation}\G^{\bk,\omega}_\epsilon = \begin{pmatrix} \G^{\bk,\omega}_{\epsilon,(1,1)} & \G^{\bk,\omega}_{\epsilon,(1,2)} & \cdots & \G^{\bk,\omega}_{\epsilon,(1,N)}\\ \G^{\bk,\omega}_{\epsilon,(2,1)} & \G^{\bk,\omega}_{\epsilon,(2,2)} & \cdots & \G^{\bk,\omega}_{\epsilon,(2,N)} \\ \vdots & \vdots & \ddots  & \vdots \\\G^{\bk,\omega}_{\epsilon,(N,1)} & \G^{\bk,\omega}_{\epsilon,(N,2)} & \cdots & \G^{\bk,\omega}_{\epsilon,(N,N)} \end{pmatrix}.\end{equation}
Here, we have identified $L^2(D_\epsilon) = L^2(D_{\epsilon,1})\times ... \times L^2(D_{\epsilon,N})$. We define $\L^2(B) =  L^2(B_1)\times ... \times L^2(B_N)$ and introduce the map $S_\epsilon$ between $\L^2(B) $ and $L^2(D_\epsilon)$:
\begin{equation}
	S_\epsilon: L^2(D_\epsilon) \rightarrow \L^2(B), \ (S_\epsilon f)_i(\bx) = f\bigl(\epsilon\bx+\bz_i).
\end{equation}

\subsection{Band structure in two and three dimensions}\label{sec:3d2s}
We begin with the case $d\in \{2,3\}$. In this case, we assume (as before) that
\begin{equation}
	D_{\epsilon,i} = \epsilon B_i + \bz_i, \quad D_\epsilon= \bigcup_{i=1}^N D_{\epsilon,i},
\end{equation}
and additionally assume that 
\begin{equation}
	\quad \rho(\bx) = \sum_{\bm\in \Lambda} \frac{s_0}{\epsilon}\chi_{D_\epsilon}(\bx-\bm),
\end{equation}
for some constant $s_0>0$. In this case, \textit{i.e.} when $\rho_0(\epsilon)$ scales as $O(\epsilon^{-1})$, it is known that the resonances of the individual inclusions scale as $O(1)$ part I.

When studying the case of small inclusions, we are interested in the expansion of the pseudo-periodic Green's function for $\bx$ close to $0$. As before, we assume that $\omega \neq c|\bk+\bq|$ for all $\bq \in \Lambda^*$, in which case we have from \Cref{sec:sing}
\begin{equation}\label{eq:Gexp}
	\epsilon^dG^{\bk,\omega/c}(\epsilon \bx) = \sum_{n=0}^\infty \epsilon^{n+1}A_n^{\bk,\omega/c}(\bx) +  \sum_{n=d-1}^\infty \epsilon^{n+1}\log(\epsilon)B_n^{\omega/c}(\bx),
\end{equation}
for functions $A_n^{\bk,k}$ and $B_n^k$ which can be explicitly computed (the first few terms, which we will subsequently need, are reported in \Cref{sec:sing}). For the translations of $G^{\bk,\omega/c}$, we also have expansions
\begin{equation}
	\epsilon^dG^{\bk,\omega/c}(\epsilon \bx + \bz_i - \bz_j) = \sum_{n=d-1}^\infty \epsilon^{n+1}A_{n,(i,j)}^{\bk,\omega/c}(\bx), \qquad i,j=1,...,N, \quad i\neq j,
\end{equation}
for functions $A_{n,(i,j)}^{\bk,\omega/c}(\bx)$. For ease of notation, we identify the diagonal terms with the $A$'s as in \eqref{eq:Gexp}: $A_{n,(i,i)}^{\bk,\omega/c}(\bx) = A_{n}^{\bk,\omega/c}(\bx)$. We then define two families of operators $\A_n^{\bk,\omega}, \B_n^{\omega}: \L^2(B) \to \L^2(B)$; each operator is block-wise defined and $\A_n^{\bk,\omega}$ is given by
\begin{equation}
	\A^{\bk,\omega}_{n,(i,j)}[\phi_j](\bx) = -\frac{g^2s_0}{c}\int_{B_j} A^{\bk,\omega/c}_{n,(i,j)}(\bx-\by)\phi_j(\by) \d \by, \quad \bx \in B_i,
\end{equation}
for $i,j = 1,...,N$ and $n>0$. For $n=0$, we define $\A^{\bk,\omega}_{0}$ as the block-wise diagonal operator given by
\begin{equation}
	\A^{\omega}_{0,(i,j)}[\phi_j](\bx) = \delta_{ij}\left(-(\omega-\Omega)\phi(\bx) -\frac{g^2s_0}{c}\int_{B_j} A_0(\bx-\by)\phi_j(\by) \d \by \right), \quad \bx \in B_i,
\end{equation}
In particular, $\A^{\bk,\omega}_0 = \A^{\omega}_0$ is independent of $\bk$, and hence we omit this superscript. We also define the block-wise diagonal operators $\B_{n}^{\omega} : \L^2(B) \to \L^2(B)$ as
\begin{equation}
\B^{\omega}_{n,(i,j)}[\phi_j](\bx) =  \delta_{ij}\left(-\frac{g^2s_0}{c}\int_{B_j} B^{\omega/c}_{n}(\bx-\by)\phi_j(\by) \d \by\right), \quad \bx \in B_i,
\end{equation}
We then have the following asymptotic expansion of $\G^{\bk,\omega}_\epsilon$.
\begin{prop}
	Assume that $\omega \neq  c|\bk +\bq|$ for all $\bq \in \Lambda^*$. We then have 
	\begin{equation}\label{eq:expA}
		S_\epsilon\G^{\bk,\omega}_\epsilon S_\epsilon^{-1} = \sum_{n=0}^\infty \epsilon^n\A^{\bk,\omega}_n + \sum_{n=d-1}^\infty \epsilon^n\log(\epsilon)\B^{\omega}_{n},\end{equation}
	where the convergence holds in $\B\bigl(\L^2(B)\bigr)$.
\end{prop}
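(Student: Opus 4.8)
The plan is to derive the expansion \eqref{eq:expA} termwise by plugging the pointwise expansions of the (possibly translated) pseudo-periodic Green's function into the rescaled operator $S_\epsilon \G_\epsilon^{\bk,\omega} S_\epsilon^{-1}$ and then controlling the tail of the series in operator norm. First I would compute, for $\phi \in \L^2(B)$ with components $\phi_j \in L^2(B_j)$, the action $\bigl(S_\epsilon \G_\epsilon^{\bk,\omega} S_\epsilon^{-1}[\phi]\bigr)_i(\bx)$ for $\bx \in B_i$. Unwinding the definitions of $S_\epsilon$ and of the block operator $\G_{\epsilon,(i,j)}^{\bk,\omega}$, and performing the change of variables $\by = \epsilon \bu + \bz_j$ in the integral over $D_{\epsilon,j}$ (which contributes a Jacobian $\epsilon^d$), one finds
\begin{equation}
\bigl(S_\epsilon \G_\epsilon^{\bk,\omega} S_\epsilon^{-1}[\phi]\bigr)_i(\bx) = -(\omega-\Omega)\delta_{ij}\phi_i(\bx) - \frac{g^2 s_0}{c\,\epsilon}\sum_{j=1}^N \int_{B_j} \epsilon^d G^{\bk,\omega/c}\bigl(\epsilon(\bx-\by) + \bz_i - \bz_j\bigr)\phi_j(\by)\,\d\by,
\end{equation}
where I have used that $\rho_0(\epsilon) = s_0/\epsilon$. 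The factor $\epsilon^d G^{\bk,\omega/c}(\epsilon(\bx-\by)+\bz_i-\bz_j)$ is exactly the quantity for which the excerpt supplies the expansions: the diagonal case $i=j$ uses \eqref{eq:Gexp} with argument $\epsilon(\bx-\by)$, and the off-diagonal case $i\neq j$ uses the translated expansion $\sum_{n=d-1}^\infty \epsilon^{n+1} A_{n,(i,j)}^{\bk,\omega/c}(\bx-\by)$.

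Next I would substitute these expansions and collect powers of $\epsilon$ and $\epsilon^n\log\epsilon$. The overall prefactor $1/\epsilon$ shifts the index, so the $\epsilon^{n+1}A_n$ term becomes an $\epsilon^n$ term, matching the left-hand side. The $n=0$ diagonal term combines the $-(\omega-\Omega)$ piece with the $A_0$ integral to give precisely $\A_0^\omega$; the $n\geq 1$ terms assemble into $\A_n^{\bk,\omega}$ block by block (diagonal blocks from \eqref{eq:Gexp}, off-diagonal blocks from the translated expansion, with the convention $A_{n,(i,i)}^{\bk,\omega/c} = A_n^{\bk,\omega/c}$ noted in the text). Likewise the $\log\epsilon$ terms, which only appear in the diagonal blocks and start at $n = d-1$, assemble into $\B_n^\omega$. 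This is a bookkeeping step and I would present it compactly rather than writing out every block.

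The substantive part is justifying that the resulting series converges in $\B\bigl(\L^2(B)\bigr)$ and equals the operator on the left — i.e.\ upgrading the pointwise (in $\bx$) Green's function asymptotics to operator-norm convergence. The plan is to invoke the convergence statement behind \eqref{eq:Gexp} from \Cref{sec:sing}: the expansion of $\epsilon^d G^{\bk,\omega/c}(\epsilon\bx)$ holds uniformly for $\bx$ in compact sets away from the lattice singularity, with remainder of size $O(\epsilon^{N+2})$ (times possibly $\log\epsilon$) after $N$ terms, the implied constant uniform over $\bx, \by$ in the bounded sets $B_i$. Since each kernel $A_{n,(i,j)}^{\bk,\omega/c}$, $B_n^{\omega/c}$ is (at worst weakly) singular and bounded in an appropriate sense on the bounded domains $B_i \times B_j$ — the leading term $A_0$ carries the $|\bx|^{-(d-1)}$-type singularity of the fractional-Laplacian Green's function and hence defines a bounded (indeed compact) operator on $L^2$, while higher $A_n$, $B_n$ are successively less singular — each operator $\A_n^{\bk,\omega}$, $\B_n^\omega$ is bounded on $\L^2(B)$ with norm growing at most geometrically in $n$, and the remainder after truncating \eqref{eq:expA} at order $N$ is an integral operator with kernel $O(\epsilon^{N+1})$ uniformly, hence $O(\epsilon^{N+1})$ in operator norm. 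This is where I expect the main obstacle: one must check that the singularity structure of the $A_n$ kernels (isolated at the origin for the diagonal blocks, smooth for the off-diagonal blocks since $\bz_i \neq \bz_j$) makes each term a genuine bounded operator and that the tail estimate from \Cref{sec:sing} is uniform enough in $\bx, \by$ to pass to $L^2$-operator norm; the off-diagonal blocks are easy (smooth, bounded kernels), so the real work is the diagonal blocks and in particular controlling how the weakly singular leading terms interact with the uniform remainder bound. Once this is in place, absolute convergence of $\sum_n \epsilon^n \|\A_n^{\bk,\omega}\| + \sum_n \epsilon^n|\log\epsilon|\,\|\B_n^\omega\|$ for small $\epsilon$ follows, and the identity \eqref{eq:expA} holds in $\B\bigl(\L^2(B)\bigr)$.
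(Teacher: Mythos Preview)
The paper states this proposition without proof: it records the pointwise expansion \eqref{eq:Gexp} of $\epsilon^d G^{\bk,\omega/c}(\epsilon\bx)$ and of its translates, defines the operators $\A_n^{\bk,\omega}$ and $\B_n^{\omega}$ term by term to match those coefficients, and then simply asserts the operator expansion. Your proposal supplies exactly the omitted argument --- the change of variables that produces the factor $\epsilon^d$ in front of the Green's function, the index shift from the $\rho_0(\epsilon)=s_0/\epsilon$ prefactor, the block-by-block identification of the coefficients, and the passage from uniform kernel bounds to operator-norm convergence --- and this is the natural (indeed the only reasonable) route. One cosmetic slip: in your displayed formula the term $-(\omega-\Omega)\delta_{ij}\phi_i(\bx)$ should just read $-(\omega-\Omega)\phi_i(\bx)$, since you are already writing the $i$-th component.
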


When $\omega \notin \{0,\Omega\}$ and $\omega \neq  c|\bk +\bq|$ for all $\bq \in \Lambda^*$, $\G_\epsilon^{\bk, \omega}$ is a holomorphic operator-valued function of $\omega$, and we can follow a similar approach as in part I to derive asymptotic expansions of eigenvalues of $\G_\epsilon^{\bk,\omega}$. The eigenvalue problem for the limiting operator $\A^\omega_0$ is a linear eigenvalue problem; $\A^\omega_0[\psi] = 0$ is equivalent to 
\begin{equation}L_0\psi = (\Omega-\omega) \psi,\end{equation}
where $L_0$ is block-wise diagonal and defined by
\begin{equation}\label{eq:L0}
	L_{0,(i,j)}[\phi_j](\bx) = \delta_{ij}\left(\frac{g^2s_0}{c}\int_{B_j} A_0(\bx-\by)\phi_j(\by) \d \by \right), \quad \bx \in B_i.
\end{equation}
Here $L_0$ is a compact operator, and has a sequence of eigenvalues $\lambda = \Omega - \omega_{0,j}^-$ converging to zero as $j \to \infty$. By positivity of $(-\Delta)^{1/2}$ we know that $L_0$ is a positive operator, so we have $\omega_{0,j}^- < \Omega$. Starting with these eigenvalues $\omega_{0,j}^-$, we can use Gohberg-Sigal theory for perturbations of holomorphic operator-valued functions to compute asymptotic expansions of eigenvalues $\omega_j^-$ of $\G_\epsilon^{\bk,\omega}$ \cite{ammari2018mathematical,Gohberg1971}. The first result is analogous to \cite[Proposition 3.6]{one-photon_bound}.
\begin{prop} \label{prop:pertQP}
	Let $\omega_{0,j}^-\in \R\setminus\{0,\Omega\}$ be an eigenvalue of $\A^{\omega}_{0}$ of multiplicity $M$ and let $\bk$ be fixed such that $\omega_{0,j}^- \neq c|\bk+\bq|$ for all $\bq \in \Lambda^*$. Then, for small enough $\epsilon$, there exists $M$ eigenvalues (up to multiplicity) $\omega_{j}^- = \omega_{j}^-(\epsilon), \ j=1,...,M$ of $\G_\epsilon^{\bk,\omega}$, continuous as a functions of $\epsilon$ and satisfying $\lim_{\epsilon \to 0^+}\omega_{j}^-(\epsilon) = \omega_{0,j}^-$.
\end{prop}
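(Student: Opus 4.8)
The plan is to apply Gohberg--Sigal eigenvalue perturbation theory for holomorphic operator-valued functions, exactly as in part I \cite[Proposition 3.6]{one-photon_bound}, using the expansion \eqref{eq:expA} as the small-$\epsilon$ perturbation of the limiting operator $\A_0^\omega$. First I would fix $\bk$ and the target eigenvalue $\omega_{0,j}^-$, and observe that since $\omega_{0,j}^- \in \R\setminus\{0,\Omega\}$ and $\omega_{0,j}^- \neq c|\bk+\bq|$ for all $\bq\in\Lambda^*$, there is a closed disc $\gamma \subset \C$ centred at $\omega_{0,j}^-$ of some small radius $r>0$ whose closure avoids $0$, $\Omega$, and all the Rayleigh points $c|\bk+\bq|$, and which contains no other eigenvalue of $\A_0^\omega$ on or inside it except $\omega_{0,j}^-$. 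On this disc, $\omega \mapsto \A_0^\omega$ is holomorphic (it depends affinely on $\omega$ through the $-(\omega-\Omega)$ term and the coefficient functions $A_0$ are $\omega/c$-dependent but holomorphic away from the excluded set), and $\A_0^\omega$ is Fredholm of index zero — indeed it is $-(\omega-\Omega)\mathrm{Id}$ plus the compact operator $-\tfrac{g^2 s_0}{c}$ times the integral operator with kernel $A_0$, so $\A_0^\omega$ is a compact perturbation of an invertible operator whenever $\omega \neq \Omega$. Thus $\A_0^\omega$ is a holomorphic Fredholm family on $\gamma$ whose characteristic values inside $\gamma$ consist only of $\omega_{0,j}^-$, with total multiplicity $M$ (this is where we invoke that the eigenvalue of $\A_0^\omega$, equivalently of $L_0$ via $L_0\psi = (\Omega-\omega)\psi$, has multiplicity $M$; note $L_0$ self-adjoint compact makes algebraic and geometric multiplicity coincide).

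Next I would invoke the generalized argument principle / Rouch\'e theorem for holomorphic operator-valued functions \cite{Gohberg1971,ammari2018mathematical}: the number of characteristic values (counted with multiplicity) of a holomorphic Fredholm family inside $\gamma$ is stable under perturbations that are small in operator norm uniformly on $\partial\gamma$. By the Proposition preceding this statement, $S_\epsilon \G_\epsilon^{\bk,\omega} S_\epsilon^{-1} = \A_0^\omega + \sum_{n\geq 1}\epsilon^n \A_n^{\bk,\omega} + \sum_{n\geq d-1}\epsilon^n\log\epsilon\, \B_n^\omega$, and the remainder $\mathcal{R}_\epsilon^{\bk,\omega} := S_\epsilon \G_\epsilon^{\bk,\omega} S_\epsilon^{-1} - \A_0^\omega$ converges to $0$ in $\mathcal{B}(\L^2(B))$, uniformly for $\omega$ on the compact set $\partial\gamma$ (the coefficient operators $\A_n^{\bk,\omega}, \B_n^\omega$ are uniformly bounded there since their kernels are continuous and $\omega$ stays away from the singular set). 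Hence for $\epsilon$ small enough, $S_\epsilon\G_\epsilon^{\bk,\omega}S_\epsilon^{-1}$ is a holomorphic Fredholm family on $\gamma$ with exactly $M$ characteristic values inside $\gamma$, counted with multiplicity. Since $S_\epsilon$ is an invertible operator, the characteristic values of $\G_\epsilon^{\bk,\omega}$ coincide with those of $S_\epsilon\G_\epsilon^{\bk,\omega}S_\epsilon^{-1}$; relabel them $\omega_j^-(\epsilon)$, $j=1,\dots,M$.

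Finally, for the continuity in $\epsilon$ and the convergence $\omega_j^-(\epsilon)\to\omega_{0,j}^-$: shrinking the radius $r$ of $\gamma$ to any $r'<r$ and repeating the argument, for all sufficiently small $\epsilon$ all $M$ characteristic values lie inside the disc of radius $r'$; since $r'$ is arbitrary this gives $\omega_j^-(\epsilon)\to\omega_{0,j}^-$. Continuity as a function of $\epsilon$ follows from the fact that $S_\epsilon\G_\epsilon^{\bk,\omega}S_\epsilon^{-1}$ depends continuously on $\epsilon$ in $\mathcal{B}(\L^2(B))$ uniformly on $\partial\gamma$ together with the continuous dependence of characteristic values (via the contour integral formula $\tfrac{1}{2\pi\iu}\oint_{\partial\gamma}(\,\cdot\,)^{-1}\partial_\omega(\,\cdot\,)\,\d\omega$ for the multiplicity, and the associated Rouch\'e-type localization) on the family. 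The main obstacle I anticipate is the bookkeeping required to verify that the Gohberg--Sigal hypotheses genuinely hold uniformly in $\epsilon$ on $\partial\gamma$ — in particular that $\A_0^\omega$ is boundedly invertible on $\partial\gamma$ with a norm bound independent of $\omega\in\partial\gamma$ (so that the Neumann series $(\A_0^\omega + \mathcal{R}_\epsilon^{\bk,\omega})^{-1}$ converges for small $\epsilon$), and that the $\log\epsilon$ terms in \eqref{eq:expA}, while not analytic in $\epsilon$, do not interfere since $\epsilon^{d-1}\log\epsilon \to 0$ and the perturbation argument only needs smallness, not analyticity, in $\epsilon$ (analyticity is only needed in $\omega$).
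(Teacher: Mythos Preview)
Your proposal is correct and follows essentially the same approach as the paper: the paper does not give an independent proof of this proposition but simply notes that it is analogous to \cite[Proposition~3.6]{one-photon_bound}, relying on the Gohberg--Sigal perturbation theory for holomorphic Fredholm families applied to the expansion \eqref{eq:expA}. Your sketch fills in exactly those details, including the Fredholm structure of $\A_0^\omega$, the choice of contour away from $0$, $\Omega$, and the Rayleigh points, and the generalized Rouch\'e argument, so there is nothing to add.
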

In the following, we assume that $D_\epsilon$ consists of two identical inclusions with shape $B \subset \R^d$:
\begin{equation}D_\epsilon= D_{\epsilon,1} \cup D_{\epsilon,2}, \qquad D_{\epsilon,i} = \epsilon R_i B + \bz_i,\end{equation}
for some rotations $R_i$ and centres $\bz_i$. The general case of $N$ inclusions can be treated using the method developed in \cite{ammari2004splitting}. In the case of two inclusions,  $\A_0^\omega$ can be represented by a $2\times 2$ diagonal block matrix, and every eigenvalue $\omega_{0,j}^-$ of $\A_0^\omega$ has even multiplicity. If we have a double eigenvalue $\omega_{0,j}^-$, it will split into two eigenvalues $\omega_{j}^-$ and $\omega_{j+1}^-$ of the full operator $\G^{\bk,\omega}_\epsilon$ when $\epsilon$ is nonzero. The next theorem is the main result in $d=3$.
\begin{theorem}\label{thm:band}
	Assume $d=3$ and let $\omega_{0,j}^-\in \R\setminus\{0,\Omega\}$ be a double eigenvalue of $\G^{\omega}_0$ with  corresponding normalized eigenmodes $\psi_1 = (\psi_{(1,1)}, 0)$, $\psi_2 = (0,\psi_{(2,2)})$. Moreover, let $\bk$ be fixed such that $\omega_{0,j}^- \neq c|\bk+\bq|$ for all $\bq \in \Lambda^*$.  Then, for small enough $\epsilon$, there are two eigenvalues $\omega_{j+i}^- = \omega_{j+i}^-(\bk,\epsilon), \ i=0,1,$ of $\G^{\bk,\omega}_\epsilon,$ satisfying
	\begin{multline} \label{eq:thm3d}
		\omega_{j+i}^-(\bk,\epsilon) = \omega_{0,j}^- + \epsilon\left\langle \psi_{1},\A^{\omega_{0,j}^-}_{1}\psi_{1}\right\rangle  + \epsilon^2\left( C_1 + C_2\log(\epsilon) \right)  \\ -  \epsilon^2\frac{g^2s_0}{c}\left(\int_{B_1} \psi_{(1,1)}(\bx)\d \bx \right)^2 \left( S^{\omega_{0,j}^-}(\bk) +(-1)^i \left| G^{\bk,\omega_{0,j}^-/c}(\bz_1-\bz_2)\right| \right)  + O(\epsilon^3\log\epsilon),
	\end{multline}
	for constants $C_1$ and $C_2$ which are independent of $\epsilon$ and $\bk$.
\end{theorem}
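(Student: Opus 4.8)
The plan is to apply the Gohberg--Sigal perturbation theory for holomorphic operator-valued functions to $\G_\epsilon^{\bk,\omega}$, expanded around the unperturbed eigenvalue $\omega_{0,j}^-$, and to carry the expansion to second order in $\epsilon$ while keeping careful track of the off-diagonal coupling between the two inclusions. The starting point is the norm-convergent expansion $S_\epsilon \G_\epsilon^{\bk,\omega}S_\epsilon^{-1} = \sum_{n\ge 0}\epsilon^n\A_n^{\bk,\omega} + \sum_{n\ge d-1}\epsilon^n\log\epsilon\,\B_n^\omega$ from the Proposition preceding the theorem, specialized to $d=3$ (so the $\log\epsilon$ terms first enter at order $\epsilon^2$). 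By \Cref{prop:pertQP} we already know that the double eigenvalue $\omega_{0,j}^-$ of $\A_0^\omega$ splits into two eigenvalues $\omega_j^-(\bk,\epsilon), \omega_{j+1}^-(\bk,\epsilon)$ of $\G_\epsilon^{\bk,\omega}$, each continuous in $\epsilon$ with limit $\omega_{0,j}^-$; the task is to compute the first two nontrivial terms of their expansions.

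The key steps, in order: (1) Set up the reduced eigenvalue problem on the two-dimensional unperturbed eigenspace $\mathrm{span}\{\psi_1,\psi_2\}$, using the Riesz projection $\Pi_\epsilon = \frac{1}{2\pi\iu}\oint (\zeta - S_\epsilon\G_\epsilon^{\bk,\omega}S_\epsilon^{-1})^{-1}\,\d\zeta$ around $\omega_{0,j}^-$; standard Gohberg--Sigal / Kato theory gives that the perturbed eigenvalues are the eigenvalues of a $2\times 2$ matrix $\mathcal{M}(\epsilon)$ obtained by compressing the expanded operator to this subspace and Schur-complementing away the rest of the spectrum. (2) Expand $\mathcal{M}(\epsilon) = \mathcal{M}_0 + \epsilon \mathcal{M}_1 + \epsilon^2(\mathcal{M}_2 + \log\epsilon\,\tilde{\mathcal{M}}_2) + O(\epsilon^3\log\epsilon)$, where $\mathcal{M}_1$ has entries $\langle \psi_a, \A_1^{\omega_{0,j}^-}\psi_b\rangle$ and $\mathcal{M}_2$ collects the genuine second-order term $\langle\psi_a,\A_2^{\omega_{0,j}^-}\psi_b\rangle$ plus the second-order-perturbation-theory contribution $-\sum \langle\psi_a,\A_1 Q (\A_0)^{-1}Q \A_1\psi_b\rangle$ (with $Q$ the complementary projection) and the derivative-in-$\omega$ corrections coming from the $\omega$-dependence of $\A_0^\omega$. (3) Crucially, identify which pieces of $\mathcal{M}_1,\mathcal{M}_2$ are diagonal and which are off-diagonal in the $\{\psi_1,\psi_2\}$ basis: because $\psi_1$ is supported in $B_1$ and $\psi_2$ in $B_2$, the off-diagonal entries come entirely from the inter-inclusion kernel $A_{n,(1,2)}^{\bk,\omega/c}$, whose leading contribution is governed by $G^{\bk,\omega_{0,j}^-/c}(\bz_1-\bz_2)$, while the $\bk$-dependent diagonal contribution at order $\epsilon^2$ is the lattice sum $S^{\omega_{0,j}^-}(\bk)$ arising from the regular part $A_0^{\bk,\omega/c}(\bx) = A_0(\bx) + S^{\omega/c}(\bk)\cdot(\text{const}) + \dots$ of the pseudo-periodic Green's function near the origin. (4) Diagonalize the resulting $2\times 2$ matrix: since the two inclusions are identical (related by the rotations $R_i$), $\mathcal{M}(\epsilon)$ is, to the relevant order, of the form $\alpha I + \beta(\text{symmetric off-diagonal})$, so its eigenvalues are $\alpha \pm |\beta|$, which after inserting $\beta \propto \epsilon^2\frac{g^2 s_0}{c}\bigl(\int_{B_1}\psi_{(1,1)}\bigr)^2 |G^{\bk,\omega_{0,j}^-/c}(\bz_1-\bz_2)|$ and the diagonal $\alpha = \omega_{0,j}^- + \epsilon\langle\psi_1,\A_1\psi_1\rangle + \epsilon^2(C_1 + C_2\log\epsilon) - \epsilon^2\frac{g^2 s_0}{c}(\int_{B_1}\psi_{(1,1)})^2 S^{\omega_{0,j}^-}(\bk)$ yields precisely \eqref{eq:thm3d}, with the sign $(-1)^i$ distinguishing the two branches. (5) Check that $C_1, C_2$ are $\bk$- and $\epsilon$-independent: they collect the self-interaction terms $\langle\psi_1,\A_2^{\omega_{0,j}^-}\psi_1\rangle$, the $\B$-term at order $\epsilon^2\log\epsilon$, and the Schur-complement correction, none of which involve the inter-inclusion distance or the quasimomentum (the $\bk$-dependence is isolated in $S$ and in $G^{\bk,\cdot}(\bz_1-\bz_2)$), and verify the $O(\epsilon^3\log\epsilon)$ remainder by bounding the tail of the operator expansion and the higher-order Gohberg--Sigal terms.

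The main obstacle I expect is step (2)--(3): correctly assembling the order-$\epsilon^2$ coefficient. There are three distinct sources — the direct second-order kernel term $\A_2$, the second-order perturbation-theory "sum over intermediate states" term, and the correction from linearizing $\A_0^\omega$ in $\omega$ about $\omega_{0,j}^-$ (since the eigenvalue enters the operator both through the spectral parameter and through the Green's function) — and keeping the bookkeeping straight so that exactly the lattice sum $S^{\omega_{0,j}^-}(\bk)$ and the single off-diagonal term $|G^{\bk,\omega_{0,j}^-/c}(\bz_1-\bz_2)|$ survive with the stated coefficient, while everything else is absorbed into the $\bk$-independent constants $C_1, C_2$, is delicate. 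Establishing that the off-diagonal entries of $\mathcal{M}_1$ vanish (so the splitting is genuinely a second-order effect, as the statement asserts via the $\epsilon^2$ prefactor on the $(-1)^i$ term) also requires care: it follows because $A_{n,(1,2)}^{\bk,\omega/c}$ has its expansion starting at order $n=d-1=2$, i.e. the inclusions are too far apart to interact at order $\epsilon$, but this should be made explicit by invoking the second displayed expansion ($\epsilon^d G^{\bk,\omega/c}(\epsilon\bx + \bz_i - \bz_j) = \sum_{n\ge d-1}\epsilon^{n+1}A_{n,(i,j)}^{\bk,\omega/c}(\bx)$). Once the $2\times 2$ matrix is correctly identified, the diagonalization and the matching to \eqref{eq:thm3d} are routine.
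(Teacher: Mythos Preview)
Your proposal is essentially correct and runs parallel to the paper's argument, but the two differ in how the reduction to a $2\times 2$ problem is packaged. The paper does not use a Riesz projection/Schur complement; instead it invokes the generalized argument principle for holomorphic operator-valued functions directly: it defines the power sums
\[
a_l \;=\; \frac{1}{2\pi\iu}\,\tr\int_{\partial V}(\omega-\omega_{0,j}^-)^{l}\bigl(\G_\epsilon^{\bk,\omega}\bigr)^{-1}\frac{\d}{\d\omega}\G_\epsilon^{\bk,\omega}\,\d\omega,\qquad l=1,2,
\]
so that the two perturbed characteristic values solve $\omega_{j+i}^- - \omega_{0,j}^- = \tfrac12\bigl(a_1 + (-1)^i\sqrt{2a_2-a_1^2}\bigr)$. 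It then expands $a_l$ using the Laurent expansion $(\A_0^\omega)^{-1} = -\L/(\omega-\omega_{0,j}^-) + \sum_n(\omega-\omega_{0,j}^-)^n L_n$ and reads off the same $2\times2$ matrix entries $e_1,e_2$ (your diagonal/off-diagonal pieces of $\mathcal{M}$) together with the second-order correction $l_1$ (your Schur-complement term) and the $\omega$-derivative term $\tfrac12\partial_\omega e_1^2$. Your approach via Lyapunov--Schmidt/Schur complement is a legitimate alternative and arguably more transparent about where each contribution comes from; the trace-formula route is more systematic for nonlinear eigenvalue problems and avoids having to track the $\omega$-dependence of the reduced matrix by hand.

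One technical point: your Riesz projection $\Pi_\epsilon = \frac{1}{2\pi\iu}\oint(\zeta - S_\epsilon\G_\epsilon^{\bk,\omega}S_\epsilon^{-1})^{-1}\d\zeta$ is not well-posed as written, since $\G_\epsilon^{\bk,\omega}$ is not a linear operator whose spectrum you are perturbing---the characteristic values are the $\omega$'s at which it fails to be invertible, so there is no independent spectral parameter $\zeta$. What you actually want is either the Keldysh/Gohberg--Sigal contour in $\omega$ (as the paper uses), or a genuine Lyapunov--Schmidt reduction: write $\phi = u + w$ with $u\in\ker\A_0^{\omega_{0,j}^-}$ and $w$ in the complement, solve the complementary equation for $w=w(u,\omega,\epsilon)$ by the implicit function theorem, and obtain a $2\times2$ matrix-valued function $\mathcal{M}(\omega,\epsilon)$ whose determinant must vanish. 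Since $\A_0^\omega = -(\omega-\Omega)I + L_0$ is affine in $\omega$, this reduction is straightforward, and your subsequent steps (2)--(5) go through unchanged; just be aware that $\mathcal{M}$ depends on $\omega$ as well as $\epsilon$, and the ``derivative-in-$\omega$'' correction you mention in step (2) is exactly what converts the determinant condition $\det\mathcal{M}(\omega,\epsilon)=0$ into the explicit formula for $\omega_{j+i}^-(\epsilon)$.
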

\begin{proof}
We assume $\omega_{0,j}^-\in \R$ is a double eigenvalue of $\G^{\omega}_{0}$ and let $V\subset \C$ be a complex neighbourhood containing no other eigenvalues of $\G^{\omega}_{0}$. We define the coefficients $a_1, a_2$ as
\begin{equation}
	a_l = \frac{1}{2\pi \iu}\tr\int_{\partial V} (\omega-\omega_{0,j}^-)^l\left(\G^{\bk,\omega}_\epsilon\right)^{-1} \frac{\d }{\d \omega}\G^{\bk,\omega}_\epsilon \d \omega, \quad l = 1,2.
\end{equation}
By the generalized argument principle (see, e.g. \cite[Theorem 1.14]{ammari2018mathematical} and \cite[Appendix D]{one-photon_bound}) , we know that $\omega_{j+i}^-$ satisfy
\begin{align}
	 (\omega_{j}^--\omega_{0,j}^-) + (\omega_{j+1}^- -\omega_{0,j}^-) &= a_1 \\ 
	 (\omega_{j}^--\omega_{0,j}^-)^2 + (\omega_{j+1}^- -\omega_{0,j}^-)^2 &= a_2,
\end{align}
for $i=0,1$, so that 
\begin{equation}\label{eq:omega12}
\omega_{j+i}^- -\omega_{0,j}^- = \frac{a_1 + (-1)^i \sqrt{2a_2 -a_1^2}}{2}.
\end{equation}
We have \cite[Proposition 7.2]{ammari2004splitting}
\begin{equation}\label{eq:a}
	a_l = \frac{1}{2\pi \iu}\sum_{p=1}^\infty\frac{l}{p}
\tr\int_{\partial V}(\omega-\omega_{0,j}^-)^{l-1}\left[\left(\A_0^\omega\right)^{-1}\left(\A_0^\omega - \G^{\bk,\omega}_\epsilon \right)\right]^p\d \omega,
\end{equation}
where, for $\omega$ in a neighbourhood of $\omega_{0,j}^-$,
\begin{equation}\left(\A_0^\omega \right)^{-1} = -\frac{\L}{(\omega-\omega_{0,j}^-)} + \sum_{n=0}^\infty (\omega-\omega_{0,j}^-)^nL_n.
\end{equation}
Here, $\L$ is an operator from the kernel of $\A_0^{\omega_{0,j}^-}$ onto itself, while $L_n,$ for $n\geq 0,$ vanishes on the kernel of $\A_0^{\omega_{0,j}^-}$. We pick a basis $\{\psi_1, \psi_2\}$ of $\ker\left(\A_0^{\omega_{0,j}^-}\right)$ as $\psi_1 = \left(\psi_{(1,1)},0\right), \psi_2 = \left(0, \psi_{(2,2)}\right)$, where $\psi_{(i,i)}$ is a normalized eigenmode of $\A^{\omega_{0,j}^-}_{0,(i,i)}$. In this basis, we have matrix representations 
\begin{equation}
	\L\E = \begin{pmatrix} e_1(\omega) & e_2(\omega) \\ {e_2}^*(\omega) & e_1(\omega) \end{pmatrix}, \quad \L\E L_0\E = \begin{pmatrix} l_1(\omega) & l_2(\omega) \\ {l_2}^*(\omega) & l_1(\omega) \end{pmatrix},
\end{equation}
where $\E = \bigl( \G^{\bk,\omega}_\epsilon - \A_0^{\omega}\bigr)$. A straight-forward calculation shows that, as $\epsilon \to 0$,
\begin{equation}\label{eq:exp}
	\frac{a_1 \pm \sqrt{2a_2 -a_1^2}}{2} = e_1(\omega_{0,j}^-) + \frac{1}{2}\frac{\p}{\p\omega}e_1(\omega_{0,j}^-)^2 + l_1(\omega_{0,j}^-) \pm |e_2(\omega_{0,j}^-)|+ O(\epsilon^3\log\epsilon).
\end{equation}
In the three-dimensional case, $\A^{\bk,\omega}_{2}$ is the leading term which depends on $\bk$:
\begin{equation}\A^{\bk,\omega}_{2,(i,j)}[\phi](\bx) = \begin{cases} \ds - \frac{g^2s_0}{c}\int_{B_j} \left(A^{\omega/c}_2(\bx-\by) + S^{\omega/c}(\bk) \right) \phi(\by)\d \by ,\quad &i=j, \\
	\ds - \frac{g^2s_0}{c} G^{\bk,\omega/c}(\bz_i-\bz_j)\int_{B_j} \phi(\by) \d \by, & i\neq j.
\end{cases}
\end{equation}
Observe that $e_1$ and $e_2$ are given by
\begin{equation}
	e_1(\omega_{0,j}^-) = \left\langle \psi_{1}, \bigl( \G_\epsilon^{\bk,\omega_{0,j}^-} - \A_{0}^{\omega_{0,j}^-}\bigr)\psi_{1}\right\rangle, \quad e_2(\omega_{0,j}^-) = \left\langle \psi_{1}, \bigl( \G_\epsilon^{\bk,\omega_{0,j}^-} - \A_{0}^{\omega_{0,j}^-}\bigr)\psi_{2}\right\rangle. 
\end{equation}
From \eqref{eq:expA} we then have
\begin{equation}
	e_1(\omega_{0,j}^-) = \epsilon\left\langle \psi_{1},\A^{\omega_{0,j}^-}_{1}\psi_{1}\right\rangle + \epsilon^2\left(  \tilde{C}_1 + \tilde{C}_2\log\epsilon\right) -  \epsilon^2\frac{g^2s_0}{c}U^2 S^{\omega_{0,j}^-}(\bk) + O(\epsilon^3\log\epsilon),
\end{equation}
where $U = \int_{B_1} \psi_{(1,1)}(\bx)\d \bx$ and $\tilde{C}_i$ denote constants with respect to $\epsilon$ and $\bk$. Furthermore, we have
\begin{equation}
	\frac{\p}{\p\omega}e_1(\omega_{0,j}^-)^2 = \epsilon^2\left(  \tilde{C}_3 + \tilde{C}_4\log\epsilon\right) + O(\epsilon^3\log\epsilon), \quad  l_1(\omega_{0,j}^-) = \epsilon^2\left(  \tilde{C}_5 + \tilde{C}_6\log\epsilon\right) + O(\epsilon^3\log\epsilon),
\end{equation}
and, finally,
\begin{equation}
	|e_2(\omega_{0,j}^-)| = \epsilon^2 \frac{g^2s_0}{c}\left|G^{\bk,\omega_{0,j}^-/c}(\bz_i-\bz_j)\right|U^2 +O(\epsilon^3\log\epsilon).
\end{equation}
Combining these expansions with \eqref{eq:exp} and \eqref{eq:omega12} proves the result.
\end{proof}

\begin{remark}
	The constants $C_1$ and $C_2$ can be explicitly computed. However, for the purpose of band function calculations, we omit this calculation and focus on the terms which depend on $\bk$.
\end{remark}

\begin{remark}\label{rmk:N13d}
The term $|e_2(\omega_{0,j}^-)|$ describes the hybridization between the inclusions. If $|e_2(\omega_{0,j}^-)| \neq 0$, the hybridization causes the eigenvalue $\omega_{0,j}^-$ to split into two eigenvalues $\omega_{j}^-$ and $\omega_{j+1}^-$. In the case of a single inclusion inside the unit cell ($N=1$), this term is not present, and we can show that there is a single eigenvalue $\omega_{j}^- = \omega_{j}^-(\bk, \epsilon)$ in a neighbourhood of $\omega_{0,j}^-$, which satisfies
\begin{multline}
	\omega_{j}^-(\bk,\epsilon) = \omega_{0,j}^- + \epsilon\langle \psi_1, \A_1^{\omega_{0,j}^-} \psi_1\rangle + \epsilon^2\left( C_1 + C_2\log(\epsilon) \right) -  \epsilon^2\frac{g^2s_0}{c}\left(\int_{B_1} \psi_{(1,1)}(\bx)\d \bx \right)^2  S^{\omega_{0,j}^-}(\bk) \\ + O(\epsilon^3\log\epsilon).
\end{multline}

\end{remark}

\begin{remark} \label{rmk:highbands}
	As pointed out in part I, there is a sequence of eigenvalues $\omega_{0,j}^-$ of $\A_0^\omega$ converging to $\Omega$ as $j\to \infty$ and such that $\omega_{0,j}^- \leq \Omega$. \Cref{thm:band} describes how these eigenvalues give rise to a sequence of band functions $\omega_{j}^-(\bk)$, accumulating at $\Omega$. In the periodic crystal, we will, in addition, have band functions $\omega_j^+(\bk)$ above $\Omega$. These bands are not captured by \Cref{thm:band}, and as $\epsilon \to 0$, any such band tends to the light cone $\omega = c|\bk + \bq|$ for some $\bq \in \Lambda^*$. In \Cref{sec:nonlin} we develop a formal method to compute such bands.
\end{remark}

Next, we consider the two-dimensional case. The leading term which depends on $\bk$ is now $\A^{\bk,\omega}_{1}$:
\begin{equation}
	\A^{\bk,\omega}_{1,(i,j)}[\phi](\bx) = \begin{cases} \ds - \frac{g^2s_0}{c}\int_{B_i} \left(A^{\omega/c}_1(\bx-\by) + S^{\omega/c}(\bk) \right) \phi(\by)\d \by ,\quad &i=j, \\
		\ds - \frac{g^2s_0}{c} G^{\bk,\omega/c}(\bz_i-\bz_j)\int_{B_j} \phi(\by) \d \by, & i\neq j.
	\end{cases}
\end{equation}
Observe that \eqref{eq:exp} and \eqref{eq:omega12} are not specific to the dimensionality, and are also valid in $d=2$. We then obtain the following result.
\begin{theorem}\label{thm:band2d}
	Assume $d=2$ and let $\omega_{0,j}^-\in \R\setminus\{0,\Omega\}$ be a double eigenvalue of $\A^{\omega}_0$ with corresponding normalized eigenmodes $\psi_1 = (\psi_{(1,1)}, 0)$, $\psi_2 = (0,\psi_{(2,2)})$. Assume that $\bk$ is fixed and such that $\omega_{0,j}^- \neq c|\bk+\bq|$ for all $\bq \in \Lambda^*$.  Then, for small enough $\epsilon$, there are two eigenvalues $\omega_{j+i}^- = \omega_{j+i}^-(\bk,\epsilon), i=0,1$, of $\G^{\bk,\omega}_\epsilon,$ satisfying
	\begin{multline}
		\omega_{j+i}^-(\bk,\epsilon) = \omega_{0,j}^- + \epsilon \left(C_1 + C_2\log \epsilon \right)\\ -  \epsilon\frac{g^2s_0}{c}\left(\int_{B_1} \psi_{(1,1)}(\bx)\d \bx \right)^2 \left( S^{\omega_{0,j}^-}(\bk) + (-1)^i \left| G^{\bk,\omega_{0,j}^-/c}(\bz_1-\bz_2)\right| \right)  + O(\epsilon^2\log\epsilon),
	\end{multline}
	for some constants $C_1, C_2$ which are independent of $\epsilon$, $i$  and $\bk$.
\end{theorem}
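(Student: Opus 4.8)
The plan is to follow the same Gohberg-Sigal strategy used in the proof of \Cref{thm:band}, changing only the bookkeeping of orders in $\epsilon$ that arises from the two-dimensional expansion \eqref{eq:Gexp}. First I would invoke \Cref{prop:pertQP} to localize: since $\omega_{0,j}^-$ is a double eigenvalue of $\A_0^\omega$, there are exactly two eigenvalues $\omega_{j+i}^-(\bk,\epsilon)$, $i=0,1$, of $\G_\epsilon^{\bk,\omega}$ collapsing to $\omega_{0,j}^-$ as $\epsilon\to 0$, and I pick a complex neighbourhood $V$ isolating $\omega_{0,j}^-$ from the rest of the spectrum of $\G_0^\omega$. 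The symmetry assumption (two identical inclusions) again forces the $2\times 2$ matrices $\L\E$ and $\L\E L_0\E$ into the form with equal diagonal entries $e_1$ (resp.\ $l_1$) and off-diagonal entry $e_2$ (resp.\ $l_2$), so the trace formulas \eqref{eq:omega12}–\eqref{eq:exp} carry over verbatim; as remarked in the text, \eqref{eq:exp} and \eqref{eq:omega12} are dimension-independent.

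The substantive step is the asymptotic evaluation of $e_1(\omega_{0,j}^-)$ and $e_2(\omega_{0,j}^-)$ using the expansion \eqref{eq:expA}. In $d=2$ the $\log\epsilon$ terms enter at order $\epsilon^{d-1}=\epsilon$, and the $\bk$-dependence enters through $\A_1^{\bk,\omega}$ rather than $\A_2^{\bk,\omega}$. Writing $\E = \G_\epsilon^{\bk,\omega} - \A_0^\omega = \sum_{n\geq 1}\epsilon^n\A_n^{\bk,\omega} + \sum_{n\geq 1}\epsilon^n\log\epsilon\,\B_n^\omega$, I compute
\begin{equation}
e_1(\omega_{0,j}^-) = \bigl\langle \psi_1, \E\psi_1\bigr\rangle = \epsilon\bigl\langle \psi_1, \A_1^{\bk,\omega_{0,j}^-}\psi_1\bigr\rangle + O(\epsilon^2\log\epsilon),
\end{equation}
and substituting the diagonal block of $\A_1^{\bk,\omega}$ displayed just before the theorem splits this into a $\bk$-independent piece $\epsilon(C_1 + C_2\log\epsilon)$ coming from $A_1^{\omega/c}$ together with the contribution $-\epsilon\frac{g^2 s_0}{c}U^2 S^{\omega_{0,j}^-}(\bk)$, where $U = \int_{B_1}\psi_{(1,1)}$ — here I use that $B^{\omega/c}_1$ produces only a further $\bk$-independent constant absorbed into $C_1, C_2$. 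For the off-diagonal term, only the $i\neq j$ block of $\A_1^{\bk,\omega}$ contributes at leading order, giving $|e_2(\omega_{0,j}^-)| = \epsilon\frac{g^2 s_0}{c}\bigl|G^{\bk,\omega_{0,j}^-/c}(\bz_1-\bz_2)\bigr|U^2 + O(\epsilon^2\log\epsilon)$. The terms $\frac{1}{2}\p_\omega e_1(\omega_{0,j}^-)^2$ and $l_1(\omega_{0,j}^-)$ from \eqref{eq:exp} are $O(\epsilon^2\log\epsilon)$ (since $e_1 = O(\epsilon\log\epsilon)$ makes $e_1^2 = O(\epsilon^2\log^2\epsilon)$, but $\p_\omega$ acts on holomorphic data and the dependence there is through $A_1^{\omega/c}$ whose $\omega$-derivative is again $O(\epsilon)$, giving a net $O(\epsilon^2)$; $l_1$ is genuinely $O(\epsilon^2)$ because $L_0$ vanishes on the kernel), hence they too are swallowed into the error. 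Plugging everything into \eqref{eq:exp}–\eqref{eq:omega12} gives exactly the claimed expansion.

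The main obstacle I anticipate is the careful justification that the error is genuinely $O(\epsilon^2\log\epsilon)$ and not, say, $O(\epsilon^2\log^2\epsilon)$: because the leading term of $e_1$ already contains a $\log\epsilon$, one must check that squaring and $\omega$-differentiating do not worsen the power of the logarithm beyond what \eqref{eq:exp} claims, and that the cross terms in the generalized-argument-principle series \eqref{eq:a} (the $p\geq 2$ terms) contribute at order $\epsilon^2\log\epsilon$ with the right logarithmic power. This is the same type of estimate carried out in \cite{ammari2004splitting} and in part I, so it should go through, but it is the place where the two-dimensional logarithms require genuine care rather than a verbatim copy of the three-dimensional argument. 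A secondary, purely bookkeeping point is to confirm that all the non-$\bk$-dependent $O(\epsilon)$ contributions — from $A_1^{\omega/c}$, from $\B_1^\omega$, and from the constant part of $S^{\omega/c}(\bk)$-independent pieces — can be consistently collected into the single pair of constants $C_1, C_2$ independent of $\bk$ and $i$, which follows because the only $\bk$- and $i$-dependence at order $\epsilon$ sits in $S^{\omega_{0,j}^-}(\bk)$ and $|G^{\bk,\omega_{0,j}^-/c}(\bz_1-\bz_2)|$ respectively.
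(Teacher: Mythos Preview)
Your proposal is correct and follows essentially the same approach as the paper: the paper's entire argument for \Cref{thm:band2d} is the one-line observation that \eqref{eq:omega12} and \eqref{eq:exp} are dimension-independent, so the computation of \Cref{thm:band} goes through with the $\bk$-dependence now sitting in $\A_1^{\bk,\omega}$ (displayed just before the theorem) rather than $\A_2^{\bk,\omega}$. Your concern about the logarithmic powers in the remainder is legitimate and is not addressed by the paper either; it is exactly the kind of bookkeeping that the cited references are meant to cover.
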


\subsection{One-dimensional crystals}
In the case of a single spatial dimension, we follow part I and take 
\begin{equation}\label{eq:rho01d}
	\rho_0(\epsilon) = -\frac{s_0}{\epsilon\log\epsilon}.
\end{equation}
Similarly to the previous section, this choice of scaling of $\rho_0$ will lead to band functions of order $O(1)$ as $\epsilon \to 0$. Then we have the asymptotic expansion \begin{equation}S_\epsilon\G_\epsilon^{k,\omega}S_\epsilon^{-1} = \sum_{n=0}^\infty \frac{\epsilon^n}{\log(\epsilon)}\A^{k,\omega}_n + \sum_{n=0}^\infty \epsilon^n\B^{\omega}_{n},\end{equation}
convergent in $\B(\L,\L)$. Here, $\B^{k,\omega}_{n}$ is independent of $k$ (and hence we omit this superscript) and is block-wise diagonal: 
\begin{align} \label{eq:A01d}
	\B^{\omega}_{0,(i,i)}[\phi](x) &= -(\omega-\Omega)\phi(x) -\frac{g^2s_0}{\pi c}\int_{B_i}\phi(y) \d y,
\end{align}
while $\A^{k,\omega}_0$ depends on $k$ and describes the interactions between the inclusions: 
\begin{equation}
	\A^{k,\omega}_{0,(i,j)}[\phi](x) = \begin{cases} \ds \frac{g^2s_0}{c}\int_{B_i} \left(A^{\omega/c}_0(x-y) + S^{\omega/c}(p) \right) \phi(y)\d y ,\quad &i=j, \\
	\ds \frac{g^2s_0}{c} G^{k,\omega/c}(z_i-z_j)\int_{B_j} \phi(y) \d y, & i\neq j.
\end{cases}
\end{equation}
As before, we assume that $N=2$ so that $D_\epsilon$ is a union of two intervals. In the one-dimensional case, we then have the following result.
\begin{theorem}\label{thm:band1d}
	Let $d=1$ and $N=2$. Assume that $\omega_{0,0}^-:=\Omega - \frac{g^2s_0|B_1|}{\pi c}\neq 0$ and let $k$ be fixed such that $\omega_{0}^- \neq c|k+q|$ for all $q \in \Lambda^*$. Then, for small enough $\epsilon$, there are two eigenvalues $\omega_{i}^- = \omega_{i}^-(k,\epsilon), i=0,1$, of $\G^{k,\omega}_\epsilon$, satisfying
	\begin{align}\label{eq:1d}
		\omega_{i}^-(\bk,\epsilon) &= \Omega - \frac{g^2s_0|B_1|}{\pi c} + \frac{C}{\log(\epsilon)} + \frac{g^2s_0|B_1|}{c\log(\epsilon)} \left( S^{\omega_{0}^-}(k) + (-1)^i \left| G^{k,\omega_{0,0}^-/c}(z_1-z_2)\right| \right)  + O(\epsilon),
	\end{align}
	where the constant $C = C(\omega_{0}^-)$ is independent of $\epsilon$, $i$ and $k$ and given by
	\begin{equation}
		C(\omega_{0,0}^-)= -\frac{g^2s_0|B_1|}{\pi c} \left(\gamma -\frac{3}{2} + \log\left(\frac{|\omega_{0,0}^-||B_1|}{c}\right)\right).
	\end{equation} 
\end{theorem}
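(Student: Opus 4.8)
The plan is to reproduce the strategy used in the proofs of \Cref{thm:band} and \Cref{thm:band2d}, adapted to the one-dimensional ``logarithmic'' scaling. One regards $\omega\mapsto\G_\epsilon^{k,\omega}$ as a holomorphic operator-valued function near a limiting characteristic value and extracts the two-fold splitting via the generalized argument principle, using the operator-norm-convergent expansion of $S_\epsilon\G_\epsilon^{k,\omega}S_\epsilon^{-1}$ recorded above. The essential difference from $d\in\{2,3\}$ is that the dominant term as $\epsilon\to0$ is now the block-diagonal operator $\B_0^\omega$ of \eqref{eq:A01d}, and the perturbation $\E:=S_\epsilon\G_\epsilon^{k,\omega}S_\epsilon^{-1}-\B_0^\omega=\tfrac{1}{\log\epsilon}\A_0^{k,\omega}+O(\epsilon)$ is of size $1/\log\epsilon$ rather than a power of $\epsilon$.

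First I would analyse the limiting operator. On its $i$th block $\B_0^\omega$ is the rank-one perturbation $\phi\mapsto -(\omega-\Omega)\phi-\tfrac{g^2s_0}{\pi c}\chi_{B_i}\int_{B_i}\phi$ of a multiple of the identity, so $\omega\mapsto\B_0^\omega$ fails to be invertible only at $\omega=\Omega$ (on the mean-zero subspace) and at $\omega_{0,0}^-=\Omega-g^2s_0|B_1|/(\pi c)$, where the kernel is spanned by the normalized constant modes $\psi_1=(\psi_{(1,1)},0)$, $\psi_2=(0,\psi_{(2,2)})$, $\psi_{(i,i)}=|B_i|^{-1/2}\chi_{B_i}$; here one uses $N=2$ with $|B_1|=|B_2|$, so the zero of $\B_0^\omega$ at $\omega_{0,0}^-$ has multiplicity two. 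Since $\omega_{0,0}^-\notin\{0,\Omega\}$ by hypothesis, a direct computation gives $\B_0^\omega\psi_i=-(\omega-\omega_{0,0}^-)\psi_i$ and $\B_0^\omega=-(\omega-\Omega)\,\mathrm{Id}$ on the orthogonal complement of the kernel, whence $(\B_0^\omega)^{-1}=-\tfrac{\L}{\omega-\omega_{0,0}^-}+\sum_{n\ge0}(\omega-\omega_{0,0}^-)^nL_n$ with $\L$ the identity on $\ker\B_0^{\omega_{0,0}^-}$, $L_0=(\pi c)/(g^2s_0|B_1|)$ on its complement, and $\L$, $L_n$ vanishing elsewhere.

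With this Laurent expansion, and the assumption $\omega_{0,0}^-\neq c|k+q|$ for all $q\in\Lambda^*$ (so $\G_\epsilon^{k,\omega}$, hence $\E$, is holomorphic in $\omega$ near $\omega_{0,0}^-$), the analogue of \Cref{prop:pertQP} yields two characteristic values $\omega_i^-=\omega_i^-(k,\epsilon)$, $i=0,1$, converging to $\omega_{0,0}^-$. The trace identities \eqref{eq:a}, the reduction to the $2\times2$ matrices $\L\E$ and $\L\E L_0\E$, and the splitting formula \eqref{eq:omega12}--\eqref{eq:exp} are algebraic consequences of the above Laurent/Taylor expansions and therefore transfer essentially verbatim; only the orders of the remainders change. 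Writing $e_1=\langle\psi_1,\E\psi_1\rangle$, $e_2=\langle\psi_1,\E\psi_2\rangle$, one obtains $\omega_i^--\omega_{0,0}^-=e_1(\omega_{0,0}^-)+(-1)^i|e_2(\omega_{0,0}^-)|+(\text{remainder})$, the terms $\tfrac12\partial_\omega e_1^2$ and $l_1$ of \eqref{eq:exp} being of lower order than the displayed corrections and absorbed into the remainder along with the $O(\epsilon)$ corrections to $\E$. Now evaluate: since $\A_0^{k,\omega}$ is block-diagonal and $\psi_{(1,1)}$ is constant, $\langle\psi_1,\A_0^{k,\omega}\psi_1\rangle=\tfrac{g^2s_0}{c}\bigl(\tfrac{1}{|B_1|}\int_{B_1}\int_{B_1}A_0^{\omega/c}(x-y)\,\d x\,\d y+|B_1|\,S^{\omega/c}(k)\bigr)$, which upon division by $\log\epsilon$ produces the $k$-dependent term $\tfrac{g^2s_0|B_1|}{c\log\epsilon}S^{\omega_0^-}(k)$ and the constant $C/\log\epsilon$ with $C=\tfrac{g^2s_0}{c|B_1|}\int_{B_1}\int_{B_1}A_0^{\omega_{0,0}^-/c}(x-y)\,\d x\,\d y$; likewise the off-diagonal block gives $\langle\psi_1,\A_0^{k,\omega}\psi_2\rangle=\tfrac{g^2s_0|B_1|}{c}G^{k,\omega/c}(z_1-z_2)$, hence the $(-1)^i|G^{k,\omega_{0,0}^-/c}(z_1-z_2)|$ term. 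Substituting into the splitting formula yields \eqref{eq:1d}.

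The main obstacle, beyond the routine transcription of the argument-principle machinery into the logarithmic gauge, is the explicit evaluation of $C$. This requires the precise near-origin expansion $A_0^{\omega/c}(x)=-\tfrac1\pi\log|x|+\kappa(\omega/c)$ of the one-dimensional Green's function recorded in part~I and \Cref{sec:sing} (which supplies the Euler constant $\gamma$ and the $\log(|\omega_{0,0}^-|/c)$ via $\kappa$), combined with the elementary identity $\int_{B_1}\int_{B_1}\log|x-y|\,\d x\,\d y=|B_1|^2(\log|B_1|-\tfrac32)$ for an interval $B_1$ (which supplies the $-\tfrac32$ and $\log|B_1|$). A final minor point is the sign bookkeeping with $\log\epsilon<0$ and the identification of the $\pm$ branch in \eqref{eq:omega12} with the $(-1)^i$ labelling of the two eigenvalues.
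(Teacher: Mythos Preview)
Your proposal is correct and follows precisely the approach the paper intends: the paper does not spell out a separate proof of \Cref{thm:band1d}, but supplies the one-dimensional operator expansion with $\B_0^\omega$ as the limiting block-diagonal operator and $(\log\epsilon)^{-1}\A_0^{k,\omega}$ as the leading correction, leaving the transcription of the Gohberg--Sigal/argument-principle machinery from the proof of \Cref{thm:band} to the reader---which is exactly what you carry out. Your explicit evaluation of $C$ via the interval identity $\int_{B_1}\int_{B_1}\log|x-y|\,\d x\,\d y=|B_1|^2(\log|B_1|-\tfrac32)$ combined with the near-origin form of $A_0^{\omega/c}$ goes slightly beyond what the paper writes down.
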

\begin{remark}
	With the scaling of $\rho_0$ as in \eqref{eq:rho01d} and with $N=2$, there are two band functions $\omega_0^-$ and $\omega_1^-$ captured by \eqref{thm:band1d}; all other band functions $\omega_j^-$ will tend to $\Omega$ as $\epsilon \to 0$.
\end{remark}

\begin{remark}
	As in \Cref{rmk:N13d}, we can phrase \Cref{thm:band1d} for the case of a single inclusion ($N=1$): in this case there is a single eigenvalue $\omega_{0}^- = \omega_{0}^-(k,\epsilon)$ satisfying 
	\begin{equation}\label{eq:1dN1}
		\omega_{0}^-(k,\epsilon) = \Omega - \frac{g^2s_0|B_1|}{\pi c} + \frac{C}{\log(\epsilon)} + \frac{g^2s_0|B_1|}{c\log(\epsilon)} S^{\omega_{0,0}^-}(p)+ O(\epsilon).
	\end{equation}
\end{remark}

	\subsection{Negative bands}\label{sec:neg}
	As shown in part I, some of the eigenvalues $\omega_{0,j}^-$ could happen to be negative. Remarkably, in this case, there are no Rayleigh singularities, and the corresponding asymptotic formulas are valid uniformly in $\bk$. We phrase the result in three dimensions; the other two cases follow analogously.
	\begin{cor}\label{thm:bandned3D}
		Assume $d=3$ and let $\omega_{0,j}^- <0$ be a double eigenvalue of $\A^{\omega}_0$ with  corresponding normalized eigenmodqes $\psi_1 = (\psi_{(1,1)}, 0)$, $\psi_2 = (0,\psi_{(2,2)})$. Then, for small enough $\epsilon$, there are two eigenvalues $\omega_{j+i}^- = \omega_{j+i}^-(\bk,\epsilon), i=0,1,$ of $\G^{\bk,\omega}_\epsilon,$ satisfying
		\begin{multline}
			\omega_{j+i}^-(\bk,\epsilon) = \omega_{0,j}^- + \epsilon\left\langle \psi_{1},\A^{\omega_{0,j}^-}_{1}\psi_{1}\right\rangle  + \epsilon^2\left( C_1 + C_2\log(\epsilon) \right)  \\ -  \epsilon^2\frac{g^2s_0}{c}\left(\int_{B_1} \psi_{(1,1)}(\bx)\d \bx \right)^2 \left( S^{\omega_{0,j}^-}(\bk) +(-1)^i \left| G^{\bk,\omega_{0,j}^-/c}(\bz_1-\bz_2)\right| \right)  + O(\epsilon^3\log\epsilon),
		\end{multline}
		valid uniformly for $\bk \in \T^d$, where the constants $C_1$ and $C_2$ are independent of $\epsilon$ and $\bk$.
	\end{cor}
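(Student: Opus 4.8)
\medskip
\noindent\textbf{Proof proposal.} The plan is to revisit the proof of \Cref{thm:band} and check that, when $\omega_{0,j}^- < 0$, every step goes through unchanged while all error estimates become uniform in $\bk \in \T^d$. The only place where the hypothesis $\omega_{0,j}^- \neq c|\bk+\bq|$ was used there was to guarantee that the pseudo-periodic Green's function $G^{\bk,\omega/c}$, the lattice sum $S^{\omega/c}(\bk)$ of \eqref{eq:S}, and hence the operator expansion \eqref{eq:expA}, are well defined and holomorphic in $\omega$ near $\omega_{0,j}^-$. So the first step is the elementary but crucial observation that this hypothesis is \emph{automatic} in the negative regime: choose a fixed complex neighbourhood $V \subset \C$ of $\omega_{0,j}^-$ with $\Re \omega < 0$ throughout $V$ and containing no other eigenvalue of the limiting operator $\A_0^\omega$; since $c|\bk+\bq| \geq 0$ for every $\bk \in \R^d$ and $\bq \in \Lambda^*$, we have $\omega \neq c|\bk+\bq|$ for all $(\bk,\omega) \in \T^d \times V$. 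In other words, no Rayleigh singularity can sit near a negative eigenvalue.

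The second step is to extract the uniformity that this buys. For $\Re \omega < 0$ the decomposition of $G^{\bk,\omega/c}$ recalled in \Cref{sec:G} has no propagating Helmholtz part obstructing convergence, and the decay estimate $G^{\omega/c}(\bx) = O(|\bx|^{-(d+1)})$ shows that the series defining $S^{\omega/c}(\bk)$, the value $G^{\bk,\omega/c}(\bz_1-\bz_2)$, and the coefficient operators $\A_n^{\bk,\omega}$, $\B_n^\omega$ appearing in \eqref{eq:expA} all converge absolutely, are jointly continuous in $(\bk,\omega) \in \T^d \times V$, and are holomorphic in $\omega$. By compactness of $\T^d \times \partial V$, the operator-norm bounds on these coefficients and the radius of convergence of \eqref{eq:expA} may be chosen independent of $\bk$; consequently \Cref{prop:pertQP} and the refinement producing \eqref{eq:omega12} hold with $\bk$-uniform constants.

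The third step is to rerun the Gohberg--Sigal / generalized argument principle computation of the proof of \Cref{thm:band} with the now $\bk$-independent contour $\partial V$: the coefficients $a_1, a_2$, the Laurent expansion of $(\A_0^\omega)^{-1}$ about $\omega_{0,j}^-$, and the matrix entries $e_1, e_2, l_1$ are produced verbatim, and \eqref{eq:exp} together with \eqref{eq:omega12} reproduces the asserted formula. Its right-hand side depends on $\bk$ only through $S^{\omega_{0,j}^-}(\bk)$ and $|G^{\bk,\omega_{0,j}^-/c}(\bz_1-\bz_2)|$, both bounded on $\T^d$, and every other term, including the $O(\epsilon^3\log\epsilon)$ remainder, is controlled by the $\bk$-uniform bounds of the second step. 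Hence \eqref{eq:thm3d} holds uniformly for $\bk \in \T^d$ with the same constants $C_1, C_2$.

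The main obstacle I anticipate is purely bookkeeping: one must confirm that the implicit constant in the $O(\epsilon^3\log\epsilon)$ term of \eqref{eq:exp} --- which comes from truncating the series \eqref{eq:a} and \eqref{eq:expA} and from the argument-principle integral over $\partial V$ --- does not degenerate as $\bk$ ranges over $\T^d$. The only mechanism that could cause such degeneration, namely $\omega$ approaching a Rayleigh singularity $c|\bk+\bq|$, has already been excluded uniformly; once that is granted, joint continuity in $(\bk,\omega)$ together with compactness of $\T^d$ upgrades every estimate in the proof of \Cref{thm:band} to a $\bk$-uniform one. The two- and one-dimensional statements follow in exactly the same way from the proofs of \Cref{thm:band2d} and \Cref{thm:band1d}.
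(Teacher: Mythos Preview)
Your proposal is correct and takes essentially the same approach as the paper: the key observation is that $\omega_{0,j}^- < 0$ automatically rules out every Rayleigh singularity $c|\bk+\bq| \ge 0$, so \Cref{thm:band} applies for every $\bk \in \T^d$, and compactness of $\T^d$ upgrades the pointwise expansion to a uniform one. The paper's own proof is in fact just these two sentences; your additional bookkeeping on joint continuity of the coefficient operators and the $\bk$-independent contour is precisely the content needed to make the paper's brief appeal to compactness rigorous.
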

\begin{proof}
	Since $\omega_{0,j}^- < 0$, the condition $\omega_{0,j}^- \neq c|\bk+\bq|$ holds for all $\bk \in \T^d$. Since the asymptotic formula \eqref{eq:thm3d} is valid point-wise for all $\bk$ in the compact domain $\T^d$, it follows that it is valid uniformly in $\bk$.
\end{proof}

	The next corollary follows from the fact that the spectrum of $\A_0^\omega$ is discrete, so that bands corresponding to distinct eigenvalues will be well-separated.
	\begin{cor}\label{thm:bandgap}
	Under the assumptions of \Cref{thm:bandned3D}, and for $\epsilon$ small enough, there are band gaps below $\omega_j^-$ and above $\omega_{j+1}^-$. If, additionally, $\left| G^{\bk,\omega_{0,j}^-/c}(\bz_1-\bz_2)\right|\neq 0$ for all $\bk \in \T^d$, there is a band gap between $\omega_{j}^-$ and $\omega_{j+1}^-$ for small enough $\epsilon$.
	\end{cor}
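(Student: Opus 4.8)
\emph{Strategy.} The plan is to read everything off the uniform‑in‑$\bk$ expansions of \Cref{thm:bandned3D}, the discreteness of $\sigma(\A^{\omega}_0)$, and the Gohberg--Sigal counting already invoked for \Cref{prop:pertQP}. What makes all the estimates uniform in $\bk$ is precisely the sign hypothesis $\omega_{0,j}^-<0$: there is then no Rayleigh singularity near $\omega_{0,j}^-$, so on a fixed complex neighbourhood $V$ of $\omega_{0,j}^-$ (chosen inside the open left half‑line) the map $\omega\mapsto\G^{\bk,\omega}_\epsilon$ is holomorphic for \emph{every} $\bk\in\T^d$, and, by \eqref{eq:expA}, $\G^{\bk,\omega}_\epsilon-\A^{\omega}_0\to 0$ in $\B\bigl(\L^2(B)\bigr)$ as $\epsilon\to0$ uniformly for $(\bk,\omega)\in\T^d\times\overline V$ --- the $\bk$‑dependence of the coefficients entering only through $S^{\omega/c}(\bk)$ and $G^{\bk,\omega/c}(\bz_i-\bz_j)$, which are continuous, hence uniformly bounded, on the compact torus for $\omega<0$.

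\emph{Gaps below $\omega_j^-$ and above $\omega_{j+1}^-$.} Since $L_0$ is compact, $\omega_{0,j}^-$ is isolated in $\sigma(\A^{\omega}_0)$; put $d_0=\operatorname{dist}\bigl(\omega_{0,j}^-,\ \sigma(\A^{\omega}_0)\setminus\{\omega_{0,j}^-\}\bigr)>0$ and fix a contour $\Gamma\subset V$ around $\omega_{0,j}^-$ of radius $<d_0/2$. By the generalized Rouch\'e theorem for holomorphic operator‑valued functions (see, e.g., \cite[Theorem 1.15]{ammari2018mathematical} and \cite{Gohberg1971}) together with the uniform convergence above, there is $\epsilon_0>0$ such that for $0<\epsilon<\epsilon_0$ and all $\bk\in\T^d$ the function $\omega\mapsto\G^{\bk,\omega}_\epsilon$ has exactly two eigenvalues inside $\Gamma$ --- the multiplicity of $\omega_{0,j}^-$ as an eigenvalue of $\A^{\omega}_0$ --- namely $\omega_j^-(\bk,\epsilon)$ and $\omega_{j+1}^-(\bk,\epsilon)$. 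Shrinking $\epsilon_0$ and using \Cref{thm:bandned3D} we get $|\omega_{j+i}^-(\bk,\epsilon)-\omega_{0,j}^-|\le d_0/4$ for all $\bk$ and $i=0,1$, so by \Cref{theorem_1},
\[
\sigma(H)\cap\left(\omega_{0,j}^--\tfrac{d_0}{2},\ \omega_{0,j}^-+\tfrac{d_0}{2}\right)=\bigcup_{\bk\in\T^d}\{\omega_j^-(\bk,\epsilon),\,\omega_{j+1}^-(\bk,\epsilon)\}\subset\left(\omega_{0,j}^--\tfrac{d_0}{4},\ \omega_{0,j}^-+\tfrac{d_0}{4}\right),
\]
and the nonempty intervals $\bigl(\omega_{0,j}^--\tfrac{d_0}{2},\ \min_{\bk}\omega_j^-(\bk,\epsilon)\bigr)$ and $\bigl(\max_{\bk}\omega_{j+1}^-(\bk,\epsilon),\ \omega_{0,j}^-+\tfrac{d_0}{2}\bigr)$ lie in the resolvent set of $H$, giving the asserted gaps.

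\emph{Gap between $\omega_j^-$ and $\omega_{j+1}^-$.} Subtract the $i=1$ and $i=0$ instances of the expansion \eqref{eq:thm3d} quoted in \Cref{thm:bandned3D}: the terms $\omega_{0,j}^-$, $\epsilon\langle\psi_1,\A^{\omega_{0,j}^-}_1\psi_1\rangle$, $\epsilon^2(C_1+C_2\log\epsilon)$, and the $S^{\omega_{0,j}^-}(\bk)$‑contribution are $i$‑independent and cancel, leaving
\[
\omega_{j+1}^-(\bk,\epsilon)-\omega_j^-(\bk,\epsilon)=2\epsilon^2\frac{g^2s_0}{c}\left(\int_{B_1}\psi_{(1,1)}(\bx)\,\d\bx\right)^{2}\left|G^{\bk,\omega_{0,j}^-/c}(\bz_1-\bz_2)\right|+O\!\left(\epsilon^3\log\epsilon\right),
\]
with error uniform in $\bk$. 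Assuming $\int_{B_1}\psi_{(1,1)}\neq0$ (otherwise the splitting is governed by higher‑order terms of the expansion), and using that $\bk\mapsto|G^{\bk,\omega_{0,j}^-/c}(\bz_1-\bz_2)|$ is continuous on $\T^d$ and, by hypothesis, nowhere zero --- hence bounded below by some $c_0>0$ --- we obtain $\omega_{j+1}^-(\bk,\epsilon)-\omega_j^-(\bk,\epsilon)\ge\epsilon^2\tfrac{g^2s_0}{c}\bigl(\int_{B_1}\psi_{(1,1)}\bigr)^{2}c_0>0$ for all $\bk$ once $\epsilon$ is small, so the two band sheets separate. Feeding in the explicit leading $\bk$‑dependence $-\epsilon^2\tfrac{g^2s_0}{c}\bigl(\int_{B_1}\psi_{(1,1)}\bigr)^{2}\bigl(S^{\omega_{0,j}^-}(\bk)\pm|G^{\bk,\omega_{0,j}^-/c}(\bz_1-\bz_2)|\bigr)$ of the two bands upgrades this to $\max_{\bk}\omega_j^-(\bk,\epsilon)<\min_{\bk}\omega_{j+1}^-(\bk,\epsilon)$ --- a genuine spectral gap --- whenever $\min_{\bk}\bigl(S^{\omega_{0,j}^-}(\bk)+|G^{\bk,\omega_{0,j}^-/c}(\bz_1-\bz_2)|\bigr)>\max_{\bk}\bigl(S^{\omega_{0,j}^-}(\bk)-|G^{\bk,\omega_{0,j}^-/c}(\bz_1-\bz_2)|\bigr)$, e.g. as soon as $2c_0$ exceeds the oscillation of $S^{\omega_{0,j}^-}$ over $\T^d$.

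\emph{Main obstacle.} The delicate point throughout is uniformity in $\bk$: both the Rouch\'e count and the $O(\epsilon^3\log\epsilon)$ remainder must hold with constants independent of $\bk\in\T^d$, which is exactly where $\omega_{0,j}^-<0$ is used, since the absence of Rayleigh singularities lets $V$, $\Gamma$ and the uniform operator bounds be fixed once and for all. The secondary difficulty is that the third gap, in the strict sense $\sup_\bk\omega_j^-<\inf_\bk\omega_{j+1}^-$, needs the hybridization splitting to dominate the $\bk$‑dispersion of the bands, i.e. a quantitative comparison between $\min_{\bk}|G^{\bk,\omega_{0,j}^-/c}(\bz_1-\bz_2)|$ and the oscillation of the lattice sum $S^{\omega_{0,j}^-}(\cdot)$, rather than the purely qualitative hypothesis $|G^{\bk,\omega_{0,j}^-/c}(\bz_1-\bz_2)|\neq0$.
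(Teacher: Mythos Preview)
Your treatment of the first assertion --- gaps below $\omega_j^-$ and above $\omega_{j+1}^-$ --- is correct and is a fleshed‑out version of the paper's one‑line justification: the paper simply says the corollary ``follows from the fact that the spectrum of $\A_0^\omega$ is discrete, so that bands corresponding to distinct eigenvalues will be well‑separated.'' Your uniform Rouch\'e count plus the compactness of $\T^d$ is exactly how one makes that sentence precise.

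On the second assertion your argument for the pointwise separation $\omega_{j+1}^-(\bk,\epsilon)-\omega_j^-(\bk,\epsilon)\ge c\,\epsilon^2>0$ is correct, and this is presumably what the authors have in mind. You then make a sharper observation that the paper does not: because both the splitting term $|G^{\bk,\omega_{0,j}^-/c}(\bz_1-\bz_2)|$ \emph{and} the band dispersion governed by $S^{\omega_{0,j}^-}(\bk)$ enter at the same order $\epsilon^2$, the hypothesis $|G|\neq 0$ alone does not obviously force the strict spectral‑gap inequality $\max_\bk\omega_j^-<\min_\bk\omega_{j+1}^-$; one would additionally need $2\min_\bk|G^{\bk,\omega_{0,j}^-/c}(\bz_1-\bz_2)|$ to exceed the oscillation of $S^{\omega_{0,j}^-}$ over $\T^d$. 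The paper's stated reasoning (discreteness of $\sigma(\A_0^\omega)$) applies only to the first assertion, not to the gap \emph{between} the two sheets emanating from the \emph{same} limiting eigenvalue, so you have correctly isolated a point the paper leaves implicit. If one reads ``band gap between $\omega_j^-$ and $\omega_{j+1}^-$'' as ``the two band sheets are everywhere separated'' your proof is complete; if one insists on a genuine interval in the resolvent set, your extra quantitative condition is indeed needed.

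One small caveat: your argument tacitly assumes $\int_{B_1}\psi_{(1,1)}\neq 0$. You flag this, but note that the paper's statement does not include it as a hypothesis either; strictly speaking it is needed for the $\epsilon^2$‑splitting to be nonzero.
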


\begin{remark}
	In the case of a negative band function, the corresponding Bloch modes are linear combinations of bound states of each individual inclusion. In this situation, the asymptotic analysis of this section can be understood as a tight-binding approximation. In the case of positive band functions, the eigenmodes are no longer bound states. Even though there is no tight-binding approximation in this case, the asymptotic analysis of \Cref{sec:3d2s} can still be used to compute the band functions.
\end{remark}
In \Cref{fig:bandsNeg} we illustrate a case in $d=1$ where there are negative band functions. Compared to \Cref{fig:bandsb}, we see that there is no singularity of the asymptotic formula.

\begin{figure}[htb]
	\centering
	\includegraphics[width=0.5\linewidth]{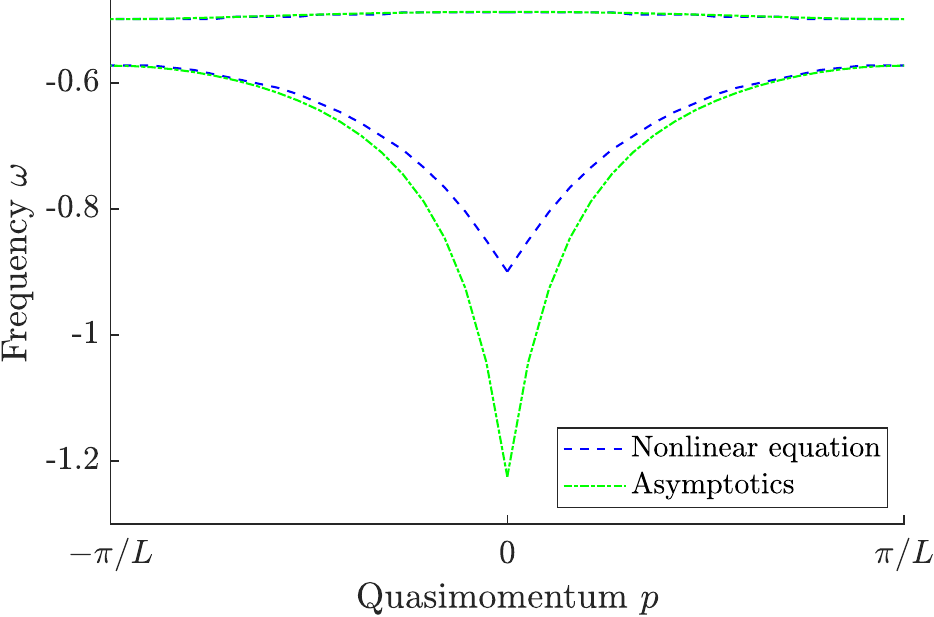}
	\caption{Band functions in $d=1$ and in the case of a pair of inclusions ($N=2$) of radius $R=10^{-4}$ with $\Omega = 0.1$. With these parameters, $\omega_{0,0}^- $ is negative so we have two band functions with negative frequencies. In this case, the asymptotic formula \Cref{fig:bandsb} does not have a Rayleigh singularity, and is valid for all $\bk \in \T$.}\label{fig:bandsNeg}
\end{figure}

\section{Rayleigh singularity and lattice sum regularization}\label{sec:nonlin}
	At the Raighleigh singularities $(\omega,\bk)$ where $\omega = c|\bk+\bq|$ for some $\bq\in \Lambda^*$, the lattice sum $S^{\omega/c}(\bk)$ diverges. The asymptotic expansions given in \Cref{thm:band}, \Cref{thm:band2d} and \Cref{thm:band1d} are in general not valid uniformly $\bk$, and will break around these singularities. The goal of this section will be to develop a formal approach to overcome this singularity and find the band functions close to the Rayleigh singluarities.
	
	\begin{figure}[htb]
		\centering
		\begin{subfigure}[t]{0.32\linewidth}
			\includegraphics[width=\linewidth]{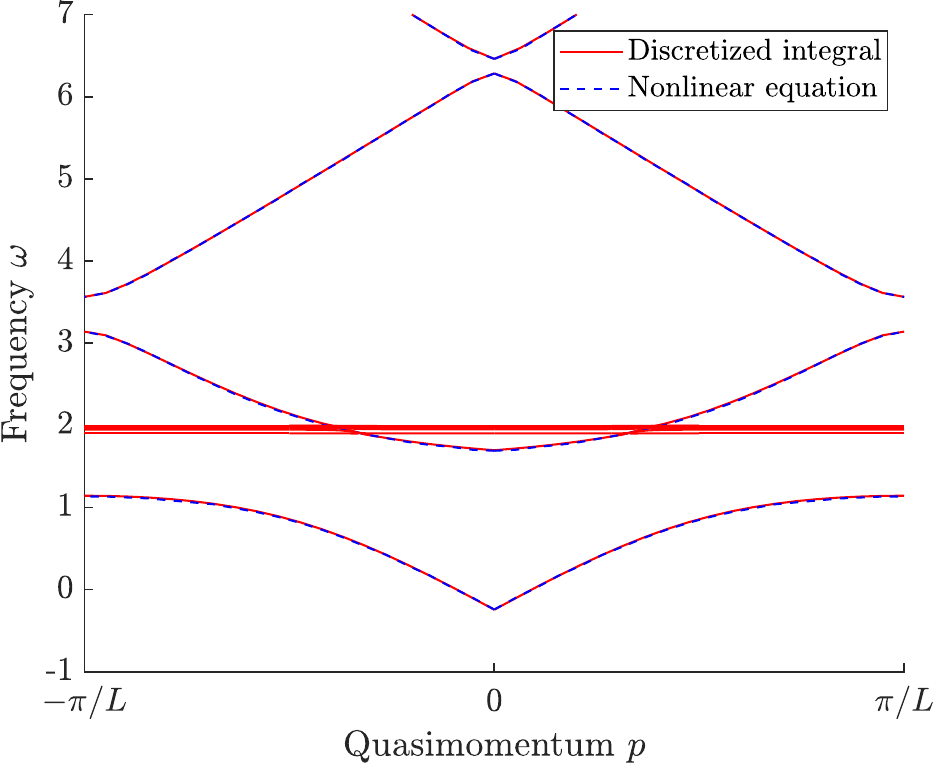}
			\caption{Band structure.}\label{fig:bandsa}
		\end{subfigure}\hfill
		\begin{subfigure}[t]{0.32\linewidth}
			\includegraphics[width=\linewidth]{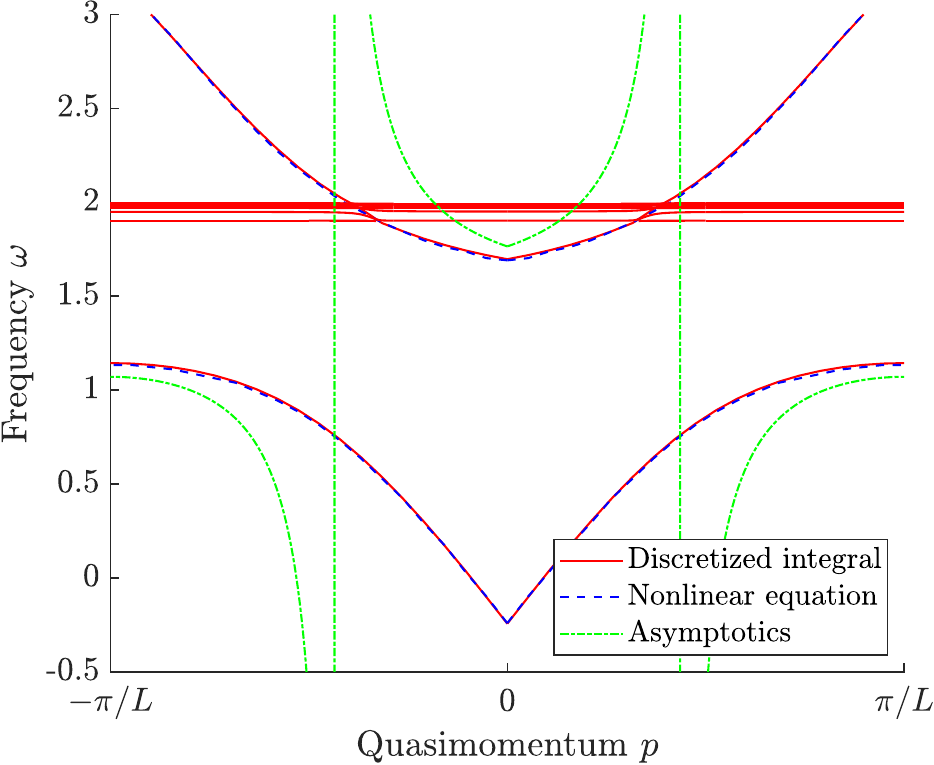}
			\caption{Comparison with the asymptotic formula \eqref{eq:1dN1}.}\label{fig:bandsb}
		\end{subfigure}\hfill	
		\begin{subfigure}[t]{0.32\linewidth}
			\includegraphics[width=\linewidth]{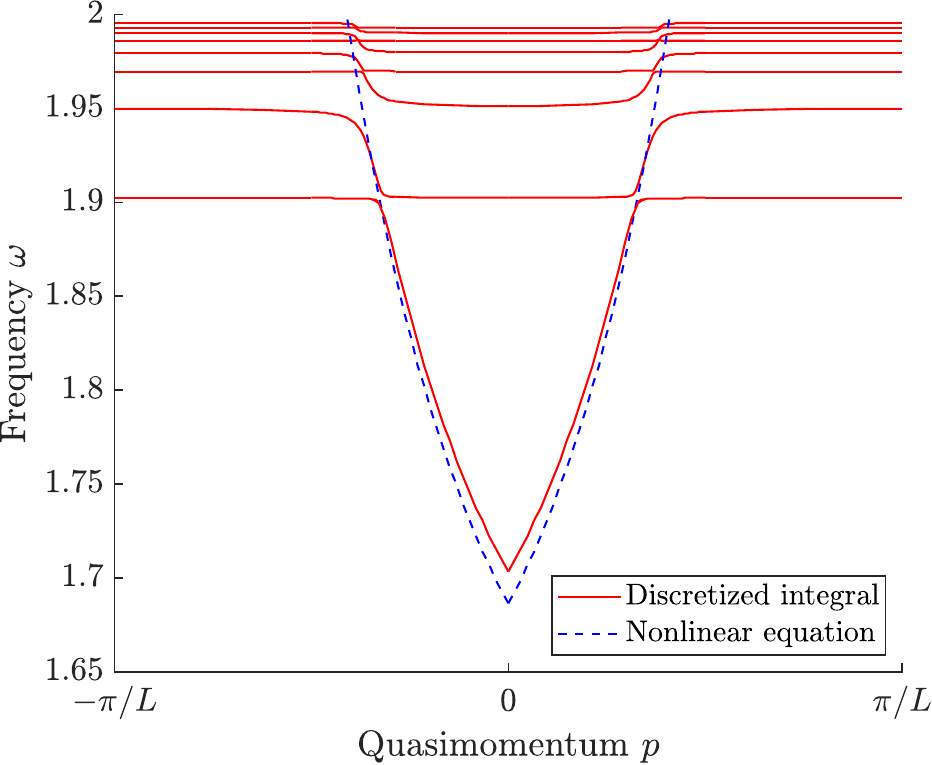}
			\caption{Close-up showing the bands which accumulate at $\Omega$.}\label{fig:bandsc}
		\end{subfigure}
		\caption{Band functions in $d=1$ and in the case of a single inclusion ($N=1$) of radius $R=0.01$ with $\Omega = 2$. Here, we compare results obtained by discretizing the integral equation \eqref{eq:Akw} with the results obtained by solving the nonlinear equation \eqref{eq:1dnonlin}. As seen in \Cref{fig:bandsa} and \Cref{fig:bandsc}, the nonlinear equation approximates well the first and the higher bands, but fails to  capture the accumulation of bands at $\Omega$. \Cref{fig:bandsb} shows that the asymptotic formula \eqref{eq:1d} diverges at the Rayleigh singularity $\omega = c|k|$, but provides an asymptotic approximation away from these points. Here, and throughout this work, we use $c=1$, $L=1$, $s_0=1$ and $g=1$.} \label{fig:bands}
	\end{figure}

	\subsection{Lattice sum regularization method in one dimension}\label{sec:lattice1d}
	The main approach to overcome the Rayleigh singularity is by formally phrasing the asymptotic expansions as nonlinear equations. We exemplify the approach in the case $d=1$ and $N=1$. In place of \eqref{eq:1dN1}, we approximate the eigenvalues $\omega = \omega_{j}^\pm$ by the roots of the nonlinear equation
	\begin{align}\label{eq:1dnonlin}
		\omega &= \Omega - \frac{g^2s_0|B_1|}{\pi c} + \frac{C(\omega)}{\log(\epsilon)} + \frac{g^2s_0|B_1|}{c\log(\epsilon)} S^{\omega}(k).
	\end{align}
In \Cref{fig:bands} we compare numerically computed values, computed asymptotically through \eqref{eq:1dN1} and \eqref{eq:1dnonlin}, as well as through discretization of the original integral equation. In \Cref{fig:bandsb} we see that there are singular points where \eqref{eq:1dN1} breaks down, whereas \eqref{eq:1dnonlin} is not affected by these singularities. Additionally, the first and the higher bands can be approximated remarkably well through this approach.

In the limit $\epsilon \to 0$, we can further simplify the solutions to the nonlinear equation \eqref{eq:1dnonlin}. We proceed formally. There are two types of solutions: either $\omega = \omega_j^-$ is away from the Rayleigh singularity and we have
\begin{equation}\omega_j^- \approx \Omega - \frac{g^2s_0|B_1|}{\pi c},\end{equation}
as $\epsilon \to 0$. Since the lattice sum $S^{\omega/c}(k)$ is unbounded for $\omega$ close to the light cone, there is a second type of solution $\omega = \omega_j^\pm(k)$ of the form
\begin{equation}\omega_j^\pm(k) = c|k+q_j| + \alpha,\end{equation}
for some $q_j \in \Lambda^*$ and some $\alpha \ll 1$. Based on \eqref{eq:Sdecomp} and  \eqref{eq:Sh}, we have a singularity of $S^{\omega/c}_\mathrm{helm}$ while $S^{-\omega/c}$ is bounded. We find from \eqref{eq:Sh} and \eqref{eq:1dnonlin} that $\alpha$ has the form
\begin{equation}
	\alpha \approx -\frac{\alpha_0}{\log\epsilon},
\end{equation}
for some constant $\alpha_0$. We then have 
\begin{equation}
	\cos\left(\frac{\omega L}{c}\right) - \cos(kL) \approx \frac{\alpha_0}{\log\epsilon}\sin(|k+q_j|L).
\end{equation}
Since $C(\omega)$ is bounded for $\omega$ in a neighbourhood of the light cone, we find from \eqref{eq:1dnonlin} that 
\begin{equation}
	\Omega - \frac{g^2s_0|B_1|}{\pi c} + \frac{g^2s_0|B_1|}{c\alpha_0} - c|k+q_j| \approx 0.
\end{equation}
Solving for $\alpha_0$, we find that 
\begin{equation}
	\alpha_0 =  \frac{g^2s_0|B_1|}{c\left(c|k+q_j| - \left(\Omega - \frac{g^2s_0|B_1|}{\pi c}\right)\right)}.
\end{equation}
In total, we approximate the band function $\omega_j^\pm$ by
\begin{equation}\label{eq:approx}
	\omega_j^\pm(k) -c|k+q_j| \approx -   \frac{g^2s_0|B_1|}{c\log\epsilon \left(c|k+q_j| - \left(\Omega - \frac{g^2s_0|B_1|}{\pi c}\right)\right)}.
\end{equation}
In \Cref{fig:nonlin} we numerically illustrate the approximation \eqref{eq:approx} at a fixed $k$. For larger $j$, and hence larger $q_j$, the coefficient $\alpha_0$ decreases and the band functions approach the light cone. For the first band, there is a large discrepancy between the asymptotic approximation \eqref{eq:approx} and the nonlinear equation \eqref{eq:1dnonlin}, but the higher bands are well approximated by \eqref{eq:approx}. In these crude asymptotics we only account for corrections of order $(\log\epsilon)^{-1}$, and more accurate approximations can be obtained with higher-order asymptotics.

\begin{figure}[htb]
	\centering
		\includegraphics[width=0.6\linewidth]{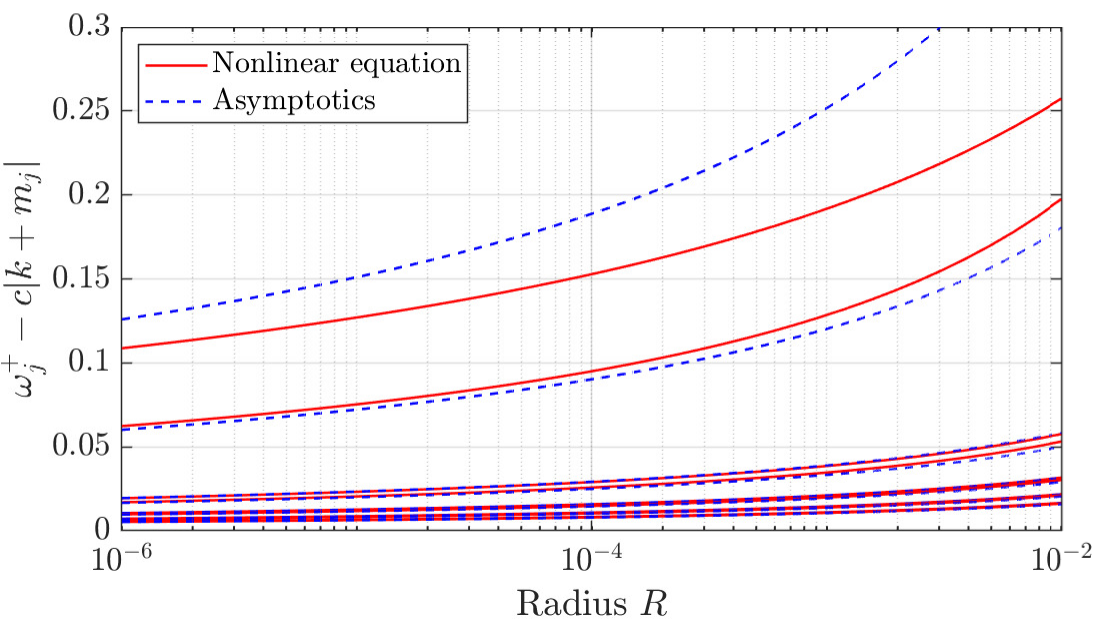}
		\caption{Band functions at fixed $k = 0.9\pi/L$ for varying $R$, in the case $d=1$ and a single inclusion $N=1$, computed through the asymptotic approximation \eqref{eq:approx} and the nonlinear equation \eqref{eq:1dnonlin}. The first band ($q_j = 0$, highest in this figure) shows a large discrepancy between \eqref{eq:approx} and \eqref{eq:1dnonlin}, while the higher bands are closely approximated by \eqref{eq:approx}.}
		\label{fig:nonlin}
\end{figure}

\begin{remark}
	As pointed out in \Cref{rmk:highbands}, the asymptotic formulas provided in \Cref{thm:band}, \Cref{thm:band2d} and \Cref{thm:band1d} fail to capture the bands $\omega_j^+$ above $\Omega$, which in the limit $\epsilon \to 0$ tend to the light cone $\omega = c|\bk+\bq|$ for $\bq\in \Lambda^*$. Nevertheless, these bands can be approximated through the approach of nonlinear equations as outlined above. As we see in \Cref{fig:bandsa}, this approach provides a good approximation even for the higher bands, and shows that there are band gaps which open around the center and edge of the Brillouin zone.
\end{remark}

\begin{remark}
	As demonstrated in \Cref{fig:bandsb}, the frequencies given by the asymptotic formula \eqref{eq:1d} belong to different band functions on either side of the singular point $\omega = c|k|$. For nonzero $\epsilon$, there will be a band gap between these bands. Moreover, for small enough $\epsilon$, the second of these bands falls below $\Omega$. In this case, \Cref{fig:bandsc} shows that there is a solution to the nonlinear equation \eqref{eq:1dnonlin} which passes through $\Omega$. This causes a hybridization between the sequence of bands which accumulate at $\Omega$, resulting in the opening of tiny band gaps. In contrast to \Cref{thm:bandgap_general}, there is no band gap above $\Omega$, rather, there is a band function emerging arbitrarily close to $\Omega$. 
\end{remark}

\subsection{Two dimensions}\label{sec:lattice2d}
In two dimensions, the zeroth-order lattice sum is given by \cite{linton2010lattice} 
\begin{equation}
	S^{\omega/c}_\mathrm{helm}(\bk) = \frac{c^2}{|Y|(\omega^2 - c^2|\bk|^2)} + \frac{1}{|Y|}\sum_{\bq \in \Lambda^*\setminus \{0\}}\frac{c^2|\bq|^2+\omega^2-c^2|\bk+\bq|^2}{|\bq|^2(\omega^2-c^2|\bk+\bq|^2)} + C(\omega),
\end{equation}
where $C(\omega)$ is constant with respect to $\bk$. If $\omega_j$ is close to $c|\bk + \bq_j|$, we have
\begin{equation}\label{eq:Ssing2D}
	S^{\omega/c}_\mathrm{helm}(\bk) =
		\frac{c^2}{|Y|(\omega^2 - c^2|\bk+\bq_j|^2)} +  \widetilde{C}_j(\omega),
\end{equation}
for some $\widetilde{C}_j(\omega)$ which is bounded for $\omega$ in a neighbourhood of $c|\bk + \bq_j|$.

We repeat the approach of \Cref{sec:lattice1d}. In the two-dimensional case, we need to go back to the operator formulation. Take for simplicity $N=1$. Again, we assume 
\begin{equation}\omega_j^\pm(\bk) = c|\bk+\bq_j| + \alpha,\end{equation}
for some $\bq_j \in \Lambda^*$ and some $\alpha \ll 1$. In order to have solutions of order $O(1)$ as $\epsilon \to 0$, we require 
\begin{equation}\label{eq:alpha}
	\alpha = \epsilon \alpha_0,
\end{equation}
for some constant $\alpha_0$. In this case, the singularity of $S_\mathrm{helm}$ causes the operator $\epsilon\A_1^{\bk,\omega}$ to have order $O(\epsilon^{-1})$ as $\epsilon \to 0$. More specifically, 
\begin{equation}
	\A_1^{\bk,\omega}[\phi](\bx) = - \frac{g^2s_0\omega S^{\omega/c}_\mathrm{helm}(\bk)}{|Y|c^2}\int_{B}  \phi(\by)\d \by + O(1).
\end{equation}
Combined with \eqref{eq:expA} and \eqref{eq:Ssing2D}, we find that the full operator $\G^{\bk,\omega}_\epsilon$ satisfies 
\begin{equation}
	S_\epsilon\G^{\bk,\omega}_\epsilon S_\epsilon^{-1} = \widetilde{\A}_0 + O(\epsilon),
\end{equation}
where the limiting operator $\widetilde{\A}_0 =\widetilde{\A}_0(\alpha_0)$ is defined as
\begin{equation}
	 \widetilde{\A}_0[\phi](x) = -(c|\bk+\bq_j|-\Omega)\phi(\bx) -\frac{g^2s_0}{2\pi c}\int_{B} \frac{\phi(\by)}{|\bx-\by|} \d \by 	 - \frac{g^2s_0}{\alpha_0|Y|}\int_{B}  \phi(\by)\d \by.
\end{equation}
We assume $c|\bk+\bq_j|$ is not an eigenvalue of $\A_0^\omega$, so that $(\A_0^{c|\bk+\bq_j|})^{-1}$ exists and is bounded. In this case,  $\widetilde{\A}_0[\phi] = 0$ is equivalent to 
\begin{equation}
	\alpha_0 \phi(\bx) - \frac{g^2s_0}{|Y|}\int_{B}  \left(\A_0^{c|\bk+\bq_j|}\right)^{-1}\phi(\by)\d \by = 0.
\end{equation}
In other words, $\alpha_0$ is the eigenvalue of a rank-1 operator, and is given by
\begin{equation}\label{eq:alpha0}
	\alpha_0 = \frac{g^2s_0}{|Y|}\int_{B}  \left(\A_0^{c|\bk+\bq_j|}\right)^{-1}[\chi_B](\by)\d \by.
\end{equation}
In particular, $\alpha_0$ depends on the shape of $B$. We can easily extract the behaviour of $\alpha_0$ for large frequencies: when $\bq_j$ is large we have
\begin{equation}
	\alpha_0 \approx \frac{g^2s_0}{c|\bq_j|}\frac{|B|}{|Y|}, \quad \text{as } \bq_j \to \infty.
\end{equation}
Moreover, for fixed $\bq_j$, we can extract the behaviour of $\alpha_0$ when $\bk$ approaches a singularity of $\left(\A_0^{c|\bk+\bq_j|}\right)^{-1}$. We assume that $\lambda_\star=c|\bk_\star+\bq_j|$ is a simple eigenvalue of $\A_0^{c|\bk+\bq_j|}$ with corresponding eigenvector $u_\star$. We then have
\begin{equation}
	\left(\A_0^{c|\bk+\bq_j|}\right)^{-1} = \frac{\langle u_\star, \cdot \rangle u_\star}{c|\bk+\bq_j| - \lambda_\star} + \mathcal{R}(\bk),
\end{equation}
where $\mathcal{R}$ is an operator such that $\|\mathcal{R}(\bk)\|$ is bounded for $\bk$ in a neighbourhood of $\bk_\star$. Suppose $\bk-\bk_\star = \delta\bv$ for some fixed $\bv$ with $|\bv| = 1$. Here, $\delta\ll1$ describes the distance between $\bk$ and the singular point $\bk_\star$. We then have
\begin{equation}
	\alpha_0 \approx \begin{cases}\ds \frac{g^2s_0|\bk_\star+\bq_j|}{\delta c|Y|(\bk_\star+\bq_j)\cdot\bv}\left(\int_{B} u_\star \d \by\right)^2,  &(\bk_\star+\bq_j)\cdot \bv\neq 0, \\[0.3em] \ds \frac{g^2s_0|\bk_\star+\bq_j|}{\delta^2c|Y|}\left(\int_{B} u_\star \d \by\right)^2,  &(\bk_\star+\bq_j)\cdot \bv= 0,\end{cases} \qquad \text{as } \delta \to 0.
\end{equation}
In both cases, $\alpha_0$ explodes as $\delta \to 0$, and \eqref{eq:alpha} is not valid if $c|\bk+\bq_j|$ is an eigenvalue of $\A_0^{c|\bk+\bq_j|}$.

\subsection{Three dimensions}\label{sec:lattice3d}
In three dimensions, we have the lattice sum \cite{linton2010lattice}
\begin{equation}
	S^{\omega/c}_\mathrm{helm}(\bk) = \frac{1}{\sqrt{4\pi}|Y|}\sum_{\bq \in \Lambda^*}\frac{c^2}{\omega^2-c^2|\bk+\bq|^2} + C(\omega),
\end{equation}
for some $C$ which is constant in $\bk$ and bounded for $\omega$ in compact subsets of $\R\setminus\{0\}$. If $\omega_j$ is close to $c|\bk + \bq_j|$, we have
\begin{equation}
	S^{\omega/c}_\mathrm{helm}(\bk) =
	\frac{c^2}{\sqrt{4\pi}|Y|(\omega^2 - c^2|\bk+\bq_j|^2)} +  \widetilde{C}_j(\omega),
\end{equation}
for some $\widetilde{C}_j(\omega)$ which is bounded for $\omega$ in a neighbourhood of $c|\bk + \bq_j|$. Again, for simplicity, we take $N=1$, which gives us
\begin{equation}
	\widetilde{\A}_{0}[\phi](\bx) = \A^{c|\bk+\bq_j|}_0  - \frac{g^2s_0}{\sqrt{4\pi}\alpha_0|Y|}\int_{B}  \phi(\by)\d \by.
\end{equation}
As in the two-dimensional case, we get
\begin{equation}
	\alpha_0 = \frac{g^2s_0}{\sqrt{4\pi}|Y|}\int_{B}  \left(\A_0^{c|\bk+\bq_j|}\right)^{-1}[\chi_B](\by)\d \by.
\end{equation}

\section{Conclusions}

In this paper we have continued our study of the quantum optics of a single photon interacting with a system of two level atoms begun in ~part I. In this work we considered the case that the density of atoms $\rho(\bx)$ respects the symmetries of a general $n$ dimensional lattice. We proved a general structure theorem on the spectrum of the corresponding Hamiltonian which shows it consists of two families of eigenvalues: one which accumulates at the resonant frequency $\Omega$ and one which increases to $+\infty$. Moreover, we have included a detailed study of the case that $\rho(\bx)$ is piecewise constant in what we call the \textit{high-contrast} regime: The nonzero values of $\rho$ are large while the volume of the support is small. In this class of examples, we obtain a complete asymptotic expansion of the band functions as perturbations of flat bands. The case of atom-field interaction with multiple photons was introduced in \cite{kraisler2022kinetic} and is a highly interesting direction for future work. Several other interesting questions we intend to pursue in further work include:

\begin{enumerate}
    \item \Cref{thm:bandgap_general} guarantees the existence of band gaps above $\Omega$ for positive $\rho$. What can be said, in general, about the existence of other gaps in the band structure?
    \item In \Cref{sec:nonlin}, formal arguments are used to study the bands near the Rayleigh singularities. Can one turn these into rigorous arguments?
    \item In this paper we studied spectral phenomena, but of clear interest are dynamical questions. What effective dynamics, if any, does a wave packet centered around the accumulation at $\Omega$ satisfy asymptotically?
\end{enumerate}
\appendix

\section{Free-space Green's function} \label{sec:G}
The free-space Greens's function $G^k(\bx)$  satisfies
\begin{equation}\left((-\Delta)^{1/2} - k \right)G^k(\bx) =  \delta(\bx),\end{equation}
and can be represented as
\begin{equation}\label{eq:Gint}
	G^k(\bx) = \frac{1}{(2\pi)^d}\int_{\R^d} \frac{e^{\iu \bk\cdot \bx}}{|\bk|-k}\d \bk.
\end{equation}
Based on the elementary identity 
\begin{equation} \frac{1}{|\bk|-k} = \frac{1}{|\bk| + k} + \frac{2k}{|\bk|^2 - k^2}.\end{equation}
we find that, for $k>0$, 
\begin{equation} \label{eq:Gid}
	G^k(\bx) = G^{-k}(\bx) + 2kG^k_\mathrm{helm}(\bx).
\end{equation}
where $G^k_\mathrm{helm}$ denotes the Helmholtz Green's function. We have the following integral representations of $G^{-k}$ part I:
\begin{align}
	G^{-k}(x) &= \frac{1}{\pi}\int_{0}^\infty \frac{r e^{- r}}{r^2 + k^2x^2}\d r, &   &d=1, \label{eq:Gneg1d}\\ 
	G^{-k}(\bx) &= \frac{1}{2\pi|\bx|} \int_0^\infty \frac{r^2e^{-r}}{r^2 + k^2|\bx|^2\left(1+\sqrt{1+\frac{r^2}{k^2|\bx|^2}}\right)} \d r, &  &d=2, \label{eq:Gneg2d} \\ 
	G^{-k}(\bx) &= \frac{1}{2\pi^2|\bx|^2}\int_{0}^\infty \frac{r^2e^{- r}}{r^2 + k^2|\bx|^2}\d r, &  &d=3. \label{eq:Gneg3d} 
\end{align}
In all three cases, we have have that $G^{-k}(\bx)> 0$ for $\bx\in \R^d\setminus\{0\}$. Moreover, we have 
\begin{equation} \label{eq:decay}
G^{-k}(\bx) = O\left(|\bx|^{-(d+1)}\right), \qquad |\bx| \to \infty.\end{equation}

\section{Pseudo-periodic Green's function}
The pseudo-periodic Green's function $G^{\bk,k}(\bx)$ satisfies 
\begin{equation}
\left((-\Delta)^{1/2} - k \right)G^k(\bx) =   \sum_{\bm \in \Lambda}e^{\iu \bk\cdot \bm} \delta(\bx-\bm),
\end{equation}
and can be represented as
\begin{equation}\label{eq:spatial_app}
	G^{\bk,k}(\bx) = \sum_{\bm \in \Lambda}e^{\iu \bk\cdot \bm} G^{k}(\bx-\bm),
\end{equation}
or equivalently as 
\begin{equation}\label{eq:spectral_app}
G^{\bk,\omega/c}(\bx) = \sum_{\bq\in \Lambda^*} \frac{e^{\iu(\bq+\bk)\cdot \bx}}{|\bk+\bq| - \omega/c}.\end{equation}
In light of \eqref{eq:Gid}, for $k>0$ we have 
\begin{equation}
G^{\bk,k}(\bx) = G^{\bk,-k}(\bx) + 2kG^{\bk,k}_\mathrm{helm}(\bx),
\end{equation}
where $G^{\bk,k}_\mathrm{helm}(\bx)$ is the pseudo-periodic Green's function for the Helmholtz equation given by
\begin{equation}
G^{\bk,k}_\mathrm{helm}(\bx) = \sum_{\bm \in \Lambda}e^{\iu \bk\cdot \bm} G^{\omega/c}_{\mathrm{helm}}(\bx-\bm),
\end{equation}
while $G^{\bk,-k}(\bx)$ is given by the following series
\begin{equation}
G^{\bk,-k}(\bx) = \sum_{\bm \in \Lambda}e^{\iu \bk\cdot \bm} G^{-k}(\bx-\bm).
\end{equation}
The convergence theory and numerical evaluation of $G^{\bk,k}_\mathrm{helm}$ is well-studied (see, for example, \cite{linton2010lattice} for a comprehensive review); if $k \neq |\bk+\bq|$ for all $\bq \in \Lambda^*$, this series converges uniformly for $\bx$ in compact sets of $\R^d$, $\bx\neq 0$. Moreover, from \eqref{eq:decay} we know that the series for $G^{\bk,-k}(\bx)$ is absolutely and uniformly convergent for $\bx\neq 0$, and can be numerically evaluated by truncating the series.

\subsection{Singularity of the Green's function} \label{sec:sing}
In this section, we report the behaviour of $G^{\bk,k}(\bx)$ in the case $\bx \to 0$. Throughout, we let $B\subset \R^d$ be a fixed, bounded domain and $\epsilon \ll 1$. We additionally assume $k \neq |\bk+\bq|$ for all $\bq \in \Lambda^*$.

From \eqref{eq:spatial_app} we know that 
\begin{equation}\label{eq:G0+Gk}
G^{\bk,k}(\epsilon\bx) = G^{k}(\epsilon\bx) + 
\sum_{\bm \in \Lambda\setminus \{0\} }e^{\iu \bk\cdot \bm} G^{k}(\epsilon\bx-\bm),
\end{equation}
where the latter term is bounded as $\epsilon \to 0$. In other words, $G^{\bk,k}(\epsilon\bx)$ inherits the singularity of $G^{k}(\epsilon\bx)$, and from part I we have 
\begin{equation}
	\epsilon^dG^{\bk,k}(\epsilon \bx) = \sum_{n=0}^\infty \epsilon^{n+1}A_n^{\bk,k}(\bx) +  \sum_{n=d-1}^\infty \epsilon^{n+1}\log(\epsilon)B_n^k(\bx),
\end{equation}
for functions $A_n^{\bk,k}(\bx)$ and $B_n^k(\bx)$ which can be explicitly computed. In particular, from \eqref{eq:G0+Gk} we know that $A_n^{\bk,k}$ is independent of $\bk$ for $n<d-1$, and $A_{d-1}^{\bk,k}$ has a $\bk$-dependent term given by
\begin{equation}\label{eq:Sapp}
S^{k}(\bk) = \sum_{\bm \in \Lambda\setminus\{0\} }e^{\iu \bk\cdot \bm} G^{k}(\bm),
\end{equation}
known as the \emph{zeroth order lattice sum}. With \eqref{eq:Sapp}, along with the expansions of $G^k$ from  part I, we can find the the first few functions $A_n^k$ and $B_n^k$, which are used in the current work. In one spatial dimension, we have 
	\begin{align}
		A_0^{\bk,k}(x) = -\frac{1}{\pi} \left(\log(k|x|)+ \gamma\right) + \iu + S^{k}(\bk), \quad B_{0}^k(x) =-\frac{1}{\pi}.
	\end{align}
Similarly, in two spatial dimensions we have 
\begin{equation}A_0(\bx) = \frac{1}{2\pi|\bx|}, \ B_{1}^k = -\frac{k}{2\pi}, \ A_1^{\bk,k}(\bx) = -\frac{k}{2\pi} \left(\log(k|\bx|) + \gamma \right) + \frac{\iu k}{2} + S^{k}(\bk), \ B_{2} = 0, \ A_2^k = -\frac{k}{\pi}.\end{equation}
Finally, in three spatial dimensions, we have 
\begin{equation}\label{eq:G012}
	A_0(\bx) = \frac{1}{2\pi^2|\bx|^2}, \ A_1^k(\bx) = \frac{k}{2\pi|\bx|}, \
	A_2^{\bk,k}(\bx) = \frac{k^2}{2\pi^2}\bigl(1-\gamma + \pi\iu - \log\left(k|\bx|\right) \bigr) + S^{k}(\bk), \ B_{2}^k = -\frac{k^2}{2\pi^2}. 
\end{equation}
For shifted origins of $G^{\bk,\omega/c}$, we know that $G^{\bk,\omega/c}(\epsilon \bx + \bz_i - \bz_j)$ is analytic as function of $\bx$, so we have Taylor expansions
\begin{equation}
	\epsilon^dG^{\bk,\omega/c}(\epsilon \bx + \bz_i - \bz_j) = \sum_{n=d-1}^\infty \epsilon^{n+1}A_{n,(i,j)}^{\bk,\omega/c}(\bx), \qquad i,j=1,...,N, \quad i\neq j,
\end{equation}
for functions $A_{n,(i,j)}^{\bk,\omega/c}(\bx)$; in particular, $A_{n,(i,j)}^{\bk,\omega/c}(\bx) = 0$ for $n <d-1$ and 
\begin{equation}
A_{d-1,(i,j)}^{\bk,\omega/c}(\bx) =  G^{\bk,\omega/c}(\bz_i - \bz_j).
\end{equation}

\subsection{Pseudo-periodic Green's function in one dimension}
In one spatial dimension we have an explicit expression of the pseudo-periodic Helmholtz Green's function $G^{k,\omega/c}_{\mathrm{helm}}(x)$. The fractional Green's function is now given by part I 
\begin{equation}G^{\omega/c}(x) = \frac{e^{\iu {\omega} |x|/c }}{2\pi}E_1\left(\frac{\iu \omega|x|}{c}\right) + \frac{e^{-\iu \omega |x| / c}}{2\pi}E_1\left(-\frac{\iu \omega|x|}{c}\right) + \iu e^{\iu \omega|x|/c},\end{equation}
and the Helmholtz Green's function
\begin{equation}G_{\mathrm{helm}}^{\omega/c}(x) = \frac{\iu c}{2\omega}e^{\iu \omega |x|/c}.\end{equation}
\begin{lemma}\label{lem:Sd1}
	Let $L$ be the cell length in a one-dimensional lattice in one dimension. Then the pseudo-periodic Green's function $G^{k,\omega/c}_{\mathrm{helm}}(x)$ is given by
	\begin{equation}G^{k,\omega/c}_{\mathrm{helm}}(x) = \frac{\iu c}{2\omega}\left( e^{\iu \omega|x|/c}+  \frac{e^{\iu (\omega/c+k)L}}{1-e^{\iu (\omega/c+k)L}}e^{\iu \omega x/c} + \frac{e^{\iu (\omega/c-k)L}}{1-e^{\iu (\omega/c-k)L}}e^{-\iu \omega x/c}\right).\end{equation}
\end{lemma}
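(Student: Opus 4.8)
The plan is a direct summation of the defining lattice series. First I would insert the explicit one-dimensional Helmholtz Green's function $G^{\omega/c}_{\mathrm{helm}}(y) = \frac{\iu c}{2\omega}e^{\iu\omega|y|/c}$ into the lattice sum $G^{k,\omega/c}_{\mathrm{helm}}(x) = \sum_{m\in\Lambda} e^{\iu k m}\, G^{\omega/c}_{\mathrm{helm}}(x-m)$, writing $\Lambda = L\Z$ so that $m = nL$:
\begin{equation}
	G^{k,\omega/c}_{\mathrm{helm}}(x) = \frac{\iu c}{2\omega}\sum_{n\in\Z} e^{\iu k n L}\, e^{\iu\omega|x-nL|/c}.
\end{equation}
It is enough to prove the identity for $x$ in the fundamental cell, since both sides are $k$-pseudo-periodic and the general case then follows. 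On the cell one has $|x-nL| = nL-x$ for $n\ge1$ and $|x-nL| = x-nL$ for $n\le0$, so I would split the sum accordingly: the $n=0$ term contributes the ``self'' summand $\frac{\iu c}{2\omega}e^{\iu\omega|x|/c}$, while the two tails over $n\ge1$ and $n\le-1$ become geometric series in $e^{\iu L(\omega/c+k)}$ and $e^{\iu L(\omega/c-k)}$ respectively.

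These tails are only conditionally convergent at real $\omega$, so I would first perform the summation under the outgoing radiation / limiting-absorption convention $\Im\omega>0$, for which each is an absolutely convergent geometric series, and then pass to real $\omega$ by analytic continuation. Each tail then collapses to a single term of the required shape, producing denominators $1 - e^{\iu L(\omega/c\pm k)}$; these vanish precisely at the Rayleigh anomalies $\omega/c = |k+q|$, $q\in\Lambda^*$, which is exactly the (implicit) hypothesis under which the stated formula is meaningful, in line with the convergence of $G^{k,\omega/c}_{\mathrm{helm}}$ recalled in the appendix. Collecting the $n=0$ term together with the two summed tails, and restoring general $x\in\R$ by pseudo-periodicity, gives the claimed closed form.

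The computation is otherwise routine bookkeeping; the one genuine subtlety is the conditional convergence of the lattice sum at real frequency, which is what forces the limiting-absorption step above. If one prefers to avoid summation altogether, an alternative is to verify directly that the proposed right-hand side is $k$-pseudo-periodic, solves $\left(-\partial_x^2 - \omega^2/c^2\right)u = 0$ on the open cell, and carries the correct jump discontinuity in its derivative across the lattice points, and then to invoke uniqueness of the outgoing pseudo-periodic Helmholtz Green's function. Throughout, I would keep careful track of the sign conventions for $k$ and of the choice of fundamental cell, since these govern which exponential is paired with which geometric factor in the final expression.
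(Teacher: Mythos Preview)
Your proposal is correct and is substantively the same computation as the paper's: both reduce the lattice sum to two geometric series in $e^{\iu L(\omega/c\pm k)}$, with the $m=0$ term supplying the free Green's function $\frac{\iu c}{2\omega}e^{\iu\omega|x|/c}$.

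The only organizational difference is that the paper first observes that $u(x)=G^{k,\omega/c}_{\mathrm{helm}}(x)-G^{\omega/c}_{\mathrm{helm}}(x)$ solves the homogeneous Helmholtz equation, writes $u=S_1 e^{\iu\omega x/c}+S_2 e^{-\iu\omega x/c}$, and then determines $S_1,S_2$ by summing the series only at $x=0$ (via $u(0)$ and $u'(0)$); you instead carry the $x$-dependence through the sum directly. The geometric series evaluated are literally the same. Your ``alternative'' route at the end---verify pseudo-periodicity, the homogeneous equation on the open cell, and the derivative jump---is in fact the paper's organizing principle. One point where you are more careful than the paper: you invoke limiting absorption ($\Im\omega>0$) to justify summing the conditionally convergent tails before passing to real $\omega$, whereas the paper simply writes down the geometric-series values. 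Your closing caution about sign conventions and which exponential pairs with which geometric factor is well placed; this is exactly where the bookkeeping is delicate.
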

\begin{proof}
	Defining $u(x) = G^{k,\omega/c}_{\mathrm{helm}}(x) - G^{\omega/c}_{\mathrm{helm}}(x)$, we know that $u'' + \frac{\omega^2}{c^2}u =0$. Therefore we can write
	\begin{equation}G^{k,\omega/c}_{\mathrm{helm}}(x) = G^{\omega/c}_{\mathrm{helm}}(x) +  S_1(k) e^{\iu \omega x/c} + S_2(k)e^{-\iu \omega x/c},\end{equation}
	where $S_1$ and $S_2$ are constant in $x$. We have that 
	\begin{equation}S_1 + S_2 = u(0), \quad S_1 - S_2 = \frac{u'(0)c}{\iu \omega}.\end{equation}
	We have
	\begin{equation}u(0) =  \frac{\iu c}{2\omega}\sum_{m\in \Lambda, m > 0} e^{\iu (\omega/c+k) m} + e^{\iu (\omega/c-k) m}, \quad \frac{u'(0) c}{\iu \omega} =   \frac{\iu c}{2\omega}\sum_{m\in \Lambda, m > 0} e^{\iu (\omega/c+k) m} - e^{\iu (\omega/c-k) m}\end{equation}
	These sums are given by the geometric series:
	\begin{equation}u(0) =  \frac{\iu c}{2\omega}\sum_{m\in \Lambda, m > 0} e^{\iu (\omega/c+k) m} + e^{\iu (\omega/c-k) m} = \frac{\iu c}{2\omega}\left(\frac{e^{\iu (\omega/c+k)L}}{1-e^{\iu (\omega/c+k)L}} + \frac{e^{\iu (\omega/c-k)L}}{1-e^{\iu (\omega/c-k)L}}\right),\end{equation}
	and
	\begin{equation}\frac{u'(0) c}{\iu \omega} =   \frac{\iu c}{2\omega}\sum_{m\in \Lambda, m > 0} e^{\iu (\omega/c+k) m} - e^{\iu (\omega/c-k) m} =  \frac{\iu c}{2\omega}\left(\frac{e^{\iu (\omega/c+k)L}}{1-e^{\iu (\omega/c+k)L}} - \frac{e^{\iu (\omega/c-k)L}}{1-e^{\iu (\omega/c-k)L}}\right) \end{equation}
	Therefore
	\begin{equation}S_1 = \frac{\iu c}{2\omega}\frac{e^{\iu (\omega/c+k)L}}{1-e^{\iu (\omega/c+k)L}}, \quad S_2 =  \frac{\iu c}{2\omega}\frac{e^{\iu (\omega/c-k)L}}{1-e^{\iu (\omega/c-k)L}},\end{equation}
	which proves the claim.
\end{proof}

Again have a decomposition of the lattice sum
\begin{align} 
	S^{\omega/c}(k) &= \sum_{m \in \Lambda\setminus\{0\} }e^{\iu km} G^{\omega/c}(m) \\
	&= \frac{2\omega}{c} S_{\mathrm{helm}}^{\omega/c}(k) + S^{-\omega/c}(k), \label{eq:Sdecomp}
\end{align}
where $S_{\mathrm{helm}}^{\omega/c}(k)$ is the lattice sum associated with the Helmholtz Green's function and $S^{-\omega/c}(k)$ corresponds to the absolutely convergent sum of the remaining $O(m^{-2})$-term. From \Cref{lem:Sd1} we have
\begin{equation}\label{eq:Sh}
	 S_{\mathrm{helm}}^{\omega/c}(k) = \frac{\iu c}{2\omega}\left(\frac{e^{\iu (\omega/c+p)L}}{1-e^{\iu (\omega/c+p)L}} + \frac{e^{\iu (\omega/c-k)L}}{1-e^{\iu (\omega/c-k)L}} \right) = \frac{\iu c}{2\omega}\frac{\cos(kL)-e^{\iu \omega L/c}}{\cos\left(\frac{\omega L}{c}\right) - \cos(kL)}.\end{equation}

\subsection{Lower bound of the band structure} \label{sec:propproof}
We now present the proof of \Cref{prop:lower}, which relies on the spectral representation \eqref{eq:spectral_app} of the Green's function $G^{\bk,\omega/c}$.

\begin{proof}[Proof of \Cref{prop:lower}]
For $\omega <0$, we can invert the left-hand side of \eqref{eq:EVproblem} and we have a Lippmann-Schwinger representation of the form
	\begin{equation}\psi(\bx) = -\frac{g^2}{c(\omega-\Omega)}\int_{Y} G^{\bk,\omega/c}(\bx-\by) \rho(\by) \psi(\by) \d \by.\end{equation}
	If we integrate over $Y$ and estimate the integral, we therefore have
	\begin{equation}\label{eq:est}\int_{Y} |\psi| \d \bx \leq -\frac{g^2}{c(\omega-\Omega)}\left(\int_{Y}\left|G^{\bk,\omega/c}(\bx)\right|\d \bx\right)\left(\int_{Y}\rho|\psi| \d \bx\right).\end{equation}
 From \eqref{eq:spectral} we have that
    \begin{equation}\left|G^{\bk,\omega/c}(\bx)\right| = \left|\sum_{\bq\in \Lambda^*} \frac{e^{\iu(\bq+\bk)\cdot \bx}}{|\bk+\bq| - \omega/c}\right| = \sum_{\bq\in \Lambda^*} \frac{e^{\iu\bq\cdot \bx}}{|\bk+\bq| - \omega/c},
    \end{equation}
    which follows from the fact that the latter series is real and positive for $\omega <0$. We therefore have that
    \begin{equation}
    \int_{Y}\left|G^{\bk,\omega/c}(\bx)\right|\d \bx = \frac{1}{|\bk| - \omega/c} \leq -\frac{c}{\omega}.
    \end{equation}
    We then have from \eqref{eq:est} that 
	\begin{equation}\int_{Y} |\psi| \d \bx \leq \frac{g^2\|\rho\|_{\infty}}{\omega(\omega-\Omega)}\int_{Y}|\psi| \d \bx,\end{equation}
 and since $\psi\ne0$, we find that $g^2\|\rho\|_{\infty} \geq \omega(\omega-\Omega)$. From this inequality, we find that 
\begin{equation}
\omega_1^-(\bk) \geq \frac{\Omega - \sqrt{\Omega^2 + 4g^2\|\rho\|_{\infty} }}{2},
\end{equation} 
which finishes the proof. 
\end{proof}
\bibliographystyle{abbrv}
\bibliography{one-photon}{}

\end{document}